\documentclass[a4paper,12pt]{article}

\usepackage{amsmath,amssymb}
\usepackage{amscd}
\usepackage{mathrsfs}
\usepackage{amsthm}
\usepackage{color}
\usepackage{ascmac}

\usepackage{verbatim}

\usepackage{mathtools}
\usepackage{enumerate}

\usepackage[backref=page, colorlinks=true, linkcolor=blue, citecolor=blue, urlcolor=blue]{hyperref}

\theoremstyle{plain}

\newtheorem{theorem}{\bf Theorem}[section]
\newtheorem{lemma}[theorem]{\bf Lemma}
\newtheorem{proposition}[theorem]{\bf Proposition}
\newtheorem{corollary}[theorem]{\bf Corollary}

\theoremstyle{definition}
\newtheorem{definition}[theorem]{\bf Definition}

\newtheorem{remark}[theorem]{\bf Remark}

  \makeatletter
  \newcommand{\subsubsubsection}{\@startsection{paragraph}{4}{\z@}%
    {1.0\Cvs \@plus.5\Cdp \@minus.2\Cdp}%
    {.1\Cvs \@plus.3\Cdp}%
    {\reset@font\sffamily\normalsize}
  }
  \makeatother
\makeatletter

\@addtoreset{equation}{section}
\makeatother

\title{Classification of abstract Bose field models}

\author{Yasumichi Matsuzawa\thanks{Department of Mathematics, Faculty of Education, Shinshu University, 6-Ro, Nishi-nagano, Nagano 380-8544, Japan, e-mail: myasu@shinshu-u.ac.jp} }
\date{\today}

\begin{document}
\maketitle

\begin{abstract}
	We classify a class of abstract Bose field models in quantum field theory up to unitary equivalence. 
	The class includes abstract free Bose field models, abstract van Hove--Miyatake models, and infrared-renormalized van Hove--Miyatake models. 
	Moreover, as an application of our classification, we classify quadratic interaction models. 
	In particular, our results answer a question of A. Arai concerning the classification of infrared-renormalized van Hove--Miyatake models.
\end{abstract}

\section{Introduction and main results}
We investigate abstract Bose field models in quantum field theory within the Hamiltonian formalism.
An abstract Bose field model is defined by a pair consisting of an irreducible Weyl representation of the canonical commutation relations and a self-adjoint operator, both of which are realized on the same complex Hilbert space.
The Weyl representation describes the time-zero fields of the model, and the self-adjoint operator is the Hamiltonian of the model.
We focus on a class of abstract Bose field models, including abstract free Bose field models, abstract van Hove--Miyatake models, infrared-renormalized van Hove--Miyatake models, and others discussed in \cite[Chapter 10]{MR4292535}.
The main purpose of this paper is to classify the models in this class up to unitary equivalence.
Since unitarily equivalent models are regarded as physically identical, such a classification reveals the essential features of each model.
We establish necessary and sufficient conditions for two models in this class to be unitarily equivalent.
Moreover, as an application of our classification, we classify quadratic interaction models. 
It turns out that the set of quadratic interaction models admits a complete invariant.
Our results, in particular, answer a question of A. Arai concerning the classification of infrared-renormalized van Hove--Miyatake models \cite[Remark 10.19]{MR4292535}.
\newline

\noindent
\textbf{\textit{Abstract Bose field models.}}
To state our results precisely, we first introduce the relevant notation.
We follow the notation and conventions of \cite{MR4292535,MR4812858}.
Let $\mathscr{F}$ be a separable complex Hilbert space, and let $\mathscr{W}$ be a real inner product space.
If a set $\{\phi(f),\pi(f)\mid f\in\mathscr{W}\}$ of self-adjoint operators acting in $\mathscr{F}$ satisfies 
\begin{align*}
	\mathrm{e}^{\mathrm{i}\phi(sf+tg)}&=\mathrm{e}^{\mathrm{i}s\phi(f)}\mathrm{e}^{\mathrm{i}t\phi(g)},
	\qquad \mathrm{e}^{\mathrm{i}\pi(sf+tg)}=\mathrm{e}^{\mathrm{i}s\pi(f)}\mathrm{e}^{\mathrm{i}t\pi(g)},\\
	\mathrm{e}^{\mathrm{i}\phi(f)}\mathrm{e}^{\mathrm{i}\pi(g)}
	&=\mathrm{e}^{-\mathrm{i}\langle f,g\rangle}\mathrm{e}^{\mathrm{i}\pi(g)}\mathrm{e}^{\mathrm{i}\phi(f)},
	\qquad\forall s,t\in\mathbb{R},\ f,g\in\mathscr{W},
\end{align*}
then the pair
\[
\rho:=\{\mathscr{F},\{\phi(f),\pi(f)\mid f\in\mathscr{W}\}\}
\]
is called a \textit{Weyl representation of the canonical commutation relations} over $\mathscr{W}$.
If, in addition, the only bounded operators on $\mathscr{F}$ that commute with all elements of $\{\mathrm{e}^{\mathrm{i}\phi(f)}, \mathrm{e}^{\mathrm{i}\pi(f)}\mid f\in\mathscr{W}\}$ are scalar multiples of the identity, the Weyl representation $\rho$ is called \textit{irreducible}.
For two Weyl representations
\[
\rho_1=\{\mathscr{F}_1,\{\phi_1(f),\pi_1(f)\mid f\in\mathscr{W}\}\}\quad\text{and}\quad \rho_2=\{\mathscr{F}_2,\{\phi_2(f),\pi_2(f)\mid f\in\mathscr{W}\}\}
\]
 of the canonical commutation relations over $\mathscr{W}$, we say that $\rho_1$ is \textit{equivalent} to $\rho_2$ if there exists a unitary operator $U:\mathscr{F}_1\to\mathscr{F}_2$ such that
 \begin{equation}\label{def of equiv of reprs}
 	U\phi_1(f)U^*=\phi_2(f),\qquad U\pi_1(f)U^*=\pi_2(f),\qquad \forall f\in\mathscr{W}.
 \end{equation}

An \textit{abstract Bose field model} is defined as a triple
\[
\mathbb{M}=\{\mathscr{F},H,\{\phi(f),\pi(f)\mid f\in\mathscr{W}\}\},
\]
where $\{\mathscr{F},\{\phi(f),\pi(f)\mid f\in\mathscr{W}\}\}$ is an irreducible Weyl representation of the canonical commutation relations over $\mathscr{W}$ and $H$ is a self-adjoint operator acting in $\mathscr{F}$.
For two abstract Bose field models 
\[
\mathbb{M}_1=\{\mathscr{F}_1,H_1,\{\phi_1(f),\pi_1(f)\mid f\in\mathscr{W}\}\}
\ \ \text{and}\ \ \mathbb{M}_2=\{\mathscr{F}_2,H_2,\{\phi_2(f),\pi_2(f)\mid f\in\mathscr{W}\}\},
\]
we say that $\mathbb{M}_1$ is \textit{equivalent} to $\mathbb{M}_2$ if there exist a unitary operator $U:\mathscr{F}_1\to\mathscr{F}_2$ and a real number $E\in\mathbb{R}$ such that
\begin{equation}
	UH_{1}U^*=H_{2}+E,
\end{equation}
and that \eqref{def of equiv of reprs} holds.
\newline

\noindent
\textbf{\textit{Boson Fock spaces.}}
To define our model concretely, we introduce Boson Fock spaces and the operators that act in them.
Let $\mathscr{H}$ be a separable complex Hilbert space.
Its inner product $\langle f,g\rangle$ is anti-linear in $f\in\mathscr{H}$ and linear in $g\in\mathscr{H}.$
The \textit{Boson Fock space} over $\mathscr{H}$ is defined by
\[
\mathscr{F}_\mathrm{b}(\mathscr{H}):=\bigoplus_{n=0}^\infty\otimes_\mathrm{s}^n\mathscr{H},
\]
where $\otimes_\mathrm{s}^n\mathscr{H}$ is the $n$-fold symmetric tensor product Hilbert space of $\mathscr{H}$ with $\otimes_\mathrm{s}^0\mathscr{H}:=\mathbb{C}$ and $\otimes_\mathrm{s}^1\mathscr{H}:=\mathscr{H}.$
Each vector $\Psi\in\mathscr{F}_\mathrm{b}(\mathscr{H})$ can be expressed as $\Psi=\{\Psi^{(n)}\}_{n=0}^\infty,$ where $\Psi^{(n)}\in\otimes_\mathrm{s}^n\mathscr{H}$ for all $n\geq0.$ 
A subspace
\[
\mathscr{F}_\mathrm{b,0}(\mathscr{H}):=\left\{\Psi=\{\Psi^{(n)}\}_{n=0}^\infty\in\mathscr{F}_\mathrm{b}(\mathscr{H})\mid \exists n_0\in\mathbb{N},\ \forall n\geq n_0,\ \Psi^{(n)}=0\right\}
\]
is dense in $\mathscr{F}_\mathrm{b}(\mathscr{H}).$
A vector 
\[
\Omega_\mathscr{H}:=\{1,0,0,\dots\}\in\mathscr{F}_\mathrm{b}(\mathscr{H})
\]
is called the \textit{Fock vacuum}.

The \textit{creation operator} $A^\dagger(f)$ with test vector $f\in\mathscr{H}$ is defined by
\begin{align*}
	\operatorname{dom}(A^\dagger(f))&:=\left\{\Psi=\{\Psi^{(n)}\}_{n=0}^\infty\in\mathscr{F}_\mathrm{b}(\mathscr{H})\,\left| \,\sum_{n=1}^\infty\|\sqrt{n}S_n(f\otimes\Psi^{(n-1)})\|_{\otimes_\mathrm{s}^n\mathscr{H}}^2<\infty\right.\right\},\\
	(A^\dagger(f)\Psi)^{(0)}&:=0,\\
	(A^\dagger(f)\Psi)^{(n)}&:=\sqrt{n}S_n(f\otimes\Psi^{(n-1)}),\qquad n\in\mathbb{N},\ \Psi\in\operatorname{dom}(A^\dagger(f)),
\end{align*}
where $S_n$ is the symmetrization operator from $\otimes^n\mathscr{H}$ onto $\otimes_\mathrm{s}^n\mathscr{H}$ and $\operatorname{dom}(T)$ denotes the domain of an operator $T.$
Its adjoint $A(f):=A^\dagger(f)^*$ is called the \textit{annihilation operator}.
We have $A(f)^*=A^\dagger(f).$
Since $A(f)+A(f)^*$ is essentially self-adjoint,
we define a self-adjoint operator $\Phi_\mathrm{S}(f)$ by
\[
\Phi_\mathrm{S}(f) :=\frac{1}{\sqrt{2}}\overline{A(f)+A(f)^*},
\]
where $\overline{T}$ denotes the closure of an operator $T.$
We call $\Phi_\mathrm{S}(f)$ the \textit{Segal field operator} with test vector $f.$
\newline

\noindent
\textbf{\textit{Weyl representations.}}
Let $C$ be a conjugation on $\mathscr{H},$ that is, $C$ is an anti-linear, norm-preserving map on $\mathscr{H}$ satisfying $C^2 = 1$.
Define
\[
\mathscr{H}_C:=\{f\in\mathscr{H}\mid Cf=f\}.
\]
Then $\mathscr{H}_C$ becomes a real Hilbert space by restricting the inner product on $\mathscr{H}$ to $\mathscr{H}_C.$
Let $\mathscr{V}$ be a (not necessarily dense) real subspace of $\mathscr{H}_C.$
\begin{screen}
	\begin{center}
	Throughout the paper, we fix the triple $(\mathscr{H},C,\mathscr{V})$.
	\end{center}
\end{screen}
The symbol $\mathcal{S}_{C,\mathscr{V}}(\mathscr{H})$ denotes the set of all injective self-adjoint operators $T$ acting in $\mathscr{H}$ that satisfy the following two conditions:
\begin{itemize}
	\item $CT\subset TC,$
	\item $\mathscr{V}\subset \operatorname{dom}(T)\cap \operatorname{dom}(T^{-1}),$ and both $T\mathscr{V}$ and $T^{-1}\mathscr{V}$ are dense in $\mathscr{H}_C.$
\end{itemize}
For each $T\in\mathcal{S}_{C,\mathscr{V}}(\mathscr{H})$ and (not necessarily continuous) real-linear functionals $q,p:\mathscr{V}\to\mathbb{R},$
we set
\[
\phi_{T,q}(f):=\Phi_{\mathrm{S}}(T^{-1}f)+q(f),\qquad \pi_{T,p}(f):=\Phi_{\mathrm{S}}(\mathrm{i}Tf)+p(f),\qquad f\in\mathscr{V}
\]
and
\[
\rho_{T,q,p}:=\left\{\mathscr{F}_{\mathrm{b}}(\mathscr{H}),\{\phi_{T,q}(f),\pi_{T,p}(f)\mid f\in\mathscr{V}\}\right\}.
\]
Then, $\rho_{T,q,p}$ is an irreducible Weyl representation of the canonical commutation relations over $\mathscr{V}$.
This can be proved by an argument similar to that in \cite[Lemma 4.3]{MR3513945} or \cite[Theorem 5.47]{MR4812858}.

To establish a necessary and sufficient condition for the equivalence of two Weyl representations of the form $\rho_{T,q,p}$, we introduce a fundamental concept that transfers the ranges of $T_1^{\pm1}$ to those of $T_2^{\pm1}$. 

\begin{definition}
	Let $T_1,T_2\in\mathcal{S}_{C,\mathscr{V}}(\mathscr{H}).$
	A \textit{transfer pair} $(J_+,J_-)$ from $T_1$ to $T_2$ with respect to $(C,\mathscr{V})$ is a pair of bounded operators $J_+$ and $J_-$ on $\mathscr{H}$ such that
	\begin{equation*}
		J_+T_1f=T_2f,\qquad J_-T_1^{-1}f=T_2^{-1}f,\qquad\forall f\in\mathscr{V}.
	\end{equation*}
\end{definition}

\begin{theorem}\label{main thm}
	Let $T_1,T_2\in\mathcal{S}_{C,\mathscr{V}}(\mathscr{H}),$ and let $q_1,p_1,q_2,p_2:\mathscr{V}\to\mathbb{R}$ be real-linear functionals.
	Then, $\rho_{T_1,q_1,p_1}$ is equivalent to $\rho_{T_2,q_2,p_2}$ if and only if the following two conditions hold:
	\begin{itemize}
		\item there exists a transfer pair $(J_+,J_-)$ from $T_1$ to $T_2$ with respect to $(C,\mathscr{V}),$ and the difference $J_+-J_-$ is Hilbert--Schmidt,
		\item there exist $h_q,h_p\in\mathscr{H}_C$ such that
		\begin{equation}\label{q,p equivalence rel}
			q_2(f)=q_1(f)+\langle h_q,T_1^{-1}f\rangle,\qquad 	p_2(f)=p_1(f)+\langle h_p,T_1f\rangle,\qquad\forall f\in\mathscr{V}.
		\end{equation}
	\end{itemize}
In this case, we have
\begin{equation}\label{Uphi(f)U^*_and_Uphi(if)U^}
U\Phi_{\mathrm{S}}(f)U^*=\Phi_{\mathrm{S}}\left(J_-f\right)+\langle h_q,f\rangle,\qquad U\Phi_{\mathrm{S}}(\mathrm{i}f)U^*=\Phi_{\mathrm{S}}\left(\mathrm{i}J_+f\right)+\langle h_p,f\rangle,\qquad\forall f\in\mathscr{H}_C,
\end{equation}
where $U$ is a unitary operator on $\mathscr{F}_{\mathrm{b}}(\mathscr{H})$ satisfying
\begin{equation}\label{def of equiv of reprs rhoT1q1pq and rhoT2q2p2}
	U\phi_{T_1,q_1}(f)U^*=\phi_{T_2,q_2}(f),\qquad U\pi_{T_1,p_1}(f)U^*=\pi_{T_2,p_2}(f),\qquad \forall f\in\mathscr{V}.
\end{equation}
\end{theorem}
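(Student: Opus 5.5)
We reduce the theorem to the Shale criterion for implementability of symplectic (Bogoliubov) transformations on $\mathscr{F}_{\mathrm{b}}(\mathscr{H})$, together with the standard fact that Weyl translations implement affine shifts. Write $\mathscr{H}=\mathscr{H}_C\oplus\mathrm{i}\mathscr{H}_C$. The map $\mathscr{H}_C\ni f\mapsto\Phi_{\mathrm{S}}(f)$ and $f\mapsto\Phi_{\mathrm{S}}(\mathrm{i}f)$ gives the Fock representation of the CCR over the real symplectic space $\mathscr{H}$ with form $\operatorname{Im}\langle\cdot,\cdot\rangle$. Both implications run through this picture.

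\emph{Sufficiency.} Given $(J_+,J_-)$, we first show that $J_\pm$ restrict to maps $\mathscr{H}_C\to\mathscr{H}_C$. Indeed, $T_i$ commute with $C$, so $T_i\mathscr{V},T_i^{-1}\mathscr{V}\subset\mathscr{H}_C$, and these sets are dense in $\mathscr{H}_C$; continuity then gives $J_\pm\mathscr{H}_C\subset\mathscr{H}_C$. Next we check the duality relation $\langle J_-f,J_+g\rangle=\langle f,g\rangle$ for $f,g\in\mathscr{H}_C$. For $f=T_1^{-1}u$ and $g=T_1v$ with $u,v\in\mathscr{V}$, both sides equal $\langle u,v\rangle$ by self-adjointness, and density extends the identity. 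The same argument with roles swapped, using a transfer pair from $T_2$ to $T_1$ (which exists by symmetry once $J_\pm$ are shown invertible), gives $J_-^*J_+=1=J_+J_-^*$ on $\mathscr{H}_C$, so $J_-^{*}=J_+^{-1}$. We will need that symmetry: invertibility of $J_\pm$ follows from $J_-^*J_+=1$ together with the density of $T_2^{\pm1}\mathscr{V}$, which makes the ranges dense and closed.

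Hence the real-linear map $S(f+\mathrm{i}g):=J_-f+\mathrm{i}J_+g$ on $\mathscr{H}$ is symplectic. Its antilinear part is $\tfrac12(J_--J_+)$ (composed with $C$), which is Hilbert--Schmidt by hypothesis. The Shale theorem then yields a unitary $U_0$ with $U_0\Phi_{\mathrm{S}}(f)U_0^*=\Phi_{\mathrm{S}}(J_-f)$ and $U_0\Phi_{\mathrm{S}}(\mathrm{i}f)U_0^*=\Phi_{\mathrm{S}}(\mathrm{i}J_+f)$. We compose $U_0$ with a Weyl operator $W=\mathrm{e}^{\mathrm{i}\Phi_{\mathrm{S}}(k)}$ for a suitable $k=\alpha h_p+\mathrm{i}\beta h_q$, with signs and factors of $\sqrt2$ fixed by the CCR, so that the shifts $\langle h_q,f\rangle$ and $\langle h_p,f\rangle$ appear. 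Setting $U:=WU_0$ and substituting $f=T_1^{-1}v$, respectively $\mathrm{i}T_1v$, gives \eqref{def of equiv of reprs rhoT1q1pq and rhoT2q2p2} by \eqref{q,p equivalence rel}.

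\emph{Necessity.} Given $U$, define real-linear maps on the dense sets by $J_-^0:T_1^{-1}v\mapsto T_2^{-1}v$ and $J_+^0:T_1v\mapsto T_2v$. The key difficulty is to show these are bounded and that a shift vector exists. We use the vacuum state $\omega(\cdot)=\langle U\Omega,\cdot\,U\Omega\rangle$, which is a state on the Weyl algebra of $\rho_{T_2}$. Its expectation of $\mathrm{e}^{\mathrm{i}t\Phi_{\mathrm{S}}(T_1^{-1}v)}$ under $\rho_{T_1}$ is the Gaussian $\mathrm{e}^{-t^2\|T_1^{-1}v\|^2/4}$. Transported, the state $U\Omega$ in $\mathscr{F}_{\mathrm{b}}(\mathscr{H})$ has characteristic function $\mathrm{e}^{-t^2\|T_1^{-1}v\|^2/4+\mathrm{i}t(q_1-q_2)(v)}$ for $\Phi_{\mathrm{S}}(T_2^{-1}v)$. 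Because $U\Omega$ is a vector in the Fock space (hence normal relative to the Fock state), the standard quasi-equivalence theory of quasifree states (Araki--Yamagami/Shale) gives three consequences. First, the covariance forms $\|T_1^{\pm1}v\|$ and $\|T_2^{\pm1}v\|$ are equivalent norms, so $J^0_\pm$ extend to bounded operators $J_\pm$, which yields the transfer pair. Second, the Hilbert--Schmidt condition on $J_+-J_-$ holds. Third, $q_1-q_2$ and $p_1-p_2$ are continuous for the norms $\|T_1^{-1}\cdot\|$ and $\|T_1\cdot\|$; by density of $T_1^{\mp1}\mathscr{V}$ in $\mathscr{H}_C$ and the Riesz representation this produces $h_q,h_p$. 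A cleaner route we would try first is to show boundedness directly: the map $U_0:=U$ sends Fock fields to fields $\Phi_{\mathrm{S}}(J^0_-f)+\text{shift}$, and regularity of the vacuum expectation $\langle U\Omega,\Phi_{\mathrm{S}}(T_2^{-1}v)^2U\Omega\rangle<\infty$ bounds $\|T_2^{-1}v\|$ by $\|T_1^{-1}v\|$, using $U\Omega\in\operatorname{dom}$ and a Fock-space estimate. This bound is the main obstacle; after it, Shale's necessity gives the Hilbert--Schmidt property. Finally, \eqref{Uphi(f)U^*_and_Uphi(if)U^} follows from \eqref{def of equiv of reprs rhoT1q1pq and rhoT2q2p2} on the dense sets $T_1^{\mp1}\mathscr{V}$ and then by strong resolvent continuity of $f\mapsto\Phi_{\mathrm{S}}(f)$.
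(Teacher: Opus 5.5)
Your sufficiency direction is essentially the paper's. From $J_-^*J_+=1$ you get $J_+^{-1}=J_-^*$. The map $f+\mathrm{i}g\mapsto J_-f+\mathrm{i}J_+g$ is then a symplectic bijection whose antilinear part $\tfrac12(J_--J_+)C$ is Hilbert--Schmidt, so it is implementable, and a Weyl translation supplies the shifts. Your ``bounded below with dense range'' invertibility argument is fine; the detour through a transfer pair from $T_2$ to $T_1$ is unnecessary.

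The gap is in necessity, exactly at the step you call the main obstacle. You never show that $T_1^{-1}v\mapsto T_2^{-1}v$ and $T_1v\mapsto T_2v$ are bounded, nor that $q_2-q_1$ and $p_2-p_1$ are continuous for $\|T_1^{-1}\cdot\|$ and $\|T_1\cdot\|$.

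\begin{itemize}
\item Appealing to Araki--Yamagami does not close this. That criterion is formulated for quasi-free states with vanishing one-point function. Your two vector states differ both in covariance and by an a priori arbitrary, possibly discontinuous, linear shift. Reducing to the centered case would first require implementing that shift, i.e.\ exactly the continuity to be proved, so the ``standard theory'' is a black box covering the whole content of this direction.
\item The second-moment route also fails as stated. We have $\|\Phi_{\mathrm{S}}(T_2^{-1}v)U\Omega_{\mathscr{H}}\|^2=\tfrac12\|T_1^{-1}v\|^2+(q_1-q_2)(v)^2$, which contains an uncontrolled term.
\item Even the variance identity $\operatorname{Var}_{U\Omega_{\mathscr{H}}}(\Phi_{\mathrm{S}}(T_2^{-1}v))=\tfrac12\|T_1^{-1}v\|^2$ only bounds $\|T_2^{-1}v\|$ given a uniform lower bound $\operatorname{Var}_{\Psi}(\Phi_{\mathrm{S}}(h))\ge\kappa\|h\|^2$ for the fixed vector $\Psi=U\Omega_{\mathscr{H}}$. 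Working with $U^*\Omega_{\mathscr{H}}$ instead, you would need a uniform upper bound, which requires $U^*\Omega_{\mathscr{H}}\in\operatorname{dom}(\mathrm{d}\Gamma_{\mathrm{b}}(1)^{1/2})$. Neither is available a priori.
\end{itemize}

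The missing idea is to take the modulus of the characteristic function, which kills the unknown shift. Combine this with the strong continuity of $h\mapsto\mathrm{e}^{\mathrm{i}\Phi_{\mathrm{S}}(h)}\Psi$ for fixed $\Psi$ (Lemma \ref{lem;conti_of_field_op}). You essentially wrote the needed identity; in the direction the paper uses it reads
\[
\mathrm{e}^{-\|T_2^{-1}g\|^2/4}=\bigl|\langle U^*\Omega_{\mathscr{H}},\mathrm{e}^{\mathrm{i}\Phi_{\mathrm{S}}(T_1^{-1}g)}U^*\Omega_{\mathscr{H}}\rangle\bigr|,\qquad g\in\mathscr{V}.
\]
So $T_1^{-1}g_n\to0$ forces $T_2^{-1}g_n\to0$. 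This is boundedness of the real-linear map $T_1^{-1}g\mapsto T_2^{-1}g$ on the dense set $T_1^{-1}\mathscr{V}$. The paper phrases it via Cauchy sequences and proves continuity of the extension the same way.

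For the shift, keep the phase and insert $t$, using the first step to control the Gaussian factor. This gives $\mathrm{e}^{\mathrm{i}t(q_2-q_1)(g_m-g_n)}\to1$ for every $t\in\mathbb{R}$ whenever $T_1^{-1}g_n$ converges. Lemma \ref{lem;cauchy} then makes $(q_2-q_1)(g_n)$ Cauchy. Continuity of the extended functional follows from the analogous estimate plus the integral (resolvent) argument, and the Riesz theorem gives $h_q$; likewise for $J_+$ and $h_p$.

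Only then can you remove the shifts with a Weyl operator and invoke Shale's necessity to get that $J_+-J_-$ is Hilbert--Schmidt. The paper instead uses the vacuum criterion for $X_-X_+^{-1}$.
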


\begin{remark} 
	Theorem \ref{main thm} generalizes \cite[Theorem 5.1]{MR3513945}.
	Indeed, \cite[Theorem 5.1]{MR3513945} follows from Theorem \ref{main thm} by setting $J_+:=\overline{T_2T_1^{-1}}$ and $J_-:=\overline{T_2^{-1}T_1}$.
\end{remark}

\noindent
\textbf{\textit{Hamiltonians.}}
Let $S$ be a non-negative self-adjoint operator acting in $\mathscr{H}.$
The \textit{second quantization operator} $\mathrm{d}\Gamma_\mathrm{b}(S)$ of $S$ is a non-negative self-adjoint operator acting in $\mathscr{F}_{\mathrm{b}}(\mathscr{H}),$ which is defined by
\[
\mathrm{d}\Gamma_\mathrm{b}(S):=\bigoplus_{n=0}^\infty S^{(n)},
\]
where $S^{(0)}:=0$ and
\[
S^{(n)}:=\sum_{j=1}^n\underbrace{1\otimes\cdots\otimes1}_{j-1}\otimes \overset{j\text{th}}{S}\otimes\underbrace{1\otimes\cdots\otimes1}_{n-j},\qquad n\in\mathbb{N}.
\]

Let $g\in\operatorname{dom}(S^{-1/2})$ be given.
We then define the Hamiltonian by
\[
H_S(g)
:=\mathrm{d}\Gamma_\mathrm{b}(S)+\Phi_{\mathrm{S}}(g).
\]
$H_S(g)$ is called a \textit{van Hove Hamiltonian} or a \textit{van Hove-Miyatake Hamiltonian}.
For a detailed study of the van Hove Hamiltonians, see \cite[Section 10.9]{MR4292535}, \cite[Chapter 13]{MR4812858}, or \cite{MR2015428}.
In particular, $H_S(g)$ is self-adjoint and bounded below.
\newline

\noindent
\textbf{\textit{Main results.}}
 Let $S$ be an injective non-negative self-adjoint operator acting in $\mathscr{H}$ with $CS\subset SC$, and $g\in\operatorname{dom}(S^{-1/2})$.
 Let $T\in\mathcal{S}_{C,\mathscr{V}}(\mathscr{H})$, and let $q,p:\mathscr{V}\to\mathbb{R}$ be real-linear functionals.
 Since $\rho_{T,q,p}$ is an irreducible Weyl representation, the triple
 \[
 \mathbb{M}(S,g,T,q,p):=\Big\{\mathscr{F}_{\mathrm{b}}(\mathscr{H}),H_{S}(g),\{\phi_{T,q}(f),\pi_{T,p}(f)\mid f\in\mathscr{V}\}\Big\}
 \]
 is an abstract Bose field model.
 Our main results are stated as follows.

\begin{theorem}\label{main thm2}
	Let $\mathbb{M}(S_1,g_1,T_1,q_1,p_1)$ and $\mathbb{M}(S_2,g_2,T_2,q_2,p_2)$ be two abstract Bose field models defined above.
	Then $\mathbb{M}(S_1,g_1,T_1,q_1,p_1)$ is equivalent to $\mathbb{M}(S_2,g_2,T_2,q_2,p_2)$ if and only if there exists a unitary operator $W$ on $\mathscr{H}$ such that the following three conditions hold:
	\begin{itemize}
		\item $WS_1W^*=S_2$,\qquad $CW=WC$,
		\item $WT_1f=T_2f,\qquad WT_1^{-1}f=T_2^{-1}f,\qquad\forall f\in\mathscr{V}$,
		\item $W^*g_2-g_1\in\operatorname{dom}{(S_1^{-1})}$ and 
		\begin{align*}
		q_2(f)-q_1(f)
		&=\operatorname{Re}\langle S_1^{-1}(W^*g_2-g_1),T_1^{-1}f\rangle,\\
		p_2(f)-p_1(f)
		&=-\operatorname{Im}\langle S_1^{-1}(W^*g_2-g_1),T_1f\rangle,\qquad\forall f\in\mathscr{V}.
		\end{align*}
	\end{itemize} 
\end{theorem}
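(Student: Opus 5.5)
The plan is to reduce Theorem \ref{main thm2} to Theorem \ref{main thm} together with the standard fact that the van Hove Hamiltonian is unitarily equivalent, up to an additive constant, to the free Hamiltonian. For the latter we use the Weyl operator $\mathcal{W}(h):=\mathrm{e}^{\mathrm{i}\Phi_{\mathrm{S}}(\mathrm{i}\sqrt2 h)}$. When $g\in\operatorname{dom}(S^{-1})$ and we put $h=S^{-1}g/\sqrt2$, this operator satisfies
\[
\mathcal{W}(h)\,\mathrm{d}\Gamma_{\mathrm{b}}(S)\,\mathcal{W}(h)^*=H_S(g)+\mathrm{const},\qquad \mathcal{W}(h)\Phi_{\mathrm{S}}(f)\mathcal{W}(h)^*=\Phi_{\mathrm{S}}(f)-\sqrt2\operatorname{Re}\langle h,f\rangle,
\]
with the sign and normalisation of $h$ to be fixed carefully. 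In general $g\in\operatorname{dom}(S^{-1/2})$ only, and then $H_S(g)$ has no ground state when $g\notin\operatorname{dom}(S^{-1})$.

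\textbf{Sufficiency.} Given $W$, let $\Gamma(W)$ denote its second quantization. Then $\Gamma(W)\mathrm{d}\Gamma_{\mathrm{b}}(S_1)\Gamma(W)^*=\mathrm{d}\Gamma_{\mathrm{b}}(S_2)$ and $\Gamma(W)\Phi_{\mathrm{S}}(f)\Gamma(W)^*=\Phi_{\mathrm{S}}(Wf)$. Because $CW=WC$, and $W$ maps $T_1^{\pm1}f$ to $T_2^{\pm1}f$, the operator $\Gamma(W)$ sends $\Phi_{\mathrm{S}}(T_1^{-1}f)$ to $\Phi_{\mathrm{S}}(T_2^{-1}f)$ and $\Phi_{\mathrm{S}}(\mathrm{i}T_1f)$ to $\Phi_{\mathrm{S}}(\mathrm{i}T_2f)$. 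It therefore remains to compare $H_{S_1}(g_1)$ with $H_{S_1}(W^*g_2)$. We have
\[
H_{S_1}(W^*g_2)=H_{S_1}(g_1)+\Phi_{\mathrm{S}}(d),\qquad d:=W^*g_2-g_1\in\operatorname{dom}(S_1^{-1}).
\]
Conjugating by the Weyl operator $\mathcal{W}(k)$ with $k=S_1^{-1}d/\sqrt2$ shifts $\mathrm{d}\Gamma_{\mathrm{b}}(S_1)$ by $\Phi_{\mathrm{S}}(d)$ plus a constant, and shifts $\Phi_{\mathrm{S}}(g_1)$ only by a constant. Hence $\mathcal{W}(k)H_{S_1}(g_1)\mathcal{W}(k)^*=H_{S_1}(W^*g_2)+E$. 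The same conjugation shifts the fields: $\Phi_{\mathrm{S}}(T_1^{-1}f)$ changes by $\operatorname{Re}\langle S_1^{-1}d,T_1^{-1}f\rangle$, and $\Phi_{\mathrm{S}}(\mathrm{i}T_1f)$ changes by $\operatorname{Re}\langle S_1^{-1}d,\mathrm{i}T_1f\rangle=-\operatorname{Im}\langle S_1^{-1}d,T_1f\rangle$. This is exactly the stated change in $q$ and $p$. Thus $U=\Gamma(W)\mathcal{W}(k)$ works, modulo checking signs and domains by a routine core argument.

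\textbf{Necessity.} Let $U$ implement the equivalence. Theorem \ref{main thm} gives a transfer pair $(J_+,J_-)$ and vectors $h_q,h_p$ satisfying \eqref{Uphi(f)U^*_and_Uphi(if)U^}, and this extends by real linearity to $U\Phi_{\mathrm{S}}(f)U^*=\Phi_{\mathrm{S}}(Vf)+\ell(f)$. Here $V=J_-$ on $\mathscr{H}_C$ and $V=J_+$ on $\mathrm{i}\mathscr{H}_C$, so $V$ is a real-linear Bogoliubov map and $\ell$ is a real functional. The key step is to show that $V$ is complex linear and unitary, i.e.\ $J_+=J_-=:W$ is unitary and commutes with $C$.

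The idea for this key step is to use time evolution. Let $a_1(t)$ be the time-$t$ field, obtained from $\mathrm{e}^{\mathrm{i}tH_S(g)}\Phi_{\mathrm{S}}(f)\mathrm{e}^{-\mathrm{i}tH_S(g)}=\Phi_{\mathrm{S}}(\mathrm{e}^{\mathrm{i}tS}f)+c_t(f)$, which is known explicitly for van Hove Hamiltonians. Intertwining with $U$ gives
\[
V\mathrm{e}^{\mathrm{i}tS_1}=\mathrm{e}^{\mathrm{i}tS_2}V
\]
on the level of the linear parts, after comparing the operator parts of the affine maps. Decompose $V=A+B$ into its complex-linear part $A$ and anti-linear part $B$. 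The intertwining then gives
\[
B\,\mathrm{e}^{\mathrm{i}tS_1}=\mathrm{e}^{\mathrm{i}tS_2}B,\qquad\text{equivalently}\qquad B\,\mathrm{e}^{\mathrm{i}tS_1}=\mathrm{e}^{-\mathrm{i}tS_2}B
\]
after moving the anti-linearity through $B$. Since $S_1,S_2\geq0$ are injective, the positive and negative frequencies cannot be matched, so the spectral supports force $B=0$. This is the step I expect to be the main obstacle, and I would carry it out via spectral measures, using $\mathrm{e}^{\mathrm{i}t(S_2+\overline{S_1})}$ analyticity in the upper half-plane.

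Once $B=0$, the symplectic condition $A^*A=1$ and its invertibility make $A$ unitary, i.e.\ $W:=A$ is unitary. Since $V$ maps $\mathscr{H}_C$ into $\mathscr{H}_C$, we get $CW=WC$, and intertwining gives $WS_1W^*=S_2$. Finally, compare the scalar shifts. After $\Gamma(W)^*$, the operator $U':=\Gamma(W)^*U$ commutes with the fields up to the scalar $\ell$, hence by irreducibility $U'$ is a Weyl operator $\mathcal{W}(k)$ times a phase. The identity $\mathcal{W}(k)H_{S_1}(g_1)\mathcal{W}(k)^*=H_{S_1}(W^*g_2)+E$ then forces $k\in\operatorname{dom}(S_1)$ with $S_1k\propto d$, i.e.\ $d\in\operatorname{dom}(S_1^{-1})$. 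This is obtained by comparing the formal commutator on a core and using that a Weyl operator maps $\operatorname{dom}(\mathrm{d}\Gamma_{\mathrm{b}}(S_1)^{1/2})$ appropriately only when $k\in\operatorname{dom}(S_1^{1/2})$. Reading off the field shifts as in the sufficiency part yields the formulas for $q_2-q_1$ and $p_2-p_1$.
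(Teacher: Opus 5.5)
Your argument works, and for necessity it takes a genuinely different route from the paper. Sufficiency is the same as the paper's: with $k=S_1^{-1}d/\sqrt2$, your $\mathcal{W}(k)$ is exactly the paper's $V=\mathrm{e}^{\mathrm{i}\Phi_{\mathrm{S}}(-h_p+\mathrm{i}h_q)}$.

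\textbf{The step $J_+=J_-$.} The paper compares only the $\pi$-fields. It takes real and imaginary parts to get $\cos(tS_2)J_+=J_+\cos(tS_1)$ and $\sin(tS_2)J_+=J_-\sin(tS_1)$, integrates the first to a sinc relation, and uses Lemmas \ref{sinc eq} and \ref{sin eq} to reach $J_+S_1\subset S_2J_+$ and $J_-S_1\subset S_2J_+$. Hence $(J_+-J_-)S_1=0$. You instead extend the conjugation formulas of Theorem \ref{main thm} to all of $\mathscr{H}$; matching the Weyl phases there uses $J_-^*J_+=1$. You then write $V=X_++X_-C$ and split $V\mathrm{e}^{\mathrm{i}tS_1}=\mathrm{e}^{\mathrm{i}tS_2}V$ into complex-linear and anti-linear parts. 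Since $C\mathrm{e}^{\mathrm{i}tS_1}=\mathrm{e}^{-\mathrm{i}tS_1}C$, it is the complex-linear operator $BC=X_-$, not $B$ itself as your second formula reads, that satisfies $X_-\mathrm{e}^{\mathrm{i}tS_1}=\mathrm{e}^{-\mathrm{i}tS_2}X_-$. By Stone's theorem and Lemma \ref{intertwining Borel}, $X_-=X_-E_{S_1}((0,\infty))=E_{-S_2}((0,\infty))X_-=0$; no analyticity is needed. Then $X_+$ is unitary by Lemma \ref{improved lemma 5.8}, and the complex-linear part gives $WS_1W^*=S_2$ directly. This positive-versus-negative-frequency argument is shorter and more conceptual, and it bypasses both sinc/sin appendix lemmas.

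\textbf{The scalar part.} Your irreducibility step, $U=\mathrm{e}^{\mathrm{i}\theta}\Gamma_{\mathrm{b}}(W)\mathcal{W}(k)$, is valid and even shows that the implementing unitary is unique up to a phase. However, the domain justification you sketch is too weak: form domains only give $k\in\operatorname{dom}(S_1^{1/2})$. Use operator domains instead.
\begin{itemize}
\item By Kato--Rellich, both van Hove Hamiltonians have domain $\operatorname{dom}(\mathrm{d}\Gamma_{\mathrm{b}}(S_1))$, so $\mathcal{W}(k)\Omega_{\mathscr{H}}$ lies in it.
\item Its one-particle component is a nonzero multiple of $k$, which forces $k\in\operatorname{dom}(S_1)$.
\item Lemma \ref{lem;trans_of_second_quant_op_by_field_op} then reduces the Hamiltonian identity to $\Phi_{\mathrm{S}}(\sqrt2S_1k-d)=\mathrm{const}$ on $\operatorname{dom}(\mathrm{d}\Gamma_{\mathrm{b}}(S_1))$. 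Evaluating at $\Omega_{\mathscr{H}}$ gives $\sqrt2S_1k=d$.
\end{itemize}
A simpler fix, and what the paper does componentwise: keep the scalar terms you discarded in the Heisenberg comparison. With $h:=h_q+\mathrm{i}h_p$ they read $(\mathrm{e}^{-\mathrm{i}tS_1}-1)h=S_1^{-1}(\mathrm{e}^{-\mathrm{i}tS_1}-1)d$. Dividing by $t$ and letting $t\to0$ gives $h\in\operatorname{dom}(S_1)$ and $S_1h=d$ by Stone's theorem, from which the formulas for $q_2-q_1$ and $p_2-p_1$ follow.
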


\begin{corollary}\label{cor of main thm2}
	Let $\mathbb{M}(S_1,g_1,T_1,q_1,p_1)$ and $\mathbb{M}(S_2,g_2,T_2,q_2,p_2)$ be two abstract Bose field models defined above.
	We further suppose that one of the following conditions holds:
	\begin{itemize}
		\item $T_1$ and $T_2$ are non-negative, and $\mathscr{V}+\mathrm{i}\mathscr{V}$ is a core for $T_1$ and $T_2$,
		\item $T_1=S_1^{1/2}$ and $T_2=S_2^{1/2}$, and $\mathscr{V}$ is dense in $\mathscr{H}_C$.
	\end{itemize}
	If $\mathbb{M}(S_1,g_1,T_1,q_1,p_1)$ is equivalent to $\mathbb{M}(S_2,g_2,T_2,q_2,p_2)$, then we have $T_1=T_2$.
	In particular, the unitary operator $W$ appearing in Theorem \ref{main thm2} coincides with the identity operator on $\mathscr{H}$.
\end{corollary}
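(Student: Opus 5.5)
By Theorem \ref{main thm2}, equivalence of the two models yields a unitary $W$ on $\mathscr{H}$ with $WS_1W^*=S_2$, $CW=WC$, and
\begin{equation*}
WT_1f=T_2f,\qquad WT_1^{-1}f=T_2^{-1}f,\qquad \forall f\in\mathscr{V}.
\end{equation*}
It therefore suffices to show $T_1=T_2$, since $T_1\mathscr{V}$ is dense in $\mathscr{H}_C$. Indeed, $W$ is complex linear and commutes with $C$, so $W$ restricts to $\mathscr{H}_C$. Once $T_1=T_2$, $W$ acts as the identity on the dense subspace $T_1\mathscr{V}$ of $\mathscr{H}_C$, hence on $\mathscr{H}_C+\mathrm{i}\mathscr{H}_C=\mathscr{H}$, so $W=1$.

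\emph{First case.} For $f,g\in\mathscr{V}$, unitarity of $W$ gives
\begin{equation*}
\langle T_2f,T_2^{-1}g\rangle=\langle WT_1f,WT_1^{-1}g\rangle=\langle T_1f,T_1^{-1}g\rangle=\langle f,g\rangle,
\end{equation*}
which carries no information. The useful identity is the quadratic one:
\begin{equation*}
\langle T_2f,T_2g\rangle=\langle T_1f,T_1g\rangle,\qquad f,g\in\mathscr{V}.
\end{equation*}
It extends sesquilinearly to $D:=\mathscr{V}+\mathrm{i}\mathscr{V}$, so $\|T_1u\|=\|T_2u\|$ for all $u\in D$. Because $D$ is a core for both $T_1$ and $T_2$, the closed forms $u\mapsto\|T_ju\|^2$ are the closures of their restrictions to $D$. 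These restrictions coincide, so the closed forms coincide; they are the forms of $T_1^2$ and $T_2^2$, hence $T_1^2=T_2^2$. Since $T_1,T_2\geq0$, uniqueness of the non-negative square root gives $T_1=T_2$.

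\emph{Second case.} Here $T_j=S_j^{1/2}$, so $WT_1W^*=(WS_1W^*)^{1/2}=S_2^{1/2}=T_2$, that is, $WT_1=T_2W$. Combined with $WT_1f=T_2f$, this gives $T_2Wf=T_2f$ for $f\in\mathscr{V}$. Injectivity of $T_2$ then yields $Wf=f$ on $\mathscr{V}$. By continuity $W=1$ on $\overline{\mathscr{V}}=\mathscr{H}_C$, hence on $\mathscr{H}$, and therefore $T_2=WT_1W^*=T_1$.

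The only delicate step is the form argument in the first case. It requires checking that the form closure of $u\mapsto\|T_ju\|^2$ on a core is exactly the form with domain $\operatorname{dom}(T_j)$. This holds because the graph norm of $T_j$ and the form norm are equivalent, so a core for $T_j$ is a form core.
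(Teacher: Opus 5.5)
Your proposal is correct.

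\textbf{First case.} This follows the paper's route. You obtain the norm identity on $\mathscr{V}+\mathrm{i}\mathscr{V}$ from unitarity of $W$, extend it to $\operatorname{dom}(T_1)=\operatorname{dom}(T_2)$ using the core hypothesis, conclude $T_1^2=T_2^2$, and finish by uniqueness of the non-negative square root. You phrase the extension through closures of quadratic forms and the representation theorem. The paper instead uses a direct limiting argument plus polarization to get $T_1^2\subset T_2^2$ and the reverse inclusion. The content is the same, and your sesquilinear extension from $\mathscr{V}$ to $\mathscr{V}+\mathrm{i}\mathscr{V}$ spells out a step the paper passes over quickly.

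\textbf{Second case.} Here your argument is genuinely different. The paper combines $WT_1^{-1}g=T_2^{-1}g$ with $W\sin(tS_1)=\sin(tS_2)W$ to write $W\sin(tS_1)S_1^{-1/2}g=\sin(tS_2)S_2^{-1/2}g$. It pairs this against $S_2^{1/2}f$ and uses $W^*S_2^{1/2}f=S_1^{1/2}f$ to get $\langle f,\sin(tS_1)g\rangle=\langle f,\sin(tS_2)g\rangle$ for $f,g\in\mathscr{V}$. Density then gives $\sin(tS_1)=\sin(tS_2)$, and the paper needs the spectral limiting argument of Lemma \ref{sin eq} to reach $S_1\subset S_2$, hence $S_1=S_2$. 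The identity $W=1$ is left implicit.

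You instead use unitary covariance of the functional calculus, $WS_1^{1/2}W^*=(WS_1W^*)^{1/2}=S_2^{1/2}$. This gives $T_2Wf=WT_1f=T_2f$ on $\mathscr{V}$, and injectivity of $T_2$ (built into $\mathcal{S}_{C,\mathscr{V}}(\mathscr{H})$) yields $Wf=f$ there. Density of $\mathscr{V}$ in $\mathscr{H}_C$ then gives $W=1$, and $T_1=T_2$ follows at once.

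Your route is shorter and avoids Lemma \ref{sin eq} entirely. It uses only the relation $WT_1f=T_2f$, not the one for $T_j^{-1}$, and it proves the ``in particular'' clause directly. The paper's route establishes $S_1=S_2$ first by reusing the sine-intertwining machinery already developed for Theorem \ref{main thm2}.
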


\noindent
\textbf{\textit{Organization of the paper.}}
This paper is organized as follows.
In Section \ref{sect;Properties of transfer pairs}, we investigate basic properties of transfer pairs.
In Section \ref{sect;proof of main}, we prove Theorem \ref{main thm}.
In Section \ref{sect;proof of main2}, we prove Theorem \ref{main thm2}.
In Section \ref{set;proof of cor}, we prove Corollary \ref{cor of main thm2}.
In Section \ref{set;examples}, we classify abstract free Bose field models, abstract van Hove--Miyatake models, infrared-renormalized van Hove--Miyatake models, and quadratic interaction models.
In the Appendix, we collect some results from operator theory and auxiliary results used in this paper.

\section{Properties of transfer pairs}\label{sect;Properties of transfer pairs}

In this section, we investigate basic properties of transfer pairs.

\begin{proposition} 
	Let $T_1,T_2\in\mathcal{S}_{C,\mathscr{V}}(\mathscr{H})$. 
	If a transfer pair from $T_1$ to $T_2$ with respect to $(C,\mathscr{V})$ exists, then it is unique. 
\end{proposition}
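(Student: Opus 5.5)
Uniqueness reduces to a density statement. Suppose $(J_+,J_-)$ and $(J_+',J_-')$ are two transfer pairs from $T_1$ to $T_2$ with respect to $(C,\mathscr{V})$. Then $K_+:=J_+-J_+'$ and $K_-:=J_--J_-'$ are bounded operators on $\mathscr{H}$. Subtracting the defining identities gives
\[
K_+T_1f=0,\qquad K_-T_1^{-1}f=0,\qquad \forall f\in\mathscr{V}.
\]
So $K_+$ vanishes on the real subspace $T_1\mathscr{V}$, and $K_-$ vanishes on $T_1^{-1}\mathscr{V}$. Both subspaces are dense in $\mathscr{H}_C$ by the definition of $\mathcal{S}_{C,\mathscr{V}}(\mathscr{H})$, since $T_1$ belongs to it. By continuity, $K_\pm$ vanish on all of $\mathscr{H}_C$.

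It remains to pass from $\mathscr{H}_C$ to $\mathscr{H}$. I will use the standard decomposition $\mathscr{H}=\mathscr{H}_C+\mathrm{i}\mathscr{H}_C$: every $h\in\mathscr{H}$ can be written as $h=u+\mathrm{i}v$ with
\[
u=\tfrac12(h+Ch),\qquad v=\tfrac{1}{2\mathrm{i}}(h-Ch).
\]
Anti-linearity and $C^2=1$ give $Cu=u$ and $Cv=v$. Since $K_\pm$ are complex-linear, $K_\pm h=K_\pm u+\mathrm{i}K_\pm v=0$. Hence $K_\pm=0$, that is, $J_+=J_+'$ and $J_-=J_-'$.

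There is no real obstacle. The only points to check are that the relevant density is exactly the density of $T_1^{\pm1}\mathscr{V}$ in $\mathscr{H}_C$ (which is given), and that $J_\pm$ are complex-linear bounded operators on $\mathscr{H}$, so vanishing on the real form $\mathscr{H}_C$ forces vanishing everywhere. The conditions $CT\subset TC$ and the injectivity of $T_1$ are not needed for this argument.
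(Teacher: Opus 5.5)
Your proof is correct and follows essentially the same route as the paper: use the density of $T_1\mathscr{V}$ and $T_1^{-1}\mathscr{V}$ in $\mathscr{H}_C$ together with boundedness to get agreement on $\mathscr{H}_C$, then extend to all of $\mathscr{H}$ by complex linearity via the decomposition $h=u+\mathrm{i}v$ with $u,v\in\mathscr{H}_C$. Working with the differences $K_\pm=J_\pm-J_\pm'$ rather than comparing $J_\pm$ and $J_\pm'$ directly is only a cosmetic difference.
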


\begin{proof}
	Let $(J_+,J_-)$ and $(J_+^\prime,J_-^\prime)$ be transfer pairs from $T_1$ to $T_2$ with respect to $(C,\mathscr{V})$.
	By the definition of a transfer pair, we have
	\[
	J_+(T_1f)=T_2f=J_+^\prime (T_1f),\qquad\forall f\in\mathscr{V}.
	\]
	Since $T_1\mathscr{V}$ is dense in $\mathscr{H}_C$ and both $J_+$ and $J_+^\prime$ are bounded, we obtain $J_+f=J_+^\prime f$ for all $f\in\mathscr{H}_C$.
	Since every $f\in\mathscr{H}$ can be decomposed as $f=f_1+\mathrm{i}f_2,$ where
	\begin{equation}\label{def of f_1 and f_2}
		f_1:=\frac{1}{2}(f+Cf)\qquad\text{and}\qquad f_2:=\frac{1}{2\mathrm{i}}(f-Cf)
	\end{equation}
	are vectors in $\mathscr{H}_C$, we conclude that $J_+=J_+^\prime$.
	Similarly, the equality $J_-=J_-^\prime$ follows.
\end{proof}

The following lemma gives the adjoints of $J_+$ and $J_-$.

\begin{lemma}\label{adjoints of J_- and I{+-}}
	Let $T_1,T_2\in\mathcal{S}_{C,\mathscr{V}}(\mathscr{H})$.
	Suppose that a transfer pair $(J_+,J_-)$ from $T_1$ to $T_2$ with respect to $(C,\mathscr{V})$ exists.
	Then we have
	\[
	J_-^*T_2f=T_1f,\qquad J_+^*T_2^{-1}f=T_1^{-1}f,\qquad\forall f\in\mathscr{V}.
	\]
	That is, the pair $(J_-^*,J_+^*)$ is a transfer pair from $T_2$ to $T_1$ with respect to $(C,\mathscr{V}).$ 
	Moreover, $J_+$ and $J_-$ have bounded inverses, given by $J_+^{-1}=J_-^*$ and $J_-^{-1}=J_+^*.$
\end{lemma}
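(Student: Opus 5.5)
The approach is to test everything against the dense sets $T_1^{\pm1}\mathscr{V}$ and $T_2^{\pm1}\mathscr{V}$ and use the self-adjointness of $T_1,T_2$ to move operators across inner products.

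\textbf{The identity $J_-^*T_2f=T_1f$.} Fix $f,g\in\mathscr{V}$. Since $T_1$ is self-adjoint and $\mathscr{V}\subset\operatorname{dom}(T_1)\cap\operatorname{dom}(T_1^{-1})$,
\[
\langle J_-T_1^{-1}g,\,T_2f\rangle=\langle T_2^{-1}g,\,T_2f\rangle=\langle g,f\rangle=\langle T_1^{-1}g,\,T_1f\rangle .
\]
Hence $\langle T_1^{-1}g,\,J_-^*T_2f-T_1f\rangle=0$ for all $g\in\mathscr{V}$. Because $T_1^{-1}\mathscr{V}$ is dense in $\mathscr{H}_C$, the vector $h:=J_-^*T_2f-T_1f$ is orthogonal to $\mathscr{H}_C$.

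To conclude $h=0$, I use the decomposition \eqref{def of f_1 and f_2}. A vector orthogonal to the real subspace $\mathscr{H}_C$ in the complex inner product is also orthogonal to $\mathrm{i}\mathscr{H}_C$, since the inner product is linear in the second slot. As $\mathscr{H}=\mathscr{H}_C+\mathrm{i}\mathscr{H}_C$, it follows that $h=0$. This argument does not need $J_\pm$ to commute with $C$.

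\textbf{The identity $J_+^*T_2^{-1}f=T_1^{-1}f$.} The argument is symmetric: for $f,g\in\mathscr{V}$,
\[
\langle J_+T_1g,\,T_2^{-1}f\rangle=\langle T_2g,\,T_2^{-1}f\rangle=\langle g,f\rangle=\langle T_1g,\,T_1^{-1}f\rangle,
\]
and the density of $T_1\mathscr{V}$ in $\mathscr{H}_C$ gives the claim. Together the two identities say exactly that $(J_-^*,J_+^*)$ is a transfer pair from $T_2$ to $T_1$.

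\textbf{Bounded inverses.} For $f\in\mathscr{V}$ we have $J_-^*J_+T_1f=J_-^*T_2f=T_1f$. Both sides are bounded operators applied to $T_1f$, so by the density of $T_1\mathscr{V}$ in $\mathscr{H}_C$ and the complexification argument above, $J_-^*J_+=1$ on $\mathscr{H}$. Likewise $J_+J_-^*T_2f=J_+T_1f=T_2f$, and the density of $T_2\mathscr{V}$ gives $J_+J_-^*=1$. Hence $J_+^{-1}=J_-^*$.

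The same reasoning applied to $J_-$ and $J_+^*$ uses $J_+^*J_-T_1^{-1}f=T_1^{-1}f$ and $J_-J_+^*T_2^{-1}f=T_2^{-1}f$. This gives $J_-^{-1}=J_+^*$.

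\textbf{Main obstacle.} The only delicate point is passing from density in the real space $\mathscr{H}_C$ to equality of operators, or vanishing of vectors, on the complex space $\mathscr{H}$. This is handled by the complex linearity of the operators and of the inner product in the second slot, together with $\mathscr{H}=\mathscr{H}_C+\mathrm{i}\mathscr{H}_C$, exactly as in the uniqueness proposition.
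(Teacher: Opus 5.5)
Your proof is correct and follows the paper's argument essentially step by step. You use the same inner-product computations tested against the dense sets $T_1^{-1}\mathscr{V}$ and $T_1\mathscr{V}$ in $\mathscr{H}_C$, followed by $J_-^*J_+=1=J_+J_-^*$ via density. The only differences are cosmetic: you make the passage from $\mathscr{H}_C$ to $\mathscr{H}=\mathscr{H}_C+\mathrm{i}\mathscr{H}_C$ explicit where the paper leaves it implicit, and you get $J_-^{-1}=J_+^*$ by a second direct density argument rather than by taking adjoints of $J_-^*J_+=1$ and $J_+J_-^*=1$.
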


\begin{proof}
	For any $f,g\in\mathscr{V},$ it holds that
	\[
	\langle	T_1^{-1}g,J_-^*T_2f\rangle=\langle	J_-T_1^{-1}g,T_2f\rangle=\langle	T_2^{-1}g,T_2f\rangle=\langle	g,f\rangle=\langle	T_1^{-1}g,T_1f\rangle.
	\]
	Since $T_1^{-1}\mathscr{V}$ is dense in $\mathscr{H}_C,$ we obtain $J_-^*T_2f=T_1f.$
	Similarly, for any $f,g\in\mathscr{V},$ it follows that
		\[
	\langle	T_1g,J_+^*T_2^{-1}f\rangle=\langle	J_+T_1g,T_2^{-1}f\rangle=\langle	T_2g,T_2^{-1}f\rangle=\langle	g,f\rangle=\langle	T_1g,T_1^{-1}f\rangle.
	\]
	Since $T_1\mathscr{V}$ is dense in $\mathscr{H}_C,$ we deduce that $J_+^*T_2^{-1}f=T_1^{-1}f.$
	Thus $(J_-^*,J_+^*)$ is a transfer pair from $T_2$ to $T_1$ with respect to $(C,\mathscr{V}).$
	
	We next show that $J_+$ and $J_-$ have bounded inverses.
	For any $f\in\mathscr{V},$ we have
	\[
	J_-^*J_+T_1f=J_-^*T_2f=T_1f.
	\]
	Since $T_1\mathscr{V}$ is dense in $\mathscr{H}_C,$ it holds that $J_-^*J_+=1.$
	Similarly, $J_+J_-^*=1$ follows.
	Thus, $J_+$ has a bounded inverse with $J_+^{-1}=J_-^*.$
	By taking the adjoints of these two equalities, we deduce that $J_-$ also has a bounded inverse with $J_-^{-1}=J_+^*.$
	This completes the proof. 
\end{proof}

\begin{proposition} 
	Let $T_1,T_2\in\mathcal{S}_{C,\mathscr{V}}(\mathscr{H})$. 
	Suppose that a transfer pair $(J_+,J_-)$ from $T_1$ to $T_2$ with respect to $(C,\mathscr{V})$ exists.
	Then, $J_+\mathscr{H}_C=\mathscr{H}_C$ and $J_-\mathscr{H}_C=\mathscr{H}_C$.
\end{proposition}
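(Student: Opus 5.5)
The plan is to show that $J_\pm$ commute with $C$ and then combine this with the invertibility from Lemma \ref{adjoints of J_- and I{+-}}.

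\textbf{Step 1: $J_+\mathscr{H}_C\subset\mathscr{H}_C$.} Let $f\in\mathscr{V}$. Since $CT_2\subset T_2C$ and $f\in\mathscr{H}_C\cap\operatorname{dom}(T_2)$, we get $CT_2f=T_2Cf=T_2f$. Hence $T_2f\in\mathscr{H}_C$, and so $J_+(T_1f)=T_2f\in\mathscr{H}_C$. Now take an arbitrary $h\in\mathscr{H}_C$. Because $T_1\mathscr{V}$ is dense in $\mathscr{H}_C$, we can choose $f_n\in\mathscr{V}$ with $T_1f_n\to h$. By boundedness of $J_+$, we have $J_+h=\lim_n J_+T_1f_n$. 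This is a limit of elements of $\mathscr{H}_C$, and $\mathscr{H}_C$ is closed (it is the fixed-point set of the continuous map $C$). Therefore $J_+h\in\mathscr{H}_C$.

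\textbf{Step 2: the same inclusion for $J_-$, $J_-^*$ and $J_+^*$.} The same argument, using that $T_1^{-1}\mathscr{V}$ is dense in $\mathscr{H}_C$ and that $T_2^{-1}f\in\mathscr{H}_C$, gives $J_-\mathscr{H}_C\subset\mathscr{H}_C$. For $T_2^{-1}f\in\mathscr{H}_C$: $CT_2\subset T_2C$ implies $CT_2^{-1}\subset T_2^{-1}C$, since $T_2$ is injective and $C^2=1$. By Lemma \ref{adjoints of J_- and I{+-}}, $(J_-^*,J_+^*)$ is a transfer pair from $T_2$ to $T_1$. Applying Step 1 to this pair therefore gives $J_-^*\mathscr{H}_C\subset\mathscr{H}_C$ and $J_+^*\mathscr{H}_C\subset\mathscr{H}_C$.

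\textbf{Step 3: conclusion.} By Lemma \ref{adjoints of J_- and I{+-}}, $J_+^{-1}=J_-^*$. Hence
\[
\mathscr{H}_C=J_+J_-^*\mathscr{H}_C\subset J_+\mathscr{H}_C\subset\mathscr{H}_C,
\]
so $J_+\mathscr{H}_C=\mathscr{H}_C$. In the same way, $J_-^{-1}=J_+^*$ gives
\[
\mathscr{H}_C=J_-J_+^*\mathscr{H}_C\subset J_-\mathscr{H}_C\subset\mathscr{H}_C,
\]
so $J_-\mathscr{H}_C=\mathscr{H}_C$.

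The only delicate point is the passage $CT_2\subset T_2C\Rightarrow CT_2^{-1}\subset T_2^{-1}C$. Let $u\in\operatorname{dom}(T_2^{-1})=\operatorname{ran}T_2$ and write $u=T_2v$. Then $Cu=CT_2v=T_2Cv$, so $Cu\in\operatorname{ran}T_2$. Moreover $T_2^{-1}Cu=Cv=CT_2^{-1}u$, which is exactly the required inclusion. Everything else is routine density and closedness.
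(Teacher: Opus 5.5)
Your proof is correct and follows the paper's argument essentially step for step. You get invariance of $\mathscr{H}_C$ under $J_\pm$ from the density of $T_1^{\pm1}\mathscr{V}$ and the closedness of $\mathscr{H}_C$, get the same for $J_-^*,J_+^*$ via the reversed transfer pair $(J_-^*,J_+^*)$, and conclude with $\mathscr{H}_C=J_+J_-^*\mathscr{H}_C\subset J_+\mathscr{H}_C$ (and likewise for $J_-$). Your explicit check that $CT_2\subset T_2C$ gives $T_2^{\pm1}\mathscr{V}\subset\mathscr{H}_C$ fills in a detail the paper leaves implicit, though your opening sentence announces commutation with $C$, which the argument never uses or proves.
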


\begin{proof}
	We first note that
	\[
	J_+(T_1f)=T_2f\in\mathscr{H}_C,\qquad\forall f\in\mathscr{V}.
	\]
	Since $T_1\mathscr{V}$ is dense in $\mathscr{H}_C$ and $\mathscr{H}_C$ is closed in $\mathscr{H}$, we have $J_+\mathscr{H}_C\subset\mathscr{H}_C$.
	Similarly, we obtain $J_-\mathscr{H}_C\subset\mathscr{H}_C$.
	
	By Lemma \ref{adjoints of J_- and I{+-}}, $(J_-^*,J_+^*)$ is a transfer pair from $T_2$ to $T_1$.
	Combining this with the previous paragraph, we obtain $J_-^*\mathscr{H}_C\subset\mathscr{H}_C$ and $J_+^*\mathscr{H}_C\subset\mathscr{H}_C$.
	By Lemma \ref{adjoints of J_- and I{+-}}, we have
	\[
	\mathscr{H}_C=J_+J_-^*\mathscr{H}_C\subset J_+\mathscr{H}_C\qquad\text{and}\qquad
	\mathscr{H}_C=J_-J_+^*\mathscr{H}_C\subset J_-\mathscr{H}_C.
	\]
	Hence, $J_+\mathscr{H}_C=\mathscr{H}_C$ and $J_-\mathscr{H}_C=\mathscr{H}_C$ follow.
\end{proof}

\begin{lemma}\label{C and J_pm commute}
	Let $T_1,T_2\in\mathcal{S}_{C,\mathscr{V}}(\mathscr{H}).$
	Suppose that a transfer pair $(J_+,J_-)$ from $T_1$ to $T_2$ with respect to $(C,\mathscr{V})$ exists.
	Then, we have
	\[
	CJ_+=J_+C,\qquad 	CJ_-=J_-C.
	\]
\end{lemma}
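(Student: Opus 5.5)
The plan is to reduce everything to the previous proposition, which gives $J_+\mathscr{H}_C=\mathscr{H}_C$ and $J_-\mathscr{H}_C=\mathscr{H}_C$, together with the decomposition \eqref{def of f_1 and f_2} of an arbitrary vector into its $C$-real parts. Only the inclusions $J_\pm\mathscr{H}_C\subset\mathscr{H}_C$ are needed; these were obtained at the start of that proof from the density of $T_1\mathscr{V}$ (respectively $T_1^{-1}\mathscr{V}$) in $\mathscr{H}_C$ and the closedness of $\mathscr{H}_C$.

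First, fix $f\in\mathscr{H}$ and write $f=f_1+\mathrm{i}f_2$ with $f_1,f_2\in\mathscr{H}_C$ as in \eqref{def of f_1 and f_2}. Since $C$ is anti-linear with $C^2=1$, we have $Cf=f_1-\mathrm{i}f_2$. Second, by the complex linearity of $J_+$,
\[
J_+f=J_+f_1+\mathrm{i}J_+f_2,
\]
where $J_+f_1$ and $J_+f_2$ lie in $\mathscr{H}_C$. Applying the anti-linear map $C$ and using $C(J_+f_j)=J_+f_j$ for $j=1,2$, we get
\[
CJ_+f=J_+f_1-\mathrm{i}J_+f_2=J_+(f_1-\mathrm{i}f_2)=J_+Cf.
\]
This gives $CJ_+=J_+C$. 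The same computation with $J_-$ in place of $J_+$, using $J_-\mathscr{H}_C\subset\mathscr{H}_C$, gives $CJ_-=J_-C$.

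There is essentially no obstacle here. The only point to handle carefully is the anti-linearity of $C$, which turns $\mathrm{i}$ into $-\mathrm{i}$. This sign flip is exactly what makes the real-subspace invariance equivalent to commutation with $C$.
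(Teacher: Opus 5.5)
Your proof is correct and is essentially the paper's argument. The paper checks $CJ_+=J_+C$ directly on the dense set $\{T_1f_1+\mathrm{i}T_1f_2\mid f_1,f_2\in\mathscr{V}\}$ and extends by continuity, whereas you first use the inclusion $J_\pm\mathscr{H}_C\subset\mathscr{H}_C$ from the preceding proposition (which rests on the same density of $T_1^{\pm1}\mathscr{V}$ in $\mathscr{H}_C$, and does not use this lemma, so there is no circularity) and then apply the decomposition $f=f_1+\mathrm{i}f_2$, with the anti-linearity of $C$ handled correctly.
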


\begin{proof}
	For any $f_1,f_2\in\mathscr{V},$ it holds that
	\[
	CJ_+(T_1f_1+\mathrm{i}T_1f_2)=C(T_2f_1+\mathrm{i}T_2f_2)=T_2f_1-\mathrm{i}T_2f_2.
	\]
	On the other hand, we have
	\[
	J_+C(T_1f_1+\mathrm{i}T_1f_2)=J_+(T_1f_1-\mathrm{i}T_1f_2)=T_2f_1-\mathrm{i}T_2f_2.
	\]
	Since the set 
	\[
	\{T_1f_1+\mathrm{i}T_1f_2\mid f_1,f_2\in\mathscr{V}\}
	\]
	is dense in $\mathscr{H},$ we obtain $CJ_+=J_+C.$
	Similarly, the equality $CJ_-=J_-C$ follows.
\end{proof}

\begin{lemma}\label{improved lemma 5.8}
	Let $T_1,T_2\in\mathcal{S}_{C,\mathscr{V}}(\mathscr{H}).$
	Suppose that a transfer pair $(J_+,J_-)$ from $T_1$ to $T_2$ with respect to $(C,\mathscr{V})$ exists.
	Define operators $X_+$ and $X_-$ by
	\[
	X_+:=\frac{1}{2}\left(J_-+J_+\right),\qquad X_-:=\frac{1}{2}\left(J_--J_+\right).
	\]
	Then, it holds that
	\[
	X_+^*X_+-X_-^*X_-=1,\qquad X_-^*X_+=X_+^*X_-
	\]
	and
	\[
	X_+X_+^*-X_-X_-^*=1,\qquad X_-X_+^*=X_+X_-^*.
	\]
	In particular, $X_+$ has a bounded inverse.
\end{lemma}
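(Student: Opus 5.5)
Expanding the products, all four identities reduce to the relations $J_+^*J_-=1$ and $J_-^*J_+=1$ (and their counterparts $J_+J_-^*=1$, $J_-J_+^*=1$), which follow from Lemma \ref{adjoints of J_- and I{+-}}.

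\textbf{Step 1 (the relations needed).} Lemma \ref{adjoints of J_- and I{+-}} gives $J_+^{-1}=J_-^*$ and $J_-^{-1}=J_+^*$. Hence
\[
J_-^*J_+=J_+J_-^*=1,\qquad J_+^*J_-=J_-J_+^*=1.
\]
Taking adjoints of the first pair gives $J_+^*J_-=1$ and $J_-J_+^*=1$ again, so the two pairs are consistent.

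\textbf{Step 2 (the first pair of identities).} Write
\[
4X_\pm^*X_\pm=J_-^*J_-+J_+^*J_+\pm\left(J_-^*J_++J_+^*J_-\right).
\]
Subtracting, $4(X_+^*X_+-X_-^*X_-)=2(J_-^*J_++J_+^*J_-)=4$, so $X_+^*X_+-X_-^*X_-=1$. Next,
\[
4X_-^*X_+=(J_-^*-J_+^*)(J_-+J_+)=J_-^*J_-+J_-^*J_+-J_+^*J_--J_+^*J_+=J_-^*J_--J_+^*J_+,
\]
using $J_-^*J_+=J_+^*J_-=1$. Similarly,
\[
4X_+^*X_-=(J_-^*+J_+^*)(J_--J_+)=J_-^*J_--J_-^*J_++J_+^*J_--J_+^*J_+=J_-^*J_--J_+^*J_+.
\]
Therefore $X_-^*X_+=X_+^*X_-$.

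\textbf{Step 3 (the second pair of identities).} The same computation applies with the factors in the other order, now using $J_-J_+^*=J_+J_-^*=1$. The cross terms give $4(X_+X_+^*-X_-X_-^*)=2(J_-J_+^*+J_+J_-^*)=4$. Also $4X_-X_+^*=J_-J_-^*-J_+J_+^*=4X_+X_-^*$.

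\textbf{Step 4 (bounded invertibility of $X_+$).} From $X_+^*X_+=1+X_-^*X_-\ge 1$ we get $\|X_+u\|\ge\|u\|$ for all $u$. Hence $X_+$ is injective and has closed range. From $X_+X_+^*=1+X_-X_-^*\ge1$, $X_+^*$ is likewise bounded below, hence injective, so the range of $X_+$ is dense. A closed dense range is all of $\mathscr{H}$, so $X_+$ is bijective with $\|X_+^{-1}\|\le1$.

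No step is a real obstacle; the only point requiring care is using the correct order of the products from Lemma \ref{adjoints of J_- and I{+-}} in each identity.
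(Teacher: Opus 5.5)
Your proposal is correct and follows essentially the same route as the paper: you expand the products and use $J_-^*J_+=J_+^*J_-=1$, respectively $J_+J_-^*=J_-J_+^*=1$, from Lemma \ref{adjoints of J_- and I{+-}}. The only difference is in the second pair of identities. The paper avoids recomputing by applying the first half to the reversed transfer pair $(J_-^*,J_+^*)$ from $T_2$ to $T_1$, whose associated operators are $X_+^*$ and $-X_-^*$, whereas you expand directly; both work, and your Step 4 spells out the closed-range/dense-range argument that the paper leaves implicit.
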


\begin{proof}
	For any $f,g\in\mathscr{H},$ it holds that 
	\begin{align*}
		4\langle f,X_+^*X_+g\rangle&=4\langle X_+f,X_+g\rangle\\
		&=\langle J_-f,J_-g\rangle+\langle J_+f,J_-g\rangle+\langle J_-f,J_+g\rangle+\langle J_+f,J_+g\rangle.
	\end{align*}
By Lemma \ref{adjoints of J_- and I{+-}}, we have $J_+^*J_-=1.$
	Thus we deduce that
	\[
	4\langle f,X_+^*X_+g\rangle
	=\langle J_-f,J_-g\rangle+2\langle f,g\rangle+\langle J_+f,J_+g\rangle.
	\]
	Similarly, we obtain
	\[
	4\langle f,X_-^*X_-g\rangle
	=\langle J_-f,J_-g\rangle-2\langle f,g\rangle+\langle J_+f,J_+g\rangle.
	\]
	Hence
	\[
	4\langle f,(X_+^*X_+-X_-^*X_-)g\rangle=4\langle f,g\rangle,
	\]
	which implies that $X_+^*X_+-X_-^*X_-=1$.
	
	We next show $X_-^*X_+=X_+^*X_-.$
	For any $f,g\in\mathscr{H},$ it holds that 
	\begin{align*}
		4\langle f,X_-^*X_+g\rangle&=4\langle X_-f,X_+g\rangle\\
		&=\langle J_-f, J_-g\rangle-\langle  J_+f, J_-g\rangle+\langle  J_-f,J_+g\rangle-\langle J_+f,J_+g\rangle\\
		&=\langle J_-f,J_-g\rangle-\langle f,g\rangle+\langle f,g\rangle-\langle J_+f,J_+g\rangle\\
		&=\langle J_-f,J_-g\rangle-\langle J_+f,J_+g\rangle.
	\end{align*}
	Similarly, we obtain
	\[
	4\langle f,X_+^*X_-g\rangle=\langle J_-f,J_-g\rangle-\langle J_+f,J_+g\rangle
	=4\langle f,X_-^*X_+g\rangle,
	\]
	and thus we arrive at $X_-^*X_+=X_+^*X_-.$
	
	To prove the remaining two equalities, we first determine the adjoints of $X_+$ and $X_-.$
We have
	\[
	X_+^*=\frac{1}{2}\left(J_-^*+J_+^*\right)=\frac{1}{2}\left(J_+^*+J_-^*\right)
	\]
	and
	\[
	-X_{-}^*=-\frac{1}{2}\left(J_-^*-J_+^*\right)=\frac{1}{2}\left(J_+^*-J_-^*\right).
	\]
	By Lemma \ref{adjoints of J_- and I{+-}}, the operators $X_+^*$ and $-X_-^*$ correspond to $X_+$ and $X_-$ when $T_1$ and $T_2$ are interchanged.
	Thus, from the result of the first half, we obtain
	\[
	X_+X_+^*-X_-X_-^*=1,\qquad X_-X_+^*=X_+X_-^*.
	\]
	
	Finally, we demonstrate that $X_+$ has a bounded inverse.
	The equality $X_+^*X_+=1+X_-^*X_-$ shows that $X_+$ is injective.
	On the other hand, the equality $X_+X_+^*=1+X_-X_-^*$ shows that $X_+$ is surjective.
	This completes the proof.
\end{proof}

\section{Proof of Theorem \ref{main thm}}\label{sect;proof of main}
Throughout the proof, we write $\phi_{T,0}$, $\pi_{T,0}$ and $\rho_{T,0,0}$ simply as $\phi_{T}$, $\pi_{T}$ and $\rho_{T}$, respectively.
With this notation, we have
\[
\phi_{T,q}(f)=\phi_T(f)+q(f),\qquad\pi_{T,p}(f)=\pi_T(f)+p(f),\qquad\forall f\in\mathscr{V}.
\]

We first prove the ``if'' part.
We use the theory of Bogoliubov transformations.
Suppose that a transfer pair $(J_+,J_-)$ from $T_1$ to $T_2$ with respect to $(C,\mathscr{V})$ exists.
By Lemma \ref{improved lemma 5.8}, we obtain the following four equalities:
\[
X_{+}^*X_{+}-X_-^*X_-=1,\qquad X_-^*CX_+C=X_{+}^*CX_-C
\]
and
\[
X_+X_+^*-CX_-X_-^*C=1,\qquad CX_-X_+^*C=X_+X_-^*.
\]
For each $f\in\mathscr{H},$ we define an operator $B(f)$ acting in $\mathscr{F}_{\mathrm{b}}(\mathscr{H})$ by
\[
B(f):=\overline{A(X_+f)+A(CX_-f)^*}.
\]
The correspondence
\[
\{A(f),A(f)^*\mid f\in\mathscr{H}\} \mapsto \{B(f),B(f)^*\mid f\in\mathscr{H}\}
\]
is called a \textit{Bogoliubov transformation}.
It is known that there exists a unitary operator $U$ on $\mathscr{F}_{\mathrm{b}}(\mathscr{H})$ such that
\begin{equation}\label{bogo unitary def}
	UA(f)U^*=B(f),\qquad\forall f\in\mathscr{H}
\end{equation}
if and only if $X_-$ is Hilbert--Schmidt (see  \cite[Chapter 2]{MR2964269}, \cite{MR516713}, or \cite[Theorem XI. 108]{MR529429}).

We now suppose that the difference $J_+-J_-,$ and consequently $X_-,$ is Hilbert--Schmidt.
Then there exists a unitary operator $U$ that satisfies \eqref{bogo unitary def}.
For any $f\in\mathscr{H}_C,$ it holds that
\begin{align*}
	\Phi_{\mathrm{S}}(f)
	&=\frac{1}{\sqrt{2}}\overline{[A(f)+A(f)^*]}
	\supset\frac{1}{\sqrt{2}}U^*\Big[B(f)+B(f)^*\Big]\restriction_{\mathscr{F}_\mathrm{b,0}(\mathscr{H})}U\\
	&=\frac{1}{\sqrt{2}}U^*\Big[A\left(J_-f\right)+A\left(J_-f\right)^*\Big]\restriction_{\mathscr{F}_\mathrm{b,0}(\mathscr{H})}U\\
	&=U^*\Phi_{\mathrm{S}}\left(J_-f\right)\restriction_{\mathscr{F}_\mathrm{b,0}(\mathscr{H})}U.
\end{align*}
By taking the closure, we deduce
\[
\Phi_{\mathrm{S}}(f)=U^*\Phi_{\mathrm{S}}\left(J_-f\right)U,\qquad \forall f\in\mathscr{H}_C.
\]
Replacing $f$ with $T_1^{-1}f$ for $f\in\mathscr{V},$ we obtain
\[
\phi_{T_1}(f)=\Phi_{\mathrm{S}}(T_1^{-1}f)
=U^*\Phi_{\mathrm{S}}(T_2^{-1}f)U
=U^*\phi_{T_2}(f)U,\qquad \forall f\in\mathscr{V}.
\]
Similarly, it follows that
\[
\Phi_{\mathrm{S}}(\mathrm{i}f)=U^*\Phi_{\mathrm{S}}\left(\mathrm{i}J_+f\right)U,\qquad \forall f\in\mathscr{H}_C.
\]
Replacing $f$ with $T_1f$ for $f\in\mathscr{V},$ we arrive at
\[
\pi_{T_1}(f)=\Phi_{\mathrm{S}}(\mathrm{i}T_1f)=U^*\Phi_{\mathrm{S}}(\mathrm{i}T_2f)U=U^*\pi_{T_2}(f)U,\qquad \forall f\in\mathscr{V}.
\]
Hence $\rho_{T_1}$ is equivalent to $\rho_{T_2}.$ 

We further suppose that there exist vectors $h_q$ and $h_p$ in $\mathscr{H}_C$ that satisfy \eqref{q,p equivalence rel}, and show that $\rho_{T_1,q_1,p_1}$ is equivalent to $\rho_{T_2,q_2,p_2}$.
Let $V:=\mathrm{e}^{-\mathrm{i}\Phi_\mathrm{S}(h_p-\mathrm{i}h_q)}.$
Note that $V$ is a unitary operator.
It follows from Lemma \ref{lem;ccr_for_field_ops} that
\[
V^*\phi_{T_1}(f)V=\phi_{T_1}(f)-\langle h_q,T_1^{-1}f\rangle=\phi_{T_1}(f)+q_1(f)-q_2(f)
\] 
and
\[
V^*\pi_{T_1}(f)V=\pi_{T_1}(f)-\langle h_p,T_1f\rangle=\pi_{T_1}(f)+p_1(f)-p_2(f)
\]
for all $f\in\mathscr{V},$ which imply that
\[
V^*U^*\phi_{T_2,q_2}(f)UV=\phi_{T_1,q_1}(f),\qquad V^*U^*\pi_{T_2,p_2}(f)UV=\pi_{T_1,p_1}(f),\qquad\forall f\in\mathscr{V}.
\]
Thus, $\rho_{T_1,q_1,p_1}$ is equivalent to $\rho_{T_2,q_2,p_2}.$

We next prove the ``only if'' part.
Suppose that there exists a unitary operator $U$ on $\mathscr{F}_{\mathrm{b}}(\mathscr{H})$ that satisfies \eqref{def of equiv of reprs rhoT1q1pq and rhoT2q2p2}.
Let $q:=q_2-q_1$ and $p:=p_2-p_1.$
We first observe that
\begin{equation}\label{Phi_S(T_1^{-1}(g))=}
	U\Phi_{\mathrm{S}}(T_1^{-1}g)U^*+q_1(g)=U\phi_{T_1,q_1}(g)U^*=\phi_{T_2,q_2}(g)=\Phi_{\mathrm{S}}(T_2^{-1}g)+q_2(g),\qquad\forall g\in \mathscr{V}.
\end{equation}
Since the real subspace $T_1^{-1}\mathscr{V}$ is dense in $\mathscr{H}_C,$ 
for any $f\in\mathscr{H}_C,$ we can choose a sequence $\{g_n\}_{n=1}^\infty$ in $\mathscr{V}$ such that
$\{T_1^{-1}g_n\}_{n=1}^\infty$ converges to $f.$
Then it follows from \eqref{Phi_S(T_1^{-1}(g))=} and Lemma \ref{lem;conti_of_field_op} that
\begin{align*}
	&\mathrm{e}^{-\|T_2^{-1}(g_m-g_n)\|^2/4}
	=|\mathrm{e}^{-\|T_2^{-1}(g_m-g_n)\|^2/4}|
	=|\langle\Omega_\mathscr{H},\mathrm{e}^{\mathrm{i}\Phi_\mathrm{S}(T_2^{-1}(g_m-g_n))}\Omega_\mathscr{H}\rangle|\\
	&\qquad=|\langle U^*\Omega_\mathscr{H},\mathrm{e}^{\mathrm{i}\Phi_\mathrm{S}(T_1^{-1}(g_m-g_n))}U^*\Omega_\mathscr{H}\rangle\mathrm{e}^{-\mathrm{i}q(g_m-g_n)}|
	=|\langle U^*\Omega_\mathscr{H},\mathrm{e}^{\mathrm{i}\Phi_\mathrm{S}(T_1^{-1}(g_m-g_n))}U^*\Omega_\mathscr{H}\rangle|\\
	&\qquad=|\langle \mathrm{e}^{-\mathrm{i}\Phi_\mathrm{S}(T_1^{-1}g_m-f)}U^*\Omega_\mathscr{H},\mathrm{e}^{\mathrm{i}\Phi_\mathrm{S}(f-T_1^{-1}g_n)}U^*\Omega_\mathscr{H}\rangle|
	\xrightarrow{m,n\to\infty}1.
\end{align*}
Thus $\{T_2^{-1}g_n\}_{n=1}^\infty$ is a Cauchy sequence, and hence it has a limit.
We show that the limit does not depend on the choice of the sequence $\{g_n\}_{n=1}^\infty.$
For this, take an arbitrary sequence $\{h_n\}_{n=1}^\infty$ in $\mathscr{V}$ such that $\{T_1^{-1}h_n\}_{n=1}^\infty$ converges to $f.$
By the same argument as above, we have
\begin{align*}
	\mathrm{e}^{-\|T_2^{-1}(g_n-h_n)\|^2/4}
	=|\langle \mathrm{e}^{-\mathrm{i}\Phi_\mathrm{S}(T_1^{-1}g_n-f)}U^*\Omega_\mathscr{H},\mathrm{e}^{\mathrm{i}\Phi_\mathrm{S}(f-T_1^{-1}h_n)}U^*\Omega_\mathscr{H}\rangle|
	\xrightarrow{n\to\infty}1,
\end{align*}
whence the limit of $\{T_2^{-1}h_n\}_{n=1}^\infty$ coincides with that of $\{T_2^{-1}g_n\}_{n=1}^\infty.$

We next show that $\{q(g_n)\}_{n=1}^\infty$ is a Cauchy sequence and its limit does not depend on the choice of $\{g_n\}_{n=1}^\infty.$
By an argument similar to the above, we see that
\begin{align*}
\mathrm{e}^{\mathrm{i}tq(g_m-g_n)}
&=\mathrm{e}^{t^2\|T_2^{-1}(g_m-g_n)\|^2/4}\langle U^*\Omega_\mathscr{H},\mathrm{e}^{\mathrm{i}t\Phi_\mathrm{S}(T_1^{-1}(g_m-g_n))}U^*\Omega_\mathscr{H}\rangle\\
&=\mathrm{e}^{t^2\|T_2^{-1}(g_m-g_n)\|^2/4}\langle \mathrm{e}^{-\mathrm{i}t\Phi_\mathrm{S}(T_1^{-1}g_m-f)}U^*\Omega_\mathscr{H},\mathrm{e}^{\mathrm{i}t\Phi_\mathrm{S}(f-T_1^{-1}g_n)}U^*\Omega_\mathscr{H}\rangle
\xrightarrow{m,n\to\infty}1,
\end{align*}
for all $t\in\mathbb{R}$.
This, together with Lemma \ref{lem;cauchy}, implies that $\{q(g_n)\}_{n=1}^\infty$ is a Cauchy sequence.
To show that the limit does not depend on the choice of $\{g_n\}_{n=1}^\infty,$ take an arbitrary sequence $\{h_n\}_{n=1}^\infty$ in $\mathscr{V}$ such that $\{T_1^{-1}h_n\}_{n=1}^\infty$ converges to $f$.
Then
\[
\mathrm{e}^{\mathrm{i}tq(g_n-h_n)}
=\mathrm{e}^{t^2\|T_2^{-1}(g_n-h_n)\|^2/4}\langle \mathrm{e}^{-\mathrm{i}t\Phi_\mathrm{S}(T_1^{-1}g_n-f)}U^*\Omega_\mathscr{H},\mathrm{e}^{\mathrm{i}t\Phi_\mathrm{S}(f-T_1^{-1}h_n)}U^*\Omega_\mathscr{H}\rangle
\xrightarrow{n\to\infty}1
\]
for all $t\in\mathbb{R},$ whence the limit of $q(g_n)$ coincides with that of $q(h_n).$

We define
\[
J_-f:=\lim_{n\to\infty}T_2^{-1}g_n\in\mathscr{H}_C,\qquad Q(f):=\lim_{n\to\infty}q(g_n)\in\mathbb{R},\qquad f\in\mathscr{H}_C.
\]
Then $J_-$ is a real-linear operator on $\mathscr{H}_C,$ and it satisfies
\begin{equation}\label{J_-T_1^{-1}g=T_2^{-1}g}
J_-T_1^{-1}g=T_2^{-1}g,\qquad\forall g\in\mathscr{V}.
\end{equation}
On the other hand, $Q$ is a real-linear functional on $\mathscr{H}_C,$ and it satisfies
\begin{equation}\label{Q(T_1^{-1}g)}
	Q(T_1^{-1}g)=q(g),\qquad\forall g\in\mathscr{V}.
\end{equation}
Moreover, for any $t\in\mathbb{R},$ it follows from \eqref{Phi_S(T_1^{-1}(g))=} that
\begin{equation*}\label{before subs1}
	U\mathrm{e}^{\mathrm{i}t\Phi_{\mathrm{S}}(f)}U^*
	=\mathrm{s}\textrm{-}\!\!\!\lim_{n\to\infty}U\mathrm{e}^{\mathrm{i}t\Phi_{\mathrm{S}}(T_1^{-1}g_n)}U^*
	=\mathrm{s}\textrm{-}\!\!\!\lim_{n\to\infty}\mathrm{e}^{\mathrm{i}t\Phi_{\mathrm{S}}(T_2^{-1}g_n)}\mathrm{e}^{\mathrm{i}tq(g_n)}
	=\mathrm{e}^{\mathrm{i}t\Phi_{\mathrm{S}}(J_-f)}\mathrm{e}^{\mathrm{i}tQ(f)}.
\end{equation*}
Thus we obtain
\begin{equation}\label{eq2}
	U\Phi_{\mathrm{S}}(f)U^*=\Phi_{\mathrm{S}}\left(J_-f\right)+Q(f),\qquad\forall f\in\mathscr{H}_C.
\end{equation}
This in particular implies that $J_-$ and $Q$ are continuous.
Indeed, for any sequence $\{f_n\}_{n=1}^\infty$ in $\mathscr{H}_C$ that converges to some $f\in\mathscr{H}_C,$ we have
\[
\mathrm{e}^{-\|J_-(f_n-f)\|^2/4}
=|\langle\Omega_\mathscr{H},\mathrm{e}^{\mathrm{i}\Phi_\mathrm{S}(J_-(f_n-f))}\Omega_\mathscr{H}\rangle|
=|\langle U^*\Omega_\mathscr{H},\mathrm{e}^{\mathrm{i}\Phi_\mathrm{S}(f_n-f)}U^*\Omega_\mathscr{H}\rangle|
\xrightarrow{n\to\infty}1,
\]
and thus $\{J_-f_n\}_{n=1}^\infty$ converges to $J_-f.$
This means that $J_-$ is continuous.
Moreover, it holds that
\[
\mathrm{e}^{\mathrm{i}tQ(f_n-f)}
=\mathrm{e}^{t^2\|J_-(f_n-f)\|^2/4}\langle U^*\Omega_\mathscr{H},\mathrm{e}^{\mathrm{i}t\Phi_\mathrm{S}(f_n-f)}U^*\Omega_\mathscr{H}\rangle
\xrightarrow{n\to\infty}1
\]
for all $t\in\mathbb{R}.$
This, together with Lebesgue's dominated convergence theorem, yields that
\[
[Q(f_n-f)+\mathrm{i}]^{-1}=-\mathrm{i}\int_{0}^\infty\mathrm{e}^{\mathrm{i}t[Q(f_n-f)+\mathrm{i}]}\,\mathrm{d}t
\xrightarrow{n\to\infty}-\mathrm{i}.
\]
Thus, $\{Q(f_n)\}_{n=1}^\infty$ converges to $Q(f),$ whence $Q$ is continuous.

By the Riesz representation theorem, we find a vector $h_q\in\mathscr{H}_C$ that satisfies 
\[
Q(f)=\langle h_q,f\rangle,\qquad\forall f\in\mathscr{H}_C.
\]
This, combined with \eqref{Q(T_1^{-1}g)}, yields that
\begin{equation}\label{q_2-q_1=}
	q_2(g)-q_1(g)=\langle h_q,T_1^{-1}g\rangle,\qquad\forall g\in\mathscr{V}.
\end{equation}
Moreover, the first part of \eqref{Uphi(f)U^*_and_Uphi(if)U^} follows from \eqref{eq2}.

On the other hand, since every $f\in\mathscr{H}$ can be decomposed as $f=f_1+\mathrm{i}f_2$, where $f_1$ and $f_2$ are vectors in $\mathscr{H}_C$ defined by \eqref{def of f_1 and f_2},
we define
\[
J_-f:=J_-f_1+\mathrm{i}J_-f_2.
\]
Summing up the above arguments, we have constructed a complex-linear bounded operator $J_-$ on $\mathscr{H}$ and a vector $h_q\in\mathscr{H}_C$ that satisfy \eqref{J_-T_1^{-1}g=T_2^{-1}g}, \eqref{eq2}, and \eqref{q_2-q_1=}.

An argument similar to the above, based on the equality
\begin{equation}\label{Phi_S(T_1(g))=}
	U\Phi_{\mathrm{S}}(\mathrm{i}T_1g)U^*+p_1(g)=U\pi_{T_1,p_1}(g)U^*=\pi_{T_2,p_2}(g)=\Phi_{\mathrm{S}}(\mathrm{i}T_2g)+p_2(g),\qquad\forall g\in \mathscr{V},
\end{equation}
shows that there exist a complex-linear bounded operator $J_+$ on $\mathscr{H}$ and a vector $h_p\in\mathscr{H}_C$ satisfying
\begin{equation}\label{J_+T_1g=T_2g}
	J_+T_1g=T_2g,\qquad p_2(g)-p_1(g)=\langle h_p,T_1g\rangle,\qquad\forall g\in\mathscr{V}
\end{equation}
and
\begin{equation}\label{eq2.5}
U\Phi_{\mathrm{S}}(\mathrm{i}f)U^*=\Phi_{\mathrm{S}}\left(\mathrm{i}J_+f\right)+\langle h_p,f\rangle,\qquad\forall f\in\mathscr{H}_C.
\end{equation}
In particular, $(J_+,J_-)$ is a transfer pair from $T_1$ to $T_2$ with respect to $(C,\mathscr{V})$.

In the following, we prove that $J_+-J_-$ is Hilbert--Schmidt.
The following argument closely follows the proof of \cite[Lemma 5.12]{MR3513945}.
Let $W:=\mathrm{e}^{\mathrm{i}\Phi_\mathrm{S}(J_-h_p-\mathrm{i}J_+h_q)}U$.
Then, it follows from \eqref{eq2}, \eqref{eq2.5}, and Lemma \ref{lem;ccr_for_field_ops} that
\[
W\Phi_{\mathrm{S}}(f)W^*=\Phi_{\mathrm{S}}(J_-f),\qquad W\Phi_{\mathrm{S}}(\mathrm{i}f)W^*=\Phi_{\mathrm{S}}\left(\mathrm{i}J_+f\right),\qquad\forall f\in\mathscr{H}_C.
\]
From this, for any $f\in\mathscr{H}_C$ and $\Psi\in\mathscr{F}_\mathrm{b,0}(\mathscr{H}),$ we have
\begin{align*}
WA(f)\Psi&=W\frac{1}{\sqrt{2}}\left[\Phi_{\mathrm{S}}(f)+\mathrm{i}\Phi_{\mathrm{S}}(\mathrm{i}f)\right]\Psi\\
&=\frac{1}{\sqrt{2}}\left[\Phi_{\mathrm{S}}\left(J_-f\right)+\mathrm{i}\Phi_{\mathrm{S}}\left(\mathrm{i}J_+f\right)\right]W\Psi.
\end{align*}
Thus, for all $\Phi\in\mathscr{F}_\mathrm{b,0}(\mathscr{H}),$ we obtain
\begin{equation}\label{inner product comp1}
\langle W^*\Phi,A(f)\Psi\rangle=\langle W^*A(X_+f)^*\Phi+W^*A(CX_-f)\Phi,\Psi\rangle,
\end{equation}
where $X_+$ and $X_-$ are defined in Lemma \ref{improved lemma 5.8}.
Every $f\in\mathscr{H}$ can be decomposed as $f=f_1+\mathrm{i}f_2$, where $f_1$ and $f_2$ are vectors in $\mathscr{H}_C$ defined by \eqref{def of f_1 and f_2}.
This, combined with equality \eqref{inner product comp1}, yields that
\begin{align}\label{inner product comp2}
	\langle W^*\Phi,A(f)\Psi\rangle
	&=\langle W^*\Phi,A(f_1)\Psi\rangle-\mathrm{i}\langle W^*\Phi,A(f_2)\Psi\rangle\notag\\
	&=\langle W^*A(X_+f_1)^*\Phi+W^*A(CX_-f_1)\Phi,\Psi\rangle\notag\\
	&\qquad\qquad\qquad-\mathrm{i}\langle W^*A(X_+f_2)^*\Phi+W^*A(CX_-f_2)\Phi,\Psi\rangle\notag\\
	&=\langle W^*A(X_+f)^*\Phi+W^*A(CX_-f)\Phi,\Psi\rangle.
\end{align}
Since $\mathscr{F}_\mathrm{b,0}(\mathscr{H})$ is a core for $A(f)$, equality \eqref{inner product comp2} holds for all $\Psi\in \operatorname{dom}(A(f)).$
This implies that $W^*\Phi\in \operatorname{dom}(A(f)^*)$ and
\[
A(f)^*W^*\Phi=W^*A(X_+f)^*\Phi+W^*A(CX_-f)\Phi.
\]
Recall that $\Omega_{\mathscr{H}}$ is the Fock vacuum, which satisfies $A(f)\Omega_{\mathscr{H}}=0.$
From this, we have
\[
0=\langle A(f)^*W^*\Phi,\Omega_{\mathscr{H}}\rangle
=\langle A(X_+f)^*\Phi,W\Omega_{\mathscr{H}}\rangle+\langle A(CX_-f)\Phi,W\Omega_{\mathscr{H}}\rangle,
\]
and thus, by setting $\Omega:=W\Omega_\mathscr{H},$ we obtain
\[
\langle A(X_+f)^*\Phi,\Omega\rangle=-\langle A(CX_-f)\Phi,\Omega\rangle.
\]
Since it follows from Lemma \ref{improved lemma 5.8} that $X_+$ has a bounded inverse, replacing $f$ with $X_+^{-1}f$, we obtain
\[
\langle A(f)^*\Phi,\Omega\rangle=-\langle A(CX_-X_+^{-1}f)\Phi,\Omega\rangle,\qquad\forall f\in\mathscr{H},\ \Phi\in\mathscr{F}_\mathrm{b,0}(\mathscr{H}).
\]
By \cite[Proposition 3.3]{MR3513945} or \cite[Proposition 8.4]{MR4292535}, we conclude that $X_-X_+^{-1},$ and thus $X_-,$ is Hilbert--Schmidt.
Therefore $J_+-J_-$ is Hilbert--Schmidt as well.
This completes the proof.

\section{Proof of Theorem \ref{main thm2}}\label{sect;proof of main2}

We first prove the ``only if'' part.
Suppose that $\mathbb{M}(S_1,g_1,T_1,q_1,p_1)$ is equivalent to $\mathbb{M}(S_2,g_2,T_2,q_2,p_2)$.
It follows from Theorem~\ref{main thm} that there exist a transfer pair $(J_+,J_-)$ from $T_1$ to $T_2$ with respect to $(C,\mathscr{V})$ and $h_q,h_p\in\mathscr{H}_C$ such that $J_+-J_-$ is Hilbert--Schmidt and \eqref{q,p equivalence rel} holds.

We begin by proving that $J_+$ is unitary,  $J_+=J_-$ and $J_+S_1J_+^*=S_2$.
By Lemma~\ref{lem;trans_of_field_op_by_vH_hamiltonian}, we have
\[
\mathrm{e}^{\mathrm{i}tH_{S_j}(g_j)}\pi_{T_j}(f)\mathrm{e}^{-\mathrm{i}tH_{S_j}(g_j)}
=\Phi_\mathrm{S}(\mathrm{i}\mathrm{e}^{\mathrm{i}tS_j}T_jf)+\operatorname{Im}\left\langle S_j^{-1}(\mathrm{e}^{\mathrm{i}tS_j}-1)T_jf,g_j\right\rangle
\]
for all $t\in\mathbb{R}$, $f\in\mathscr{V}$, and $j=1,2$.
Thus, for any $t\in\mathbb{R}$ and $f\in\mathscr{V},$ we have
\begin{align*}
	U\mathrm{e}^{\mathrm{i}tH_{S_1}(g_1)}\pi_{T_1,p_1}(f)\mathrm{e}^{-\mathrm{i}tH_{S_1}(g_1)}U^*
	&=\mathrm{e}^{\mathrm{i}tH_{S_2}(g_2)}\pi_{T_2,p_2}(f)\mathrm{e}^{-\mathrm{i}tH_{S_2}(g_2)}\\
	&=\Phi_\mathrm{S}(\mathrm{i}\mathrm{e}^{\mathrm{i}tS_2}T_2f)+\operatorname{Im}\left\langle S_2^{-1}(\mathrm{e}^{\mathrm{i}tS_2}-1)T_2f,g_2\right\rangle+p_2(f).
\end{align*}
On the other hand, by \eqref{Uphi(f)U^*_and_Uphi(if)U^}, we compute that
\begin{align*}
	&U\mathrm{e}^{\mathrm{i}tH_{S_1}(g_1)}\pi_{T_1,p_1}(f)\mathrm{e}^{-\mathrm{i}tH_{S_1}(g_1)}U^*\\
	&=U\Phi_\mathrm{S}(\mathrm{i}\mathrm{e}^{\mathrm{i}tS_1}T_1f)U^*+\operatorname{Im}\left\langle S_1^{-1}(\mathrm{e}^{\mathrm{i}tS_1}-1)T_1f,g_1\right\rangle+p_1(f)\\
	&=\Phi_\mathrm{S}(-J_-\sin(tS_1)T_1f+\mathrm{i}J_+\cos(tS_1)T_1f)+\operatorname{Im}\left\langle S_1^{-1}(\mathrm{e}^{\mathrm{i}tS_1}-1)T_1f,g_1\right\rangle+p_1(f)\\
	&\qquad\qquad\qquad\qquad\qquad\qquad\qquad\qquad\qquad+\langle h_q,-\sin(tS_1)T_1f\rangle+\langle h_p,\cos(tS_1)T_1f\rangle.
\end{align*}
Hence, we obtain
\begin{equation}\label{vH prf eq1}
	\mathrm{i}\mathrm{e}^{\mathrm{i}tS_2}T_2f=-J_-\sin(tS_1)T_1f+\mathrm{i}J_+\cos(tS_1)T_1f
\end{equation}
and
\begin{align}\label{vH prf eq2}
	&\operatorname{Im}\left\langle S_2^{-1}(\mathrm{e}^{\mathrm{i}tS_2}-1)T_2f,g_2\right\rangle\notag\\
	&\qquad=\operatorname{Im}\left\langle S_1^{-1}(\mathrm{e}^{\mathrm{i}tS_1}-1)T_1f,g_1\right\rangle+\langle h_q,-\sin(tS_1)T_1f\rangle+\langle h_p,[\cos(tS_1)-1]T_1f\rangle
\end{align}
for all $t\in\mathbb{R}$ and $f\in\mathscr{V}.$
Using Lemma \ref{C and J_pm commute}, which allows us to compare the imaginary parts of both sides of \eqref{vH prf eq1}, we have
\[
\cos(tS_2)T_2f=J_+\cos(tS_1)T_1f,\qquad\forall f\in\mathscr{V}.
\]
Taking the inner products of both sides with $g\in\mathscr{H}$, we obtain
\[
\int_0^\infty\cos(u\lambda)\,\mathrm{d}\langle g,E_{S_2}(\lambda)T_2f\rangle
=\int_0^\infty\cos(u\lambda)\,\mathrm{d}\langle J_+^*g,E_{S_1}(\lambda)T_1f\rangle,\qquad\forall u\in\mathbb{R},
\]
where $S_j=\int_0^\infty\lambda\,\mathrm{d}E_{S_j}(\lambda)$ denotes the spectral resolution of $S_j$ for each $j=1,2.$
Integrating both sides from $0$ to $t\in\mathbb{R}$ with respect to $u$, we have
\[
\langle g,\operatorname{sinc}(tS_2)J_+T_1f\rangle=\langle g,J_+\operatorname{sinc}(tS_1)T_1f\rangle,\qquad\forall t\in\mathbb{R},\ f\in\mathscr{V},\ g\in\mathscr{H},
\]
whence
\[
J_+\operatorname{sinc}(tS_1)=\operatorname{sinc}(tS_2)J_+,\qquad\forall t\in\mathbb{R}.
\] 
By Lemma \ref{sinc eq}, we deduce that $J_+S_1\subset S_2J_+$.

Taking the real parts of \eqref{vH prf eq1}, we obtain
\[
\sin(tS_2)T_2f=J_-\sin(tS_1)T_1f,\qquad\forall f\in\mathscr{V}.
\]
This, together with the fact that $T_2f=J_+T_1f$, implies that $J_-\sin(tS_1)=\sin(tS_2)J_+$.
By Lemma \ref{sin eq}, we have $J_-S_1\subset S_2J_+$.
As we have already shown that $J_+S_1\subset S_2J_+,$ we obtain $J_+=J_-$.
Since $J_-^{-1}=J_+^*$ by Lemma~\ref{adjoints of J_- and I{+-}}, we conclude that $J_+$ is unitary.
Moreover, it follows from $J_+S_1\subset S_2J_+$ that $J_+S_1J_+^*=S_2$.

We next show that $J_+^*g_2-g_1$ belongs to $\operatorname{dom}(S_1^{-1})$.
Let
\[
\operatorname{Re}(g_j):=\frac{1}{2}(g_j+Cg_j),\qquad \operatorname{Im}(g_j):=\frac{1}{2\mathrm{i}}(g_j-Cg_j),\qquad j=1,2.
\]
Note that, by Lemma~\ref{intertwining Borel}, we have
\[
S_2^{-1}(\mathrm{e}^{\mathrm{i}tS_2}-1)T_2f=J_+S_1^{-1}(\mathrm{e}^{\mathrm{i}tS_1}-1)T_1f,\qquad\forall f\in\mathscr{V}.
\]
By \eqref{vH prf eq2} and the density of $T_1\mathscr{V}$ in $\mathscr{H}_C$, we obtain
\begin{align}\label{comp of imaginary part}
	&S_1^{-1}[\cos(tS_1)-1]J_+^*\operatorname{Im}(g_2)-S_1^{-1}\sin(tS_1)J_+^*\operatorname{Re}(g_2)\notag\\
	&=S_1^{-1}[\cos(tS_1)-1]\operatorname{Im}(g_1)-S_1^{-1}\sin(tS_1)\operatorname{Re}(g_1)-\sin(tS_1)h_q+[\cos(tS_1)-1]h_p.
\end{align}
By dividing both sides by $t\not=0$ and letting $t\to0,$ we deduce that $h_q\in\operatorname{dom}(S_1)$ and
\begin{equation}\label{J_+^*Reg_2}
J_+^*\operatorname{Re}(g_2)=\operatorname{Re}(g_1)+S_1h_q.
\end{equation}
Thus, $J_+^*\operatorname{Re}(g_2)-\operatorname{Re}(g_1)$ belongs to $\operatorname{dom}(S_1^{-1})$ and
\[
q_2(f)-q_1(f)
=\langle h_q,T_1^{-1}f\rangle
=\langle S_1^{-1}[J_+^*\operatorname{Re}(g_2)-\operatorname{Re}(g_1)],T_1^{-1}f\rangle,\qquad\forall f\in\mathscr{V}
\]
holds.
On the other hand, substituting \eqref{J_+^*Reg_2} into \eqref{comp of imaginary part}, we obtain
\[
S_1^{-1}[\cos(tS_1)-1]J_+^*\operatorname{Im}(g_2)
=S_1^{-1}[\cos(tS_1)-1]\operatorname{Im}(g_1)+[\cos(tS_1)-1]h_p.
\]
Multiplying both sides from the left by $t^{-2}S_1^{-1}$ with $t\not=0$ and letting $t\to0$, we deduce that $h_p\in\operatorname{dom}(S_1)$ and
\[
J_+^*\operatorname{Im}(g_2)=\operatorname{Im}(g_1)+S_1h_p.
\]
Thus, $J_+^*\operatorname{Im}(g_2)-\operatorname{Im}(g_1)$ belongs to $\operatorname{dom}(S_1^{-1})$ and
\[
p_2(f)-p_1(f)
=\langle h_p,T_1f\rangle
=\langle S_1^{-1}[J_+^*\operatorname{Im}(g_2)-\operatorname{Im}(g_1)],T_1f\rangle,\qquad\forall f\in\mathscr{V}
\]
holds.
Summing up the above arguments, we conclude that $J_+^*g_2-g_1$ belongs to $\operatorname{dom}(S_1^{-1})$ and
\[
q_2(f)-q_1(f)
=\operatorname{Re}\langle S_1^{-1}(J_+^*g_2-g_1),T_1^{-1}f\rangle,\quad
p_2(f)-p_1(f)
=-\operatorname{Im}\langle S_1^{-1}(J_+^*g_2-g_1),T_1f\rangle
\]
hold for all $f\in\mathscr{V}$.
Therefore, letting $W:=J_+$, we obtain the desired result.

We next prove the ``if'' part.
Let
\[
h_q:=S_1^{-1}[W^*\operatorname{Re}(g_2)-\operatorname{Re}(g_1)]\,\qquad h_p:=S_1^{-1}[W^*\operatorname{Im}(g_2)-\operatorname{Im}(g_1)],
\]
and let $V:=\mathrm{e}^{\mathrm{i}\Phi_\mathrm{S}(-h_p+\mathrm{i}h_q)}$.
Note that $h_p,h_q\in\mathscr{H}_C$.
It follows from Lemma \ref{lem;ccr_for_field_ops} that
\[
V\phi_{T_1,q_1}(f)V^*=\phi_{T_1}(f)+\langle h_q,T_1^{-1}f\rangle+q_1(f)=\phi_{T_1}(f)+q_2(f)
\] 
and
\[
V\pi_{T_1,p_1}(f)V^*=\pi_{T_1}(f)+\langle h_p,T_1f\rangle+p_1(f)=\pi_{T_1}(f)+p_2(f)
\]
for all $f\in\mathscr{V}$.
Moreover, by Lemma \ref{lem;trans_of_second_quant_op_by_field_op} and Lemma \ref{lem;ccr_for_field_ops}, we have
\[
VH_{S_1}(g_1)V^*=H_{S_1}(W^*g_2)+E,
\]
where $E$ is some real number.
Define a unitary operator $\Gamma_\mathrm{b}(W)$ on $\mathscr{F}_\mathrm{b}(\mathscr{H})$ by
\[
\Gamma_\mathrm{b}(W):=\bigoplus_{n=0}^\infty \otimes^nW,
\]
where $\otimes^0 W:=1$.
The unitary operator $U:=\Gamma_\mathrm{b}(W)V$ then satisfies \eqref{def of equiv of reprs rhoT1q1pq and rhoT2q2p2} and
\[
UH_{S_1}(g_1)U^*=H_{S_2}(g_2)+E.
\]
Hence $\mathbb{M}(S_1,g_1,T_1,q_1,p_1)$ is equivalent to $\mathbb{M}(S_2,g_2,T_2,q_2,p_2)$.
This completes the proof.

\section{Proof of Corollary \ref{cor of main thm2}}\label{set;proof of cor}

We first consider the case where $T_1$ and $T_2$ are non-negative, and where $\mathscr{V}+\mathrm{i}\mathscr{V}$ is a core for $T_1$ and $T_2$.
By Theorem \ref{main thm2}, we have
\[
\|T_1f\|=\|WT_1f\|=\|T_2f\|,\qquad \forall f\in\mathscr{V}.
\]
This leads to
\[
\|T_1f\|=\|T_2f\|,\qquad \forall f\in\mathscr{V}+\mathrm{i}\mathscr{V}.
\]
Since $\mathscr{V}+\mathrm{i}\mathscr{V}$ is a core for $T_1$ and $T_2$, a limiting argument yields that $\operatorname{dom}(T_1)=\operatorname{dom}(T_2)$ and
\[
\|T_1f\|=\|T_2f\|,\qquad \forall f\in\operatorname{dom}(T_1).
\] 
This, together with the polarization identity, implies that
\[
\langle f,T_1^2g\rangle=\langle T_2f,T_2g\rangle,\qquad\forall f\in\operatorname{dom}(T_2),\ g\in\operatorname{dom}(T_1^2).
\]
Thus, $T_1^2\subset T_2^2$.
By a similar argument, we have $T_2^2\subset T_1^2$, whence $T_1^2=T_2^2$.
Since $T_1$ and $T_2$ are non-negative, we obtain $T_1=T_2$.

We next consider the case where $T_1=S_1^{1/2}$ and $T_2=S_2^{1/2}$, and where $\mathscr{V}$ is dense in $\mathscr{H}_C$.
By Theorem \ref{main thm2}, we have
\begin{equation*}
W\sin(tS_1)S_1^{-1/2}f=\sin(tS_2)S_2^{-1/2}f,\qquad \forall t\in\mathbb{R},\ f\in\mathscr{V}.
\end{equation*}
From this, for any $f,g\in\mathscr{V}$, we obtain
\[
\langle S_2^{1/2}f,W\sin(tS_1)S_1^{-1/2}g\rangle
=\langle S_2^{1/2}f,\sin(tS_2)S_2^{-1/2}g\rangle
=\langle f,\sin(tS_2)g\rangle.
\]
On the other hand, it follows from Theorem \ref{main thm2} that
\begin{align*}
\langle S_2^{1/2}f,W\sin(tS_1)S_1^{-1/2}g\rangle
&=\langle W^*S_2^{1/2}f,\sin(tS_1)S_1^{-1/2}g\rangle\\
&=\langle S_1^{1/2}f,\sin(tS_1)S_1^{-1/2}g\rangle
=\langle f,\sin(tS_1)g\rangle.
\end{align*}
Since $\mathscr{V}$ is dense in $\mathscr{H}_C$, we obtain $\sin(tS_1)=\sin(tS_2)$.
By Lemma \ref{sin eq}, this implies that $S_1\subset S_2$, whence $S_1=S_2$.
Thus, $T_1=T_2$.

\section{Examples}\label{set;examples}

In this section, we apply our main results, Theorem \ref{main thm2} and Corollary \ref{cor of main thm2}, to concrete models and classify them up to equivalence.

\subsection{The abstract free Bose field model}
Let $S$ be an injective non-negative self-adjoint operator acting in $\mathscr{H}$ with $S^{1/2}\in\mathcal{S}_{C,\mathscr{V}}(\mathscr{H})$.
The model
\[
\mathbb{M}_{S}^{\mathrm{free}}:=\mathbb{M}(S,0,S^{1/2},0,0)
\]
is called an \textit{abstract free Bose field model} (see \cite[Example 10.2]{MR4292535} or \cite[Section 5.18 and p. 844]{MR4812858}).
Partial results on the classification of abstract free Bose field models are given in \cite[Example 10.2]{MR4292535}.
Here, we provide a complete classification under the assumption that $\mathscr{V}$ is dense in $\mathscr{H}_C$.

\begin{theorem}\label{thm;free Bose field} 
	Suppose that $\mathscr{V}$ is dense in $\mathscr{H}_C$. Let $\mathbb{M}_{S_1}^{\mathrm{free}}$ and $\mathbb{M}_{S_2}^{\mathrm{free}}$ be two abstract free Bose field models. 
	Then $\mathbb{M}_{S_1}^{\mathrm{free}}$ is equivalent to $\mathbb{M}_{S_2}^{\mathrm{free}}$ if and only if $S_1=S_2$. 
\end{theorem}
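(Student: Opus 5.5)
The plan is to read Theorem \ref{thm;free Bose field} as the special case $g_j=0$, $q_j=p_j=0$, $T_j=S_j^{1/2}$ of Theorem \ref{main thm2} and Corollary \ref{cor of main thm2}, and then argue the two directions separately.

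The ``if'' direction is trivial. If $S_1=S_2$, then the two models $\mathbb{M}_{S_1}^{\mathrm{free}}$ and $\mathbb{M}_{S_2}^{\mathrm{free}}$ are literally the same triple. Hence they are equivalent via $U=1$ and $E=0$.

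For the ``only if'' direction, suppose $\mathbb{M}(S_1,0,S_1^{1/2},0,0)$ is equivalent to $\mathbb{M}(S_2,0,S_2^{1/2},0,0)$. I first check that the hypotheses of these results are met in our setting:
\begin{itemize}
\item Each $S_j$ is injective, non-negative and self-adjoint.
\item $0\in\operatorname{dom}(S_j^{-1/2})$.
\item $S_j^{1/2}\in\mathcal{S}_{C,\mathscr{V}}(\mathscr{H})$ by assumption.
\end{itemize}
Theorem \ref{main thm2} also needs $CS_j\subset S_jC$. This does not seem to be stated explicitly, so I would derive it from $CS_j^{1/2}\subset S_j^{1/2}C$, which holds because $S_j^{1/2}\in\mathcal{S}_{C,\mathscr{V}}(\mathscr{H})$. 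Since $C$ is an anti-unitary involution, that inclusion yields $CS_j^{1/2}C=S_j^{1/2}$. Squaring gives $CS_jC=S_j$, that is, $CS_j\subset S_jC$.

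With the hypotheses in place, I apply the second alternative of Corollary \ref{cor of main thm2}: $T_j=S_j^{1/2}$ and $\mathscr{V}$ is dense in $\mathscr{H}_C$. It gives $S_1^{1/2}=T_1=T_2=S_2^{1/2}$. Squaring these non-negative self-adjoint operators, via the functional calculus, yields $S_1=S_2$.

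There is essentially no hard step here, since the heavy lifting was done in the proof of Corollary \ref{cor of main thm2}. The only point needing care is confirming that the standing hypothesis $CS\subset SC$ of the main results holds, which the previous paragraph handles.
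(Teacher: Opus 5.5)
Your proposal is correct and matches the paper's approach: the paper proves this theorem simply by invoking Theorem~\ref{main thm2} together with the second case of Corollary~\ref{cor of main thm2}, with the ``if'' direction being trivial. Your extra verification that $CS_j\subset S_jC$ follows from $S_j^{1/2}\in\mathcal{S}_{C,\mathscr{V}}(\mathscr{H})$ is a valid detail that the paper leaves implicit here and notes only in the remark on quadratic interaction models. For the equality $CS_j^{1/2}C=S_j^{1/2}$, you also need that a self-adjoint operator contained in another self-adjoint operator equals it, or equivalently that $C$ maps $\operatorname{dom}(S_j^{1/2})$ onto itself.
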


\begin{proof}
	This follows from Theorem \ref{main thm2} and Corollary \ref{cor of main thm2}.
\end{proof}

\subsection{The abstract van Hove--Miyatake model}

Let $S$ be an injective non-negative self-adjoint operator acting in $\mathscr{H}$ with $S^{1/2}\in\mathcal{S}_{C,\mathscr{V}}(\mathscr{H})$, and let $g\in\operatorname{dom}(S^{-1/2})$.
The model
\[
\mathbb{M}_{S,g}^{\mathrm{vHM}}:=\mathbb{M}(S,g,S^{1/2},0,0)
\]
is called an \textit{abstract van Hove--Miyatake model} (see \cite[Definition 10.2]{MR4292535} or \cite[Section 13.4]{MR4812858}).

\begin{theorem}
	Suppose that $\mathscr{V}$ is dense in $\mathscr{H}_C$.
	Let $\mathbb{M}_{S_1,g_1}^{\mathrm{vHM}}$ and $\mathbb{M}_{S_2,g_2}^{\mathrm{vHM}}$ be two abstract van Hove--Miyatake models.
	Then $\mathbb{M}_{S_1,g_1}^{\mathrm{vHM}}$ is equivalent to $\mathbb{M}_{S_2,g_2}^{\mathrm{vHM}}$ if and only if $S_1=S_2$ and $g_1=g_2$.
\end{theorem}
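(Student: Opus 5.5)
The plan is to specialize Theorem \ref{main thm2} and Corollary \ref{cor of main thm2} to $T_j=S_j^{1/2}$, $q_j=p_j=0$.

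For the ``if'' direction there is nothing to do. If $S_1=S_2$ and $g_1=g_2$, the two models coincide. Equivalence then holds trivially with $U=1$ and $E=0$. Equivalently, Theorem \ref{main thm2} applies with $W=1$, since $W^*g_2-g_1=0\in\operatorname{dom}(S_1^{-1})$ and both sides of the $q,p$ relations vanish.

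For the ``only if'' direction, suppose $\mathbb{M}_{S_1,g_1}^{\mathrm{vHM}}$ is equivalent to $\mathbb{M}_{S_2,g_2}^{\mathrm{vHM}}$.
\begin{itemize}
\item Since $\mathscr{V}$ is dense in $\mathscr{H}_C$ and $T_j=S_j^{1/2}$, the second alternative of Corollary \ref{cor of main thm2} gives $T_1=T_2$, hence $S_1=S_2=:S$, and the unitary $W$ of Theorem \ref{main thm2} is the identity.
\item The third condition of Theorem \ref{main thm2} then says $g_2-g_1\in\operatorname{dom}(S^{-1})$. Since $q_1=q_2=0$ and $p_1=p_2=0$, it also gives, with $k:=S^{-1}(g_2-g_1)$,
\[
\operatorname{Re}\langle k,S^{-1/2}f\rangle=0,\qquad \operatorname{Im}\langle k,S^{1/2}f\rangle=0,\qquad\forall f\in\mathscr{V}.
\]
\item Decompose $k=k_1+\mathrm{i}k_2$ with $k_1,k_2\in\mathscr{H}_C$ as in \eqref{def of f_1 and f_2}. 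Because $CS\subset SC$, the vectors $S^{\pm1/2}f$ lie in $\mathscr{H}_C$ for $f\in\mathscr{V}$, so the inner products $\langle k_j,S^{\pm1/2}f\rangle$ are real. The two identities therefore reduce to
\[
\langle k_1,S^{-1/2}f\rangle=0,\qquad \langle k_2,S^{1/2}f\rangle=0,\qquad\forall f\in\mathscr{V}.
\]
\item Since $S^{1/2}\in\mathcal{S}_{C,\mathscr{V}}(\mathscr{H})$, both $S^{-1/2}\mathscr{V}$ and $S^{1/2}\mathscr{V}$ are dense in $\mathscr{H}_C$. Hence $k_1=k_2=0$, so $k=0$. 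Injectivity of $S^{-1}$ then gives $g_2-g_1=0$, i.e.\ $g_1=g_2$.
\end{itemize}

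The only point needing care is the real/imaginary bookkeeping in the third step: checking that $\operatorname{Re}$ and $\operatorname{Im}$ pick out $k_1$ and $k_2$ exactly. This holds because the inner product of two vectors of $\mathscr{H}_C$ is real. Everything else is a direct application of the earlier results.
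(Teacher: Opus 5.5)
Your proposal is correct and follows the paper's proof. It uses Corollary \ref{cor of main thm2} to get $S_1=S_2$ and $W=1$, then Theorem \ref{main thm2} to get the $\operatorname{Re}/\operatorname{Im}$ orthogonality relations, and finally the density of $S^{\pm1/2}\mathscr{V}$ in $\mathscr{H}_C$ to force $S^{-1}(g_2-g_1)=0$. Your decomposition $k=k_1+\mathrm{i}k_2$ simply spells out the step the paper leaves implicit when it passes from the two real relations to $\langle S_1^{-1}(g_2-g_1),f\rangle=0$ for all $f\in\mathscr{H}$.
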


\begin{proof}
	It suffices to show the ``only if'' part.
	Suppose that $\mathbb{M}_{S_1,g_1}^{\mathrm{vHM}}$ is equivalent to $\mathbb{M}_{S_2,g_2}^{\mathrm{vHM}}$.
	By Theorem \ref{main thm2} and Corollary \ref{cor of main thm2}, we have $S_1=S_2$, $g_2-g_1\in\operatorname{dom}(S_1^{-1})$, and
	\[
	\operatorname{Re}\langle S_1^{-1}(g_2-g_1),S_1^{-1/2}f\rangle=0,\qquad \operatorname{Im}\langle S_1^{-1}(g_2-g_1),S_1^{1/2}f\rangle=0,\qquad\forall f\in\mathscr{V}.
	\]
	Since $S_1^{1/2}\mathscr{V}$ and $S_1^{-1/2}\mathscr{V}$ are dense in $\mathscr{H}_C$, we obtain
	\[
	\langle S_1^{-1}(g_2-g_1),f\rangle=0,\qquad\forall f\in\mathscr{H}.
	\]
	This implies that $g_1=g_2$.
\end{proof}

\subsection{The infrared-renormalized van Hove--Miyatake model}

Let $S$ be an injective non-negative self-adjoint operator acting in $\mathscr{H}$ such that $S^{1/2}\in\mathcal{S}_{C,\mathscr{V}}(\mathscr{H})$,  $\mathscr{V}\subset\operatorname{dom}(S^{-1})$, and $S^{-1}\mathscr{V}$ is dense in $\mathscr{H}_C$.
Let $g\in\operatorname{dom}(S^{-1/2})$.
The model
\[
\mathbb{M}_{S,g}^{\mathrm{IR}}:=\mathbb{M}(S,0,S^{1/2},q_{S,g},p_{S,g}),
\]
where
\[
q_{S,g}(f):=-\operatorname{Re}\langle S^{-1/2}g,S^{-1}f\rangle,\qquad p_{S,g}(f):=\operatorname{Im}\langle S^{-1/2}g,f\rangle,\qquad f\in\mathscr{V},
\]
is called an \textit{infrared-renormalized van Hove--Miyatake model} (see \cite[Section 10.9.4]{MR4292535} or \cite[Example 13.12]{MR4812858}).

\begin{theorem}\label{thm IRvHM}
	Suppose that $\mathscr{V}$ is dense in $\mathscr{H}_C$.
	Let $\mathbb{M}_{S_1,g_1}^{\mathrm{IR}}$ and $\mathbb{M}_{S_2,g_2}^{\mathrm{IR}}$ be two infrared-renormalized van Hove--Miyatake models.
	Then $\mathbb{M}_{S_1,g_1}^{\mathrm{IR}}$ is equivalent to $\mathbb{M}_{S_2,g_2}^{\mathrm{IR}}$ if and only if $S_1=S_2$ and $g_1=g_2$.
\end{theorem}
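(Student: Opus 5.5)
The ``if'' direction is trivial, so the work is in the ``only if'' direction, which I would reduce to Theorem \ref{main thm2} and Corollary \ref{cor of main thm2}. Suppose $\mathbb{M}_{S_1,g_1}^{\mathrm{IR}}=\mathbb{M}(S_1,0,S_1^{1/2},q_{S_1,g_1},p_{S_1,g_1})$ is equivalent to $\mathbb{M}_{S_2,g_2}^{\mathrm{IR}}$.

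\textbf{Step 1: $S_1=S_2$.} Here $T_j=S_j^{1/2}$ and $\mathscr{V}$ is dense in $\mathscr{H}_C$, so the second alternative of Corollary \ref{cor of main thm2} applies. It gives $S_1^{1/2}=S_2^{1/2}$, hence $S_1=S_2=:S$, and the unitary $W$ of Theorem \ref{main thm2} is the identity.

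\textbf{Step 2: identify the constraint on $q$ and $p$.} Since both Hamiltonians are $H_S(0)$, the third condition of Theorem \ref{main thm2} (with $g_1=g_2=0$, $W=1$) forces
\[
q_{S,g_2}(f)=q_{S,g_1}(f),\qquad p_{S,g_2}(f)=p_{S,g_1}(f),\qquad\forall f\in\mathscr{V}.
\]
Put $h:=S^{-1/2}(g_2-g_1)$. Then, for all $f\in\mathscr{V}$,
\[
\operatorname{Re}\langle h,S^{-1}f\rangle=0,\qquad \operatorname{Im}\langle h,f\rangle=0.
\]

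\textbf{Step 3: deduce $h=0$.} Decompose $h=h_1+\mathrm{i}h_2$ with $h_1,h_2\in\mathscr{H}_C$ as in \eqref{def of f_1 and f_2}. For $f\in\mathscr{H}_C$ the inner product $\langle h_j,f\rangle$ is real. Since $CS\subset SC$, the functional calculus gives $S^{-1}\mathscr{V}\subset\mathscr{H}_C$.

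The first relation becomes $\langle h_1,S^{-1}f\rangle=0$ for all $f\in\mathscr{V}$. Because $S^{-1}\mathscr{V}$ is dense in $\mathscr{H}_C$ by hypothesis, this gives $h_1=0$.

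The second relation becomes $\langle h_2,f\rangle=0$ for all $f\in\mathscr{V}$. Because $\mathscr{V}$ is dense in $\mathscr{H}_C$, this gives $h_2=0$.

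Hence $S^{-1/2}(g_2-g_1)=0$, and injectivity of $S^{-1/2}$ yields $g_1=g_2$.

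\textbf{Main obstacle.} The substantive point is the asymmetry between the two constraints: $q$ only detects the real part against $S^{-1}\mathscr{V}$, while $p$ detects the imaginary part against $\mathscr{V}$. This is exactly why the definition of the model assumes that $S^{-1}\mathscr{V}$ is dense in $\mathscr{H}_C$. The remaining care is in verifying that the real and imaginary parts separate correctly, i.e.\ that $C$ commutes with $S^{-1/2}$ and $S^{-1}$.
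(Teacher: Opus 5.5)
Your proof is correct and follows the paper's argument. Corollary \ref{cor of main thm2} gives $S_1=S_2$ and $W=1$; the third condition of Theorem \ref{main thm2}, with zero Hamiltonian couplings, then forces $q_{S,g_1}=q_{S,g_2}$ and $p_{S,g_1}=p_{S,g_2}$; and density of $S^{-1}\mathscr{V}$ and $\mathscr{V}$ in $\mathscr{H}_C$ kills $S^{-1/2}(g_2-g_1)$. Your real/imaginary decomposition just spells out the step the paper states in one line.
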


\begin{proof}
	It suffices to show the ``only if'' part.
	Suppose that $\mathbb{M}_{S_1,g_1}^{\mathrm{IR}}$ is equivalent to $\mathbb{M}_{S_2,g_2}^{\mathrm{IR}}$.
	By Theorem \ref{main thm2} and Corollary \ref{cor of main thm2}, we have $S_1=S_2$ and
	\[
	\operatorname{Re}\langle S_1^{-1/2}(g_2-g_1),S_1^{-1}f\rangle=0,\qquad \operatorname{Im}\langle S_1^{-1/2}(g_2-g_1),f\rangle=0,\qquad\forall f\in\mathscr{V}.
	\]
	Since $S_1^{-1}\mathscr{V}$ and $\mathscr{V}$ are dense in $\mathscr{H}_C$, we obtain
	\[
	\langle S_1^{-1/2}(g_2-g_1),f\rangle=0,\qquad\forall f\in\mathscr{H}.
	\]
	This implies that $g_1=g_2$.
\end{proof}

The following corollary generalizes \cite[Corollary 10.7]{MR4292535} and answers a question of A.~Arai \cite[Remark 10.19]{MR4292535}.

\begin{corollary}
	Let $S$ be an injective non-negative self-adjoint operator acting in $\mathscr{H}$ with $CS\subset SC$.
	Suppose that
	\[
	\mathscr{V}=\operatorname{dom}(S^{-1})\cap\operatorname{dom}(S^{1/2})\cap\mathscr{H}_C.
	\]
	Let $g_1,g_2\in\operatorname{dom}(S^{-1/2})$ be arbitrary.
	Then $\mathbb{M}_{S,g_1}^{\mathrm{IR}}$ is equivalent to $\mathbb{M}_{S,g_2}^{\mathrm{IR}}$ if and only if $g_1=g_2$.
\end{corollary}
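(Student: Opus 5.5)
The plan is to reduce the corollary to Theorem \ref{thm IRvHM}. Since $S_1=S_2=S$, the ``if'' direction is trivial: take $U=1$ and $E=0$. For the ``only if'' direction, it suffices to check that, with $\mathscr{V}=\operatorname{dom}(S^{-1})\cap\operatorname{dom}(S^{1/2})\cap\mathscr{H}_C$, all the standing hypotheses of the infrared-renormalized setting hold. That is, we must show:
\begin{itemize}
\item $S^{1/2}\in\mathcal{S}_{C,\mathscr{V}}(\mathscr{H})$;
\item $\mathscr{V}\subset\operatorname{dom}(S^{-1})$;
\item $S^{-1}\mathscr{V}$ is dense in $\mathscr{H}_C$;
\item $\mathscr{V}$ itself is dense in $\mathscr{H}_C$.
\end{itemize}
Once these are established, Theorem \ref{thm IRvHM} yields $g_1=g_2$ directly.

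The structural conditions are routine. The operator $S^{1/2}$ is injective and self-adjoint, and $CS\subset SC$ implies $Cf(S)\subset \bar f(S)C$ for Borel functions $f$ (via the spectral measure commuting with $C$). Hence $CS^{1/2}\subset S^{1/2}C$, and $C$ also preserves $\operatorname{dom}(S^{-1})$ and $\operatorname{dom}(S^{1/2})$. The inclusions $\mathscr{V}\subset\operatorname{dom}(S^{\pm1/2})\cap\operatorname{dom}(S^{-1})$ hold because $\operatorname{dom}(S^{-1})\subset\operatorname{dom}(S^{-1/2})$, using $|\lambda^{-1/2}|\le 1+\lambda^{-1}$.

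The substantive step is the density claims. Fix a real $h\in\mathscr{H}_C$ and let $T$ be one of $1$, $S^{1/2}$, $S^{-1/2}$, $S^{-1}$. I would approximate $h$ using spectral cutoffs $P_n:=E_S([1/n,n])$, which commute with $C$. Set $h_n:=P_n T^{-1}P_n h$, where $T^{-1}$ restricted to $\operatorname{Ran}P_n$ is bounded. Then:
\begin{itemize}
\item $h_n\in\operatorname{Ran}P_n\subset\operatorname{dom}(S^{\alpha})$ for every real $\alpha$, so $h_n\in\mathscr{V}$;
\item $h_n$ is $C$-real, because $C$ commutes with $P_n$ and with real Borel functions of $S$;
\item $Th_n=P_nh\to E_S((0,\infty))h=h$, using injectivity of $S$, so that $E_S(\{0\})=0$.
\end{itemize}
This gives density of $\mathscr{V}$, $S^{1/2}\mathscr{V}$, $S^{-1/2}\mathscr{V}$ and $S^{-1}\mathscr{V}$ in $\mathscr{H}_C$ simultaneously. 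The main point to check carefully is the reality claim, namely $Cf(S)=f(S)C$ for real bounded Borel $f$. I would derive it from $CS\subset SC$ via the resolvent: $C(S-z)^{-1}=(S-\bar z)^{-1}C$, and then pass to general Borel $f$ by the functional calculus and a monotone class argument. With all hypotheses verified, Theorem \ref{thm IRvHM} applies and gives the result.
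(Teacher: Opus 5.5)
Your proposal is correct and takes the same route as the paper: the paper's proof consists only of the statement that the corollary follows from Theorem~\ref{thm IRvHM}, and you supply the hypothesis check it leaves implicit, namely $S^{1/2}\in\mathcal{S}_{C,\mathscr{V}}(\mathscr{H})$, $\mathscr{V}\subset\operatorname{dom}(S^{-1})$, and density of $\mathscr{V}$, $S^{\pm1/2}\mathscr{V}$ and $S^{-1}\mathscr{V}$ in $\mathscr{H}_C$. Your spectral cutoffs $h_n=E_S([1/n,n])\,T^{-1}E_S([1/n,n])\,h$ settle all four density claims at once, with reality coming from $CE_S(\cdot)=E_S(\cdot)C$ and convergence from $E_S(\{0\})=0$.
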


\begin{proof}
	This follows from Theorem \ref{thm IRvHM}.
\end{proof}

\subsection{The quadratic interaction model}

Let $(S,g=\{g_n\}_{n=1}^\infty,\lambda=\{\lambda_n\}_{n=1}^\infty)$ be a triple satisfying the following five conditions:
\begin{itemize}
	\item  $S$ is an injective non-negative self-adjoint operator acting in $\mathscr{H}$ with $S^{1/2}\in\mathcal{S}_{C,\mathscr{V}}(\mathscr{H})$,
	\item $\lambda_n\in\mathbb{R}$ and $g_n\in\operatorname{dom}(S^{1/2})\cap\operatorname{dom}(S^{-1/2})\cap\mathscr{H}_C$ for all $n\in\mathbb{N}$,
	\item $\sum_{n=1}^\infty|\lambda_n|\cdot\|S^{-1/2}g_n\|^2<\infty$,
	\item $\sum_{n=1}^\infty|\lambda_n|\cdot\|S^{1/2}g_n\|^2<\infty$,
	\item for some $\varepsilon>0$, the operator inequality
	\[
	1+\sum_{n=1}^\infty\lambda_n|S^{-1/2}g_n\rangle\langle S^{-1/2}g_n|\geq\varepsilon
	\]
	holds.
\end{itemize}
Here, for each $\psi,\phi\in\mathscr{H}$, a rank-one operator $|\psi\rangle\langle\phi|$ is defined by
\[
|\psi\rangle\langle\phi|f:=\langle\phi,f\rangle\psi,\qquad f\in\mathscr{H}.
\]

\begin{remark}
	The above five conditions correspond to conditions (B1)--(B5) in \cite[Section~4]{MR4213757}.
	Condition (B6) is automatically satisfied by setting $J:=C$.
	Indeed, since $S^{1/2}\in\mathcal{S}_{C,\mathscr{V}}(\mathscr{H})$, we have $CSC=S$.
	Moreover, since $g_n\in\mathscr{H}_C$, we have $Cg_n=g_n$ for all $n\in\mathbb{N}$.
\end{remark}

We set
\[
H^{\mathrm{QI}}_{S,g,\lambda}:=\mathrm{d}\Gamma_\mathrm{b}(S)+\frac{1}{2}\sum_{n=1}^\infty\lambda_n\Phi_{\mathrm{S}}(g_n)^2,
\]
which acts in $\mathscr{F}_{\mathrm{b}}(\mathscr{H})$.
The operator $H^{\mathrm{QI}}_{S,g,\lambda}$ has been studied in \cite{MR4298882,gamet2025renormalization,MR4213757}.
By \cite[Theorem 4.3]{MR4213757}, $H^{\mathrm{QI}}_{S,g,\lambda}$ is self-adjoint.
Thus, the triple
\[
\mathbb{M}^{\mathrm{QI}}_{S,g,\lambda}:=\Big\{\mathscr{F}_{\mathrm{b}}(\mathscr{H}),H^{\mathrm{QI}}_{S,g,\lambda},\{\phi_{S^{1/2},0}(f),\pi_{S^{1/2},0}(f)\mid f\in\mathscr{V}\}\Big\}
\]
defines an abstract Bose field model.
We call $\mathbb{M}^{\mathrm{QI}}_{S,g,\lambda}$ a \textit{quadratic interaction model}.

\begin{theorem}\label{thm;quadratic interaction}
	Suppose that $\mathscr{V}$ is dense in $\mathscr{H}_C$.
	Let $\mathbb{M}_{S_1,g_1,\lambda_1}^{\mathrm{QI}}$ and $\mathbb{M}_{S_2,g_2,\lambda_2}^{\mathrm{QI}}$ be two quadratic interaction models. 
	Then $\mathbb{M}_{S_1,g_1,\lambda_1}^{\mathrm{QI}}$ is equivalent to $\mathbb{M}_{S_2,g_2,\lambda_2}^{\mathrm{QI}}$ if and only if 
	\[
	S_1^2+\sum_{n=1}^\infty\lambda_{1,n}|S_1^{1/2}g_{1,n}\rangle\langle S_1^{1/2}g_{1,n}| = S_2^2+\sum_{n=1}^\infty\lambda_{2,n}|S_2^{1/2}g_{2,n}\rangle\langle S_2^{1/2}g_{2,n}|.
	\]
	In particular, the map
	\[
	(S,g,\lambda)\mapsto S^2+\sum_{n=1}^\infty\lambda_n|S^{1/2}g_n\rangle\langle S^{1/2}g_n|
	\]
	is a complete invariant.
\end{theorem}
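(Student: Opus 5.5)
The approach is to reduce the quadratic interaction model to a model of the form $\mathbb{M}(\tilde S,0,T,q,p)$ via a Bogoliubov diagonalization, and then apply Theorem \ref{main thm2} together with Corollary \ref{cor of main thm2}.

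\textbf{Step 1: diagonalization.} Set
\[
\Theta:=S^2+\sum_{n=1}^\infty\lambda_n|S^{1/2}g_n\rangle\langle S^{1/2}g_n|.
\]
Writing $\Phi_{\mathrm{S}}(g_n)^2$ in terms of the fields $\phi_{S^{1/2}}$ and $\pi_{S^{1/2}}$ formally gives
\[
H^{\mathrm{QI}}=\tfrac12\langle\pi,\pi\rangle+\tfrac12\langle\phi,\Theta\phi\rangle-\text{const},
\]
in the "$\phi$ and $\pi$ indexed by $\mathscr{V}$" picture. The fifth condition (with $\varepsilon$), written as $S^{1/2}(1+\sum\lambda_n|S^{-1/2}g_n\rangle\langle S^{-1/2}g_n|)S^{1/2}$, shows that $\Theta$ is a strictly positive self-adjoint operator commuting with $C$. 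Using \cite[Theorem 4.3]{MR4213757} (and its proof), there is a unitary $U_0$ on $\mathscr{F}_{\mathrm{b}}(\mathscr{H})$, a Bogoliubov transformation, with
\[
U_0H^{\mathrm{QI}}_{S,g,\lambda}U_0^*=\mathrm{d}\Gamma_\mathrm{b}(\Theta^{1/2})+E,
\]
such that $U_0$ maps the time-zero fields to $\phi_{\Theta^{1/4}}$-type fields:
\[
U_0\phi_{S^{1/2},0}(f)U_0^*=\Phi_{\mathrm{S}}(\Theta^{-1/4}f),\qquad U_0\pi_{S^{1/2},0}(f)U_0^*=\Phi_{\mathrm{S}}(\mathrm{i}\Theta^{1/4}f).
\]
Thus $\mathbb{M}^{\mathrm{QI}}_{S,g,\lambda}$ is equivalent to $\mathbb{M}(\Theta^{1/2},0,\Theta^{1/4},0,0)=\mathbb{M}^{\mathrm{free}}_{\Theta^{1/2}}$.

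This requires checking that $\Theta^{1/4}\in\mathcal{S}_{C,\mathscr{V}}(\mathscr{H})$, with $\mathscr{V}\subset\operatorname{dom}(\Theta^{\pm1/4})$ and density of the images. I would derive this from Theorem \ref{main thm} itself: the Hilbert--Schmidt condition (the third and fourth summability conditions) gives the transfer pair from $S^{1/2}$ to $\Theta^{1/4}$, namely $J_+=\overline{\Theta^{1/4}S^{-1/2}}$ and $J_-=\overline{\Theta^{-1/4}S^{1/2}}$. Their boundedness follows from the operator inequalities
\[
\varepsilon S^2\le\Theta\le C' S^2\quad\text{on suitable cores,}
\]
together with Heinz--L\"owner.

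\textbf{Step 2: classification.} Two QI models are equivalent iff $\mathbb{M}^{\mathrm{free}}_{\Theta_1^{1/2}}$ and $\mathbb{M}^{\mathrm{free}}_{\Theta_2^{1/2}}$ are equivalent, since equivalence is transitive and the constants $E$ are absorbed. By Theorem \ref{thm;free Bose field} (with $\mathscr{V}$ dense), this holds iff $\Theta_1^{1/2}=\Theta_2^{1/2}$, i.e.\ iff $\Theta_1=\Theta_2$, which is the stated invariant. One point needs care: Theorem \ref{thm;free Bose field} requires $\Theta_j^{1/4}\in\mathcal{S}_{C,\mathscr{V}}(\mathscr{H})$, which Step 1 supplies.

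\textbf{Main obstacle.} The main obstacle is Step 1: justifying rigorously the unitary implementation and the exact form of the transformed fields, including domain issues with the infinite sum and showing $\mathscr{V}\subset\operatorname{dom}(\Theta^{\pm1/4})$. I would first try to import this directly from \cite{MR4213757}, where $H^{\mathrm{QI}}$ is shown to be unitarily equivalent to $\mathrm{d}\Gamma_\mathrm{b}(\Theta^{1/2})+E$ via an explicit Bogoliubov transformation with $X_\pm=\tfrac12(\Theta^{\pm1/4}S^{\mp1/2}\pm\Theta^{\mp1/4}S^{\pm1/2})$-type coefficients. I would then only verify that this transformation acts on the Weyl fields as displayed above, using the computation in the "if" part of the proof of Theorem \ref{main thm}.
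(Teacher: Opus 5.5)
Your proposal is correct and is essentially the paper's proof. Lemma \ref{lem;quadratic interaction} shows that $\mathbb{M}^{\mathrm{QI}}_{S,g,\lambda}$ is equivalent to $\mathbb{M}^{\mathrm{free}}_{K}$ with $K=\Theta^{1/2}$ by importing the Bogoliubov diagonalization of \cite[Theorem 5.3]{MR4213757} and reading off exactly your field relations; it checks $K^{1/2}\in\mathcal{S}_{C,\mathscr{V}}(\mathscr{H})$ from the boundedness and bijectivity of $\overline{K^{\pm1/2}S^{\mp1/2}}$ (cited there, though your bounds $\varepsilon S^2\le\Theta\le C'S^2$ plus Heinz--L\"owner give the same, since $\Theta=S(1+\sum_n\lambda_n|S^{-1/2}g_n\rangle\langle S^{-1/2}g_n|)S$ rather than $S^{1/2}(\cdots)S^{1/2}$), and Theorem \ref{thm;free Bose field} finishes. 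One fix: this membership does not come from Theorem \ref{main thm} or the Hilbert--Schmidt condition, since that theorem presupposes $\Theta^{1/4}\in\mathcal{S}_{C,\mathscr{V}}(\mathscr{H})$; it follows directly from $\Theta^{\pm1/4}\mathscr{V}=\overline{\Theta^{\pm1/4}S^{\mp1/2}}\,S^{\pm1/2}\mathscr{V}$, where these bounded bijections commute with $C$.
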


\begin{lemma}\label{lem;quadratic interaction}
	Any quadratic interaction model $\mathbb{M}^{\mathrm{QI}}_{S,g,\lambda}$ is equivalent to an abstract free Bose field model $\mathbb{M}_{K}^{\mathrm{free}}$, where
	\[
	K:=\sqrt{S^2+\sum_{n=1}^\infty\lambda_n|S^{1/2}g_n\rangle\langle S^{1/2}g_n|}.
	\]
\end{lemma}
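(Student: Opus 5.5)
The plan is to realize the equivalence by a single unitary $U$ on $\mathscr{F}_{\mathrm{b}}(\mathscr{H})$. It should intertwine the Weyl representations $\rho_{S^{1/2},0,0}$ and $\rho_{K^{1/2},0,0}$, and also carry $H^{\mathrm{QI}}_{S,g,\lambda}$ to $\mathrm{d}\Gamma_\mathrm{b}(K)+E$. The first part will come from the ``if'' direction of Theorem \ref{main thm}; the second will come from the diagonalization result \cite[Theorem 4.3]{MR4213757}, matched against the formulas \eqref{Uphi(f)U^*_and_Uphi(if)U^}.

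\textbf{Step 1: operator bounds.} Put
\[
R:=\sum_{n=1}^\infty\lambda_n|S^{-1/2}g_n\rangle\langle S^{-1/2}g_n|.
\]
By the third condition on $(S,g,\lambda)$, $R$ is trace class, and by the fifth condition $1+R\geq\varepsilon$. Formally $K^2=S(1+R)S$; I would make this precise as an identity of closed quadratic forms on $\operatorname{dom}(S)$. This gives
\[
\varepsilon S^2\leq K^2\leq (1+\|R\|)S^2 .
\]
By the Heinz--Kato (L\"owner--Heinz) inequality, $\operatorname{dom}(K^{\pm1/2})=\operatorname{dom}(S^{\pm1/2})$. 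Moreover $J_+:=\overline{K^{1/2}S^{-1/2}}$ and $J_-:=\overline{K^{-1/2}S^{1/2}}$ are bounded with bounded inverses. Since $CS=SC$ and $Cg_n=g_n$, we get $CK=KC$. Since $J_\pm$ map $\mathscr{H}_C$ onto $\mathscr{H}_C$ and $S^{\pm1/2}\mathscr{V}$ are dense in $\mathscr{H}_C$, the sets $K^{\pm1/2}\mathscr{V}=J_\pm S^{\pm1/2}\mathscr{V}$ are dense too. Hence $K^{1/2}\in\mathcal{S}_{C,\mathscr{V}}(\mathscr{H})$, $\mathbb{M}^{\mathrm{free}}_K$ is well defined, and $(J_+,J_-)$ is a transfer pair from $S^{1/2}$ to $K^{1/2}$.

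\textbf{Step 2: Hilbert--Schmidt property (main obstacle).} We need $J_+-J_-$ to be Hilbert--Schmidt, i.e.\ $S^{-1/2}(K-S)S^{-1/2}$ (suitably closed) should be Hilbert--Schmidt, since $J_+-J_-=K^{-1/2}(K-S)S^{-1/2}$. I would first try the integral representation
\[
K-S=\frac{2}{\pi}\int_0^\infty \Big[(S^2+t^2)^{-1}-(K^2+t^2)^{-1}\Big]\,t^2\,\mathrm{d}t\cdot(\text{sign-corrected}),
\]
together with the resolvent identity, where $K^2-S^2=S^{1/2}\big(\sum\lambda_n|g_n\rangle\langle g_n|\big)S^{1/2}$. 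Condition four says $\sum|\lambda_n|\|S^{1/2}g_n\|^2<\infty$, so this sandwiched perturbation is trace class in the right weighted sense. The target is a Hilbert--Schmidt bound on the integral, with the weights $S^{-1/2}$ absorbed by resolvent factors. Condition three controls the low-energy part of the integral and condition four the high-energy part. With Step 1 and Step 2 in hand, Theorem \ref{main thm} gives a unitary $U$ with $U\Phi_\mathrm{S}(f)U^*=\Phi_\mathrm{S}(J_-f)$ and $U\Phi_\mathrm{S}(\mathrm{i}f)U^*=\Phi_\mathrm{S}(\mathrm{i}J_+f)$ for $f\in\mathscr{H}_C$; here $h_q=h_p=0$, because both $q$ and $p$ vanish.

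\textbf{Step 3: the Hamiltonian.} Write both Hamiltonians in canonical variables. On a common core of finite-particle vectors with components in $\operatorname{dom}(S)$, and for an orthonormal basis $\{e_k\}\subset\mathscr{H}_C$, we have $\mathrm{d}\Gamma_\mathrm{b}(S)=\frac12\sum_k\big[\Phi_\mathrm{S}(\mathrm{i}S^{1/2}e_k)^2+\Phi_\mathrm{S}(S^{1/2}e_k)^2\big]-c$ in the normal-ordered sense, and $\Phi_\mathrm{S}(g_n)=\Phi_\mathrm{S}(S^{-1/2}\cdot S^{1/2}g_n)$. Thus $H^{\mathrm{QI}}_{S,g,\lambda}$ is the quadratic expression
\[
\tfrac12\langle\pi,\pi\rangle+\tfrac12\langle\phi,K^2\phi\rangle
\]
in $\phi_{S^{1/2}},\pi_{S^{1/2}}$, which is exactly the form of $\mathrm{d}\Gamma_\mathrm{b}(K)$ in $\phi_{K^{1/2}},\pi_{K^{1/2}}$, up to a constant. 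Conjugating by $U$ and using $UH U^*$ computed from Step 2 then yields $UH^{\mathrm{QI}}_{S,g,\lambda}U^*=\mathrm{d}\Gamma_\mathrm{b}(K)+E$, first on a core and then for the self-adjoint operators.

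To avoid the formal normal-ordering, I would instead check directly that the Bogoliubov map of Step 2 coincides with the one constructed in \cite[Theorem 4.3]{MR4213757}. That theorem already yields $\mathrm{d}\Gamma_\mathrm{b}(K)+E$ with the coefficients $X_\pm=\frac12(J_-\pm J_+)$. Comparing the two constructions fixes $U$ up to a phase, which completes the argument.
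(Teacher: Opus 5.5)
Your Step~1 is sound and corresponds to the paper's verification that $K^{1/2}\in\mathcal{S}_{C,\mathscr{V}}(\mathscr{H})$. The paper cites Lemmas 5.1, 5.2 and 3.1 of \cite{MR4213757}; your form identity $\|Kf\|^2=\langle Sf,(1+R)Sf\rangle$ on $\operatorname{dom}(S)$ plus L\"owner--Heinz is a legitimate self-contained replacement.

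\textbf{The gap.} As written, the rest of the argument does not stand on its own.
\begin{itemize}
\item Step~2, which you call the main obstacle, is never carried out. You write down the integral formula for $K-S$ (it is correct as it stands; no sign correction is needed) and announce a target, but you give no estimate. Your Step~3 fallback still builds $U$ from Theorem~\ref{main thm}, so it rests on this unproved Hilbert--Schmidt bound.
\item The canonical-variable argument in Step~3 cannot be made rigorous as stated. The normal-ordering constant is $c=\frac12\operatorname{tr}S$, which is infinite in general.
\item The only rigorous link between $H^{\mathrm{QI}}_{S,g,\lambda}$ and $\mathrm{d}\Gamma_\mathrm{b}(K)$ in your proposal is therefore the diagonalization theorem of \cite{MR4213757}. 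That is Theorem 5.3 there; Theorem 4.3 is only the self-adjointness result.
\end{itemize}

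\textbf{The fix, which is the paper's proof.} Once you invoke that theorem, Step~2, Theorem~\ref{main thm}, and the comparison ``up to a phase'' are all unnecessary. The theorem gives a unitary $U$ and $E\in\mathbb{R}$ with $UH^{\mathrm{QI}}_{S,g,\lambda}U^*=\mathrm{d}\Gamma_\mathrm{b}(K)+E$ and $U\,\overline{A(Xf)+A(CYf)^*}\,U^*=A(f)$ for all $f\in\mathscr{H}$. Here $X$ and $Y$ are $\frac12\bigl(\overline{S^{-1/2}K^{1/2}}\pm\overline{S^{1/2}K^{-1/2}}\bigr)$.
\begin{itemize}
\item Forming $\frac{1}{\sqrt2}\overline{A+A^*}$ on both sides gives $U\Phi_\mathrm{S}(Xf+CYf)U^*=\Phi_\mathrm{S}(f)$.
\item For $h\in\mathscr{V}$, substitute $f=K^{-1/2}h$ and $f=\mathrm{i}K^{1/2}h$. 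Since $X$ and $Y$ commute with $C$ and $Ch=h$, these give $Xf+CYf=S^{-1/2}h$ and $Xf+CYf=\mathrm{i}S^{1/2}h$ respectively.
\item That is exactly the intertwining of $\rho_{S^{1/2},0,0}$ with $\rho_{K^{1/2},0,0}$. The Hilbert--Schmidt property is implicit in the existence of $U$.
\end{itemize}

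\textbf{Two side remarks.}
\begin{itemize}
\item This Bogoliubov map is the inverse of the one attached to your transfer pair from $S^{1/2}$ to $K^{1/2}$. In your notation $X=X_+^*$ and $Y=-X_-^*$, so your proposed matching would have to account for the direction.
\item If you want Step~2 anyway, it does close using only the third condition. Factor the $n$-th term of $S^{-1/2}(K-S)S^{-1/2}$ through $L^2(\mathbb{R}_+,t^2\,\mathrm{d}t)$ and use $\int_0^\infty t^2\|(S^2+t^2)^{-1}g_n\|^2\,\mathrm{d}t=\frac{\pi}{4}\|S^{-1/2}g_n\|^2$. But this buys nothing here, because the Hamiltonian part still requires the cited theorem.
\end{itemize}
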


\begin{proof}
	We first note that, by \cite[Lemma 5.1]{MR4213757}, the operator 
	\[
	S^2+\sum_{n=1}^\infty\lambda_n|S^{1/2}g_n\rangle\langle S^{1/2}g_n|
	\]
	is non-negative and injective.
	Thus, $K$ is well-defined and injective.
	
	We next show that $K^{1/2}\in\mathcal{S}_{C,\mathscr{V}}(\mathscr{H})$.
	Since $CS\subset SC$ and $Cg_n=g_n$, we have $CK\subset KC$.
	Moreover, by \cite[Lemma 5.2]{MR4213757} and \cite[Lemma 3.1]{MR4213757}, we obtain $\operatorname{dom}(S^{1/2})=\operatorname{dom}(K^{1/2})$ and $\operatorname{dom}(S^{-1/2})=\operatorname{dom}(K^{-1/2})$, whence $\mathscr{V}\subset\operatorname{dom}(K^{1/2})\cap\operatorname{dom}(K^{-1/2})$.
	Furthermore, by \cite[Lemma 5.2]{MR4213757} and \cite[Lemma 3.1]{MR4213757}, $\overline{K^{1/2}S^{-1/2}}$ and $\overline{K^{-1/2}S^{1/2}}$ are bounded and bijective with
	\[
	\left(\overline{K^{1/2}S^{-1/2}}\right)^{-1}=\overline{S^{1/2}K^{-1/2}},\qquad\left(\overline{K^{-1/2}S^{1/2}}\right)^{-1}=\overline{S^{-1/2}K^{1/2}},
	\]
	and thus,
	\[
	K^{1/2}\mathscr{V}=	\overline{K^{1/2}S^{-1/2}}\cdot S^{1/2}\mathscr{V}\qquad\text{and}\qquad
	K^{-1/2}\mathscr{V}=\overline{K^{-1/2}S^{1/2}}\cdot S^{-1/2}\mathscr{V}
	\]
	are dense in $\mathscr{H}_C$.
	
	Finally, we show that $\mathbb{M}^{\mathrm{QI}}_{S,g,\lambda}$ is equivalent to $\mathbb{M}_{K}^{\mathrm{free}}$.
	By \cite[Theorem 5.3]{MR4213757}, there exist a unitary operator $U$ on $\mathscr{F}_\mathrm{b}(\mathscr{H})$ and a real number $E$ such that
	\begin{equation}\label{eq;quadratic interaction_eq0}
	UH^{\mathrm{QI}}_{S,g,\lambda}U^*=\mathrm{d}\Gamma_\mathrm{b}(K)+E
	\end{equation}
	and
	\begin{equation}\label{eq;quadratic interaction_eq1}
	U\left[\overline{A(Xf)+A(CYf)^*}\right]U^*=A(f),\qquad\forall f\in\mathscr{H},
	\end{equation}
	where $X$ and $Y$ are defined by
	\[
	X:=\frac{1}{2}\left(\overline{S^{-1/2}K^{1/2}}+\overline{S^{1/2}K^{-1/2}}\right),\qquad Y:=\frac{1}{2}\left(\overline{S^{-1/2}K^{1/2}}-\overline{S^{1/2}K^{-1/2}}\right).
	\]
 	It follows from \eqref{eq;quadratic interaction_eq1} and \cite[Lemma 2.3 and its proof]{MR4213757} that
	\begin{equation}\label{eq;quadratic interaction_eq2}
	U\Phi_\mathrm{S}(Xf+CYf)U^*=\Phi_\mathrm{S}(f),\qquad\forall f\in\mathscr{H}.
	\end{equation}
	Replacing $f$ in \eqref{eq;quadratic interaction_eq2} with $K^{-1/2}f$ for $f\in\mathscr{V}$ yields
	\begin{equation}\label{eq;quadratic interaction_eq3}
		U\Phi_\mathrm{S}(S^{-1/2}f)U^*=\Phi_\mathrm{S}(K^{-1/2}f),\qquad\forall f\in\mathscr{V}.
	\end{equation}
	Moreover, replacing $f$ in \eqref{eq;quadratic interaction_eq2} with $\mathrm{i}K^{1/2}f$ for $f\in\mathscr{V}$ yields
	\begin{equation}\label{eq;quadratic interaction_eq4}
		U\Phi_\mathrm{S}(\mathrm{i}S^{1/2}f)U^*=\Phi_\mathrm{S}(\mathrm{i}K^{1/2}f),\qquad\forall f\in\mathscr{V}.
	\end{equation}
	Equalities \eqref{eq;quadratic interaction_eq0}, \eqref{eq;quadratic interaction_eq3}, and \eqref{eq;quadratic interaction_eq4} imply that $\mathbb{M}^{\mathrm{QI}}_{S,g,\lambda}$ is equivalent to $\mathbb{M}_{K}^{\mathrm{free}}$.
\end{proof}

\begin{proof}[Proof of Theorem \ref{thm;quadratic interaction}]
	This follows from Lemma \ref{lem;quadratic interaction} and Theorem \ref{thm;free Bose field}.
\end{proof}

\appendix

\section{Operator theory and auxiliary results}\label{sect;appA}

In this appendix, we collect some results from operator theory and auxiliary results used in this paper.

\begin{lemma}\label{intertwining Borel}
	Let $S_1,S_2$ be self-adjoint operators acting in $\mathscr{H}$.
	Let $B$ be a bounded operator on $\mathscr{H}$.
	Suppose that $BS_1\subset S_2B$.
	Then, for any Borel function $f:\mathbb{R}\to\mathbb{C}$, it holds that $Bf(S_1)\subset f(S_2)B$.
\end{lemma}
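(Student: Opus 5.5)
The plan is to reduce everything to the spectral theorem via the spectral measures $E_{S_1}$ and $E_{S_2}$. The first step is to upgrade the hypothesis $BS_1\subset S_2B$ to an intertwining of resolvents. For $z\in\mathbb{C}\setminus\mathbb{R}$ and $u\in\mathscr{H}$, put $v:=(S_1-z)^{-1}u\in\operatorname{dom}(S_1)$. Then $Bv\in\operatorname{dom}(S_2)$ and
\[
(S_2-z)Bv=B(S_1-z)v=Bu .
\]
Hence $B(S_1-z)^{-1}=(S_2-z)^{-1}B$ for all non-real $z$.

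Next, extend this intertwining from resolvents to all bounded Borel functions. The class $\mathcal{C}$ of bounded Borel functions $f$ with $Bf(S_1)=f(S_2)B$ is a linear space that contains the constants and the functions $\lambda\mapsto(\lambda-z)^{-1}$. It is closed under products, since
\[
Bf(S_1)g(S_1)=f(S_2)Bg(S_1)=f(S_2)g(S_2)B .
\]
It is closed under complex conjugation: taking adjoints of the relation for $(\lambda-z)^{-1}$ gives $(S_1-\bar z)^{-1}B^*=B^*(S_2-\bar z)^{-1}$, which is again an identity of the same kind after relabeling. It is also closed under bounded pointwise limits, because $f_n(S_j)\to f(S_j)$ strongly by dominated convergence in the spectral integral. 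By Stone--Weierstrass, $\mathcal{C}$ contains $C_0(\mathbb{R})$. Bounded pointwise limits of $C_0$ functions then give the indicator functions of open intervals, and a monotone class argument gives all bounded Borel functions. In particular $BE_{S_1}(\Delta)=E_{S_2}(\Delta)B$ for every Borel set $\Delta$.

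Finally, I treat a general, possibly unbounded, Borel function $f$ by truncation. Let $\Delta_n:=\{|f|\le n\}$ and $f_n:=f\chi_{\Delta_n}$. Take $u\in\operatorname{dom}(f(S_1))$, so that $\int|f|^2\,\mathrm{d}\|E_{S_1}u\|^2<\infty$. By the bounded case,
\[
f_n(S_2)Bu=Bf_n(S_1)u\to Bf(S_1)u .
\]
I also need to check that $Bu\in\operatorname{dom}(f(S_2))$. For this, note that
\[
\|f_n(S_2)Bu\|^2=\int_{\Delta_n}|f|^2\,\mathrm{d}\|E_{S_2}Bu\|^2
\]
is bounded in $n$ because the sequence $f_n(S_2)Bu$ converges. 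Monotone convergence then gives $\int|f|^2\,\mathrm{d}\|E_{S_2}Bu\|^2<\infty$, so $Bu\in\operatorname{dom}(f(S_2))$, and $f_n(S_2)Bu\to f(S_2)Bu$. Comparing the two limits gives $Bf(S_1)u=f(S_2)Bu$, i.e.\ $Bf(S_1)\subset f(S_2)B$.

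The main obstacle I expect is the passage from resolvents to all bounded Borel functions, in particular making the monotone class argument rigorous, including closure under conjugation. If that becomes messy, the fallback is to work with the complex measures $\langle w,E_{S_2}(\cdot)Bu\rangle$ and $\langle B^*w,E_{S_1}(\cdot)u\rangle$. Their Stieltjes transforms agree for all non-real $z$, so the two measures coincide. This directly yields $BE_{S_1}(\Delta)=E_{S_2}(\Delta)B$, and the rest of the argument proceeds as above.
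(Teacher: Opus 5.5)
Your proof is correct and follows essentially the paper's route: resolvent intertwining, then $BE_{S_1}(\Delta)=E_{S_2}(\Delta)B$ for all Borel $\Delta$ via a $\sigma$-algebra/monotone-class argument, then general $f$. The paper gets the interval projections from Stone's formula (your fallback) rather than Stone--Weierstrass, and it leaves implicit the truncation step for unbounded $f$, which you carry out correctly.

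One small correction: taking adjoints yields an intertwining for $B^*$ in the reverse direction, not $B\bar f(S_1)=\bar f(S_2)B$. However, you don't need closure of $\mathcal{C}$ under conjugation. For Stone--Weierstrass it suffices that the generators $(\lambda-z)^{-1}$ form a conjugation-invariant family, and they do, because your first step applies to $\bar z$ as well.
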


\begin{proof}
	The assumption $BS_1\subset S_2B$ implies that $B(S_1-z)^{-1}=(S_2-z)^{-1}B$ for any $z\in\mathbb{C}\setminus\mathbb{R}$.
	Combining this with Stone's formula, we obtain $BE_{S_1}((a,b])=E_{S_2}((a,b])B$ for all $a,b\in\mathbb{R}$ with $a<b$, where $S_j=\int_\mathbb{R}\lambda\,\mathrm{d}E_{S_j}(\lambda)$ denotes the spectral resolution of $S_j$ for each $j=1,2$. 
	On the other hand, the set of all Borel subsets $J$ of $\mathbb{R}$ satisfying $BE_{S_1}(J)=E_{S_2}(J)B$ is a $\sigma$-algebra.
	Thus, $BE_{S_1}(J)=E_{S_2}(J)B$ holds for all Borel subsets $J$ of $\mathbb{R}$.
	The desired result now follows from the definition of $f(S_1)$ and $f(S_2)$.
\end{proof}

Recall that the sinc function is defined by
\[
\operatorname{sinc}x:=
\begin{dcases*}
	\frac{\sin x}{x}, & $x\in\mathbb{R}\setminus\{0\}$,\\
	1, & $x=0$.
\end{dcases*}
\]

\begin{lemma}\label{sinc eq}
	Let $S_1,S_2$ be non-negative self-adjoint operators acting in $\mathscr{H}$.
	Let $B$ be a bounded operator on $\mathscr{H}$.
	Suppose that
	\[
	B\operatorname{sinc}(tS_1)=\operatorname{sinc}(tS_2)B,\qquad \forall t\in\mathbb{R}.
	\] 
	Then, $BS_1\subset S_2B$ holds.
\end{lemma}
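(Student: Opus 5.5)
The plan is to reduce the sinc intertwining to a resolvent intertwining, and then pass to $BS_1\subset S_2B$ through a standard graph argument. The idea is to integrate $\operatorname{sinc}(tS_j)$ against a weight in $t$ so that the result is a bounded Borel function of $S_j$ from which $S_j$ can be recovered.

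Fix $\varepsilon>0$ and consider the Laplace-type transform
\[
\int_0^\infty \mathrm{e}^{-\varepsilon t}\,t\operatorname{sinc}(t\lambda)\,\mathrm{d}t=\int_0^\infty \mathrm{e}^{-\varepsilon t}\frac{\sin(t\lambda)}{\lambda}\,\mathrm{d}t=\frac{1}{\lambda^2+\varepsilon^2},\qquad \lambda\ge 0 .
\]
At $\lambda=0$ the left side is $\int_0^\infty t\mathrm{e}^{-\varepsilon t}\,\mathrm{d}t=\varepsilon^{-2}$, so the identity holds there too. Since $|t\operatorname{sinc}(t\lambda)|\le t$, the integrand is dominated by $t\mathrm{e}^{-\varepsilon t}$, which is integrable.

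The first step is to show that, for all $\varphi,\psi\in\mathscr{H}$,
\[
\int_0^\infty \mathrm{e}^{-\varepsilon t}t\,\langle\psi,\operatorname{sinc}(tS_j)\varphi\rangle\,\mathrm{d}t=\langle\psi,(S_j^2+\varepsilon^2)^{-1}\varphi\rangle .
\]
This follows from the spectral theorem together with Fubini, which applies because of the domination just noted. Multiplying the hypothesis by $\mathrm{e}^{-\varepsilon t}t$ and integrating in this weak sense then gives
\[
B(S_1^2+\varepsilon^2)^{-1}=(S_2^2+\varepsilon^2)^{-1}B .
\]

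The second step is to turn this into an intertwining of $S_j^2$. If $u\in\operatorname{dom}(S_1^2)$, write $u=(S_1^2+\varepsilon^2)^{-1}v$. Then $Bu=(S_2^2+\varepsilon^2)^{-1}Bv\in\operatorname{dom}(S_2^2)$, and
\[
(S_2^2+\varepsilon^2)Bu=Bv=B(S_1^2+\varepsilon^2)u .
\]
Hence $BS_1^2\subset S_2^2B$.

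The third step passes from squares to square roots. Lemma \ref{intertwining Borel}, applied with the self-adjoint operators $S_j^2$ and the Borel function $f(x)=\sqrt{|x|}$, yields $Bf(S_1^2)\subset f(S_2^2)B$. Since $S_j\ge 0$, we have $f(S_j^2)=S_j$, and therefore $BS_1\subset S_2B$.

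The only delicate point is justifying the interchange of the $t$-integral with the spectral integral in the first step. I expect no real obstacle there, because the integrand $t\mathrm{e}^{-\varepsilon t}$ is uniformly dominated and the complex spectral measures $\langle\psi,E_{S_j}(\cdot)\varphi\rangle$ have finite total variation.
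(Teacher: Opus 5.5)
Your proof is correct, but it takes a different route from the paper.

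The paper works locally at $t=0$. It uses the expansion $(1-\operatorname{sinc}(tS_j))/t^2\to S_j^2/6$, first strongly on $\operatorname{dom}(S_1^4)$. Fatou's lemma and a core argument then give $B\operatorname{dom}(S_1^2)\subset\operatorname{dom}(S_2^2)$, and a weak limit with dominated convergence identifies $S_2^2Bg=BS_1^2g$. Only then does it invoke Lemma \ref{intertwining Borel}, exactly as your third step does.

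You instead weight the hypothesis by $\mathrm{e}^{-\varepsilon t}t$ and use $\int_0^\infty \mathrm{e}^{-\varepsilon t}t\operatorname{sinc}(t\lambda)\,\mathrm{d}t=(\lambda^2+\varepsilon^2)^{-1}$. This turns the hypothesis in one stroke into the resolvent intertwining $B(S_1^2+\varepsilon^2)^{-1}=(S_2^2+\varepsilon^2)^{-1}B$, from which $BS_1^2\subset S_2^2B$ follows algebraically. You therefore avoid the elementary sinc inequalities, Fatou's lemma, and the domain/core bookkeeping entirely. Your Fubini step is justified as you say, by the bound $t\mathrm{e}^{-\varepsilon t}$ and the finite total variation of $\langle\psi,E_{S_j}(\cdot)\varphi\rangle$.

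The trade-off is that your argument uses the hypothesis for (almost) all $t>0$. The paper's argument only needs it along a sequence $t_n\to0$, and it follows the same template the paper reuses for Lemma \ref{sin eq}. In the proof of Theorem \ref{main thm2} the intertwining holds for every $t\in\mathbb{R}$, so your shorter argument suffices there.
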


\begin{proof}
	For each $j=1,2,$ let $S_j=\int_0^\infty\lambda\,\mathrm{d}E_{S_j}(\lambda)$ be the spectral resolution of $S_j.$ 
	For any $f\in \operatorname{dom}(S_1^4),$ an elementary inequality
	\begin{equation*}
		-\frac{x^2}{120}\leq\frac{1-\operatorname{sinc}x}{x^2}-\frac{1}{6}\leq0,\qquad\forall x\in\mathbb{R}\setminus\{0\},
	\end{equation*}
	yields that
	\begin{align*}
		\left\|\frac{1-\operatorname{sinc}(tS_1)}{t^2}f-\frac{S_1^2}{6}f\right\|^2
		&=\int_0^\infty\left|\frac{1-\operatorname{sinc}(t\lambda)}{t^2}-\frac{\lambda^2}{6}\right|^2\,\mathrm{d}\|E_{S_1}(\lambda)f\|^2\\
		&\leq\int_0^\infty\left(\frac{t^2\lambda^4}{120}\right)^2\,\mathrm{d}\|E_{S_1}(\lambda)f\|^2\to0
	\end{align*}
	as $t\to0.$
	This, together with Fatou's lemma, implies that
	\begin{align*}
		&\int_0^\infty\left(\frac{\lambda^2}{6}\right)^2\,\mathrm{d}\|E_{S_2}(\lambda)Bf\|^2
		\leq\liminf_{t\to0}\int_0^\infty\left(\frac{1-\operatorname{sinc}(t\lambda)}{t^2}\right)^2\,\mathrm{d}\|E_{S_2}(\lambda)Bf\|^2\\
		&=\liminf_{t\to0}\left\|\frac{1-\operatorname{sinc}(tS_2)}{t^2}Bf\right\|^2
		=\liminf_{t\to0}\left\|B\frac{1-\operatorname{sinc}(tS_1)}{t^2}f\right\|^2
		=\left\|B\frac{S_1^2}{6}f\right\|^2<\infty.
	\end{align*}
	Thus, we have $Bf\in \operatorname{dom}(S_2^2)$ and $\|S_2^2Bf\|\leq\|BS_1^2f\|.$
	Since $\operatorname{dom}(S_1^4)$ is a core for $S_1^2,$ we obtain $B\operatorname{dom}(S_1^2)\subset \operatorname{dom}(S_2^2)$.
	Lebesgue's dominated convergence theorem, together with an elementary inequality
	\[
	0\leq\frac{1-\operatorname{sinc}x}{x^2}\leq\frac{1}{6},\qquad\forall x\in\mathbb{R}\setminus\{0\},
	\]
    yields that
	\begin{align*}
	\langle f,S_2^2Bg\rangle
	&=\int_0^\infty\lambda^2\,\mathrm{d}\langle f,E_{S_2}(\lambda)Bg\rangle
	=\lim_{t\to0}6\int_0^\infty\frac{1-\operatorname{sinc}(t\lambda)}{t^2}\,\mathrm{d}\langle f,E_{S_2}(\lambda)Bg\rangle\\
	&=\lim_{t\to0}6\left\langle f,\frac{1-\operatorname{sinc}(tS_2)}{t^2}Bg\right\rangle
	=\lim_{t\to0}6\left\langle f,B\frac{1-\operatorname{sinc}(tS_1)}{t^2}g\right\rangle\\
	&=\lim_{t\to0}6\int_0^\infty\frac{1-\operatorname{sinc}(t\lambda)}{t^2}\,\mathrm{d}\langle B^*f,E_{S_1}(\lambda)g\rangle
	=\langle B^*f,S_1^2g\rangle
	\end{align*}
for all $f\in\mathscr{H}$ and $g\in\operatorname{dom}(S_1^2)$.
Thus, we obtain $BS_1^2\subset S_2^2B$.
This, combined with Lemma~\ref{intertwining Borel}, implies $BS_1\subset S_2B$.
\end{proof}

\begin{lemma}\label{sin eq}
	Let $S_1,S_2$ be non-negative self-adjoint operators acting in $\mathscr{H}$.
	Let $B_+,B_-$ be bounded operators on $\mathscr{H}$.
	Suppose that
	\[
	B_-\sin(tS_1)=\sin(tS_2)B_+,\qquad \forall t\in\mathbb{R}.
	\] 
	Then, $B_-S_1\subset S_2B_+$ holds.
\end{lemma}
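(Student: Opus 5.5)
The plan is to divide the intertwining identity by $t$ and let $t\to0$, using that $t^{-1}\sin(tS)$ converges to $S$ on $\operatorname{dom}(S)$. This parallels the proof of Lemma~\ref{sinc eq}, but it is simpler: the limit involves $S$ itself rather than $S^2$, so no core argument or appeal to Lemma~\ref{intertwining Borel} should be needed. For $j=1,2$, let $S_j=\int_0^\infty\lambda\,\mathrm{d}E_{S_j}(\lambda)$ be the spectral resolution. We will use the elementary facts that $|\sin(t\lambda)/t|\le\lambda$ for all $t\neq0$ and $\lambda\ge0$, and that $\sin(t\lambda)/t\to\lambda$ pointwise as $t\to0$.

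First fix $f\in\operatorname{dom}(S_1)$. By the spectral theorem and Lebesgue's dominated convergence theorem, with dominating function $\lambda^2$, which is integrable against $\mathrm{d}\|E_{S_1}(\lambda)f\|^2$, we get
\[
\left\|\frac{\sin(tS_1)}{t}f-S_1f\right\|^2=\int_0^\infty\left|\frac{\sin(t\lambda)}{t}-\lambda\right|^2\mathrm{d}\|E_{S_1}(\lambda)f\|^2\xrightarrow{t\to0}0 .
\]
Since $B_-$ is bounded, $t^{-1}B_-\sin(tS_1)f\to B_-S_1f$. By hypothesis this vector equals $t^{-1}\sin(tS_2)B_+f$.

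Next I show that $B_+f\in\operatorname{dom}(S_2)$. Fatou's lemma gives
\[
\int_0^\infty\lambda^2\,\mathrm{d}\|E_{S_2}(\lambda)B_+f\|^2\le\liminf_{t\to0}\int_0^\infty\frac{\sin^2(t\lambda)}{t^2}\,\mathrm{d}\|E_{S_2}(\lambda)B_+f\|^2=\liminf_{t\to0}\left\|\frac{\sin(tS_2)}{t}B_+f\right\|^2 .
\]
The right-hand side equals $\|B_-S_1f\|^2<\infty$ by the previous step, so $B_+f\in\operatorname{dom}(S_2)$. Because $B_+f\in\operatorname{dom}(S_2)$, the same dominated-convergence argument now applied to $S_2$ and $B_+f$ gives $t^{-1}\sin(tS_2)B_+f\to S_2B_+f$.

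Finally, uniqueness of limits yields $S_2B_+f=B_-S_1f$ for every $f\in\operatorname{dom}(S_1)$, which is exactly $B_-S_1\subset S_2B_+$. The only delicate point is the domain membership $B_+f\in\operatorname{dom}(S_2)$: we cannot take the limit on the $S_2$ side directly, and this is exactly what the Fatou step handles. Everything else is routine functional calculus.
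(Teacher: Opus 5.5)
Your proof is correct and follows the same strategy as the paper: divide the intertwining relation by $t$, use Fatou's lemma to get $B_+f\in\operatorname{dom}(S_2)$, and identify $S_2B_+f$ as the limit. (Strictly, the dominating function in your first step should be $4\lambda^2$, since $0\le\lambda-\sin(t\lambda)/t\le2\lambda$; this is harmless.) Your version is more streamlined: dominated convergence with $|\sin(t\lambda)/t|\le\lambda$ works directly on $\operatorname{dom}(S_1)$ and in norm on both sides, whereas the paper first restricts to $f\in\operatorname{dom}(S_1^3)$ via a Taylor bound, extends by a core argument, and then identifies the limit weakly through $B_-^*$.
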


\begin{proof}
	For each $j=1,2,$ let $S_j=\int_0^\infty\lambda\,\mathrm{d}E_{S_j}(\lambda)$ be the spectral resolution of $S_j.$ 
	For any $f\in \operatorname{dom}(S_1^3),$ an elementary inequality
	\begin{equation*}
		-\frac{x^2}{6}\leq\frac{\sin{x}}{x}-1\leq0,\qquad\forall x\in\mathbb{R}\setminus\{0\},
	\end{equation*}
	yields that
	\begin{align*}
		\left\|\frac{\sin(tS_1)}{t}f-S_1f\right\|^2
		&=\int_0^\infty\left|\frac{\sin(t\lambda)}{t}-\lambda\right|^2\,\mathrm{d}\|E_{S_1}(\lambda)f\|^2\\
		&\leq\int_0^\infty\left(\frac{t^2\lambda^3}{6}\right)^2\,\mathrm{d}\|E_{S_1}(\lambda)f\|^2\to0
	\end{align*}
	as $t\to0.$
	This, together with Fatou's lemma, implies that
	\begin{align*}
		&\int_0^\infty\lambda^2\,\mathrm{d}\|E_{S_2}(\lambda)B_+f\|^2
		\leq\liminf_{t\to0}\int_0^\infty\left(\frac{\sin(t\lambda)}{t}\right)^2\,\mathrm{d}\|E_{S_2}(\lambda)B_+f\|^2\\
		&=\liminf_{t\to0}\left\|\frac{\sin(tS_2)}{t}B_+f\right\|^2
		=\liminf_{t\to0}\left\|B_-\frac{\sin(tS_1)}{t}f\right\|^2
		=\left\|B_-S_1f\right\|^2<\infty.
	\end{align*}
	Thus, we have $B_+f\in \operatorname{dom}(S_2)$ and $\|S_2B_+f\|\leq\|B_-S_1f\|.$
	Since $\operatorname{dom}(S_1^3)$ is a core for $S_1,$ we obtain $B_+\operatorname{dom}(S_1)\subset \operatorname{dom}(S_2)$.
	Lebesgue's dominated convergence theorem, together with an elementary inequality
	\[
	0\leq\frac{\sin{x}}{x}\leq1,\qquad\forall x\in\mathbb{R}\setminus\{0\},
	\]
	yields that
	\begin{align*}
		\langle f,S_2B_+g\rangle
		&=\int_0^\infty\lambda\,\mathrm{d}\langle f,E_{S_2}(\lambda)B_+g\rangle
		=\lim_{t\to0}\int_0^\infty\frac{\sin(t\lambda)}{t}\,\mathrm{d}\langle f,E_{S_2}(\lambda)B_+g\rangle\\
		&=\lim_{t\to0}\left\langle f,\frac{\sin(tS_2)}{t}B_+g\right\rangle
		=\lim_{t\to0}\left\langle f,B_-\frac{\sin(tS_1)}{t}g\right\rangle\\
		&=\lim_{t\to0}\int_0^\infty\frac{\sin(t\lambda)}{t}\,\mathrm{d}\langle B_-^*f,E_{S_1}(\lambda)g\rangle
		=\langle B_-^*f,S_1g\rangle
	\end{align*}
	for all $f\in\mathscr{H}$ and $g\in\operatorname{dom}(S_1)$.
	Hence, we obtain $B_-S_1\subset S_2B_+$.
\end{proof}

\begin{lemma}\label{lem;conti_of_field_op}
	For each $t\in\mathbb{R}$ and $\Psi\in\mathscr{F}_\mathrm{b}(\mathscr{H})$, the map
		\[
		\mathscr{H}\to\mathscr{F}_\mathrm{b}(\mathscr{H}),\qquad f\mapsto \mathrm{e}^{\mathrm{i}t\Phi_\mathrm{S}(f)}\Psi
		\]
		is continuous.
\end{lemma}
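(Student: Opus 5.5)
The plan is to reduce continuity of $f\mapsto \mathrm{e}^{\mathrm{i}t\Phi_\mathrm{S}(f)}\Psi$ to the case where $\Psi$ is a coherent-type vector, or more simply the Fock vacuum acted on by Weyl operators. The reduction uses two facts. First, the operators $\mathrm{e}^{\mathrm{i}t\Phi_\mathrm{S}(f)}$ are unitary, hence uniformly bounded. Second, the linear span $\mathscr{D}$ of $\{\mathrm{e}^{\mathrm{i}\Phi_\mathrm{S}(g)}\Omega_\mathscr{H}\mid g\in\mathscr{H}\}$ is dense in $\mathscr{F}_\mathrm{b}(\mathscr{H})$; this is the standard cyclicity of the Fock vacuum for the Weyl operators, which I will take as known. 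By a $3\varepsilon$ argument it then suffices to prove continuity for $\Psi=\mathrm{e}^{\mathrm{i}\Phi_\mathrm{S}(g)}\Omega_\mathscr{H}$. Replacing $tf$ by $f$, we may also assume $t=1$.

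For such $\Psi$ I will use the Weyl relations
\[
\mathrm{e}^{\mathrm{i}\Phi_\mathrm{S}(f)}\mathrm{e}^{\mathrm{i}\Phi_\mathrm{S}(g)}=\mathrm{e}^{-\mathrm{i}\operatorname{Im}\langle f,g\rangle/2}\mathrm{e}^{\mathrm{i}\Phi_\mathrm{S}(f+g)},
\]
which I will cite from the standard references or from the CCR lemma referred to in the appendix. Together with the vacuum expectation $\langle\Omega_\mathscr{H},\mathrm{e}^{\mathrm{i}\Phi_\mathrm{S}(h)}\Omega_\mathscr{H}\rangle=\mathrm{e}^{-\|h\|^2/4}$, they give an explicit formula for $\|\mathrm{e}^{\mathrm{i}\Phi_\mathrm{S}(f_n)}\Psi-\mathrm{e}^{\mathrm{i}\Phi_\mathrm{S}(f)}\Psi\|^2$. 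Expanding the square, the result is
\[
2-2\operatorname{Re}\langle \mathrm{e}^{\mathrm{i}\Phi_\mathrm{S}(f_n)}\Psi,\mathrm{e}^{\mathrm{i}\Phi_\mathrm{S}(f)}\Psi\rangle .
\]
Writing $\mathrm{e}^{-\mathrm{i}\Phi_\mathrm{S}(f_n)}\mathrm{e}^{\mathrm{i}\Phi_\mathrm{S}(f)}$ as a phase times $\mathrm{e}^{\mathrm{i}\Phi_\mathrm{S}(f-f_n)}$, and then conjugating by $\mathrm{e}^{\mathrm{i}\Phi_\mathrm{S}(g)}$, the inner product equals a phase of the form $\mathrm{e}^{\mathrm{i}\theta_n}$ times $\mathrm{e}^{-\|f-f_n\|^2/4}$. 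Here $\theta_n$ is a combination of $\operatorname{Im}\langle f_n,f\rangle$ and $\operatorname{Im}\langle f-f_n,g\rangle$. Since $\operatorname{Im}\langle f_n,f\rangle\to\operatorname{Im}\langle f,f\rangle=0$, we have $\theta_n\to0$. Hence the expression tends to $2-2=0$.

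The main obstacle is justifying the density of $\mathscr{D}$ and the Weyl relation with the correct phase convention. For density, I would first try to argue that the closed span of $\mathscr{D}$ is invariant under all Weyl operators, by the Weyl relation. Combined with irreducibility of the Fock representation, which is known and is also the special case $T=1$ of the irreducibility of $\rho_{T,q,p}$, this gives that the closed span is everything. Alternatively, and more elementarily, differentiate $s\mapsto\mathrm{e}^{\mathrm{i}s\Phi_\mathrm{S}(g)}\Omega_\mathscr{H}$ at $s=0$ repeatedly to obtain all $\Phi_\mathrm{S}(g)^n\Omega_\mathscr{H}$, whose span is dense in $\mathscr{F}_\mathrm{b}(\mathscr{H})$. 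The remaining steps are routine calculations.
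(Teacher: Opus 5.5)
Your proof is correct, but it takes a different route from the paper. The paper gives no argument for this lemma and simply defers to Arai's books.

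The usual textbook argument is analytic. On $\mathscr{F}_{\mathrm{b},0}(\mathscr{H})$ one has $\|\Phi_\mathrm{S}(f)\Psi-\Phi_\mathrm{S}(f')\Psi\|\le\sqrt{2}\,\|f-f'\|\,\|(\mathrm{d}\Gamma_\mathrm{b}(1)+1)^{1/2}\Psi\|$. Hence $\Phi_\mathrm{S}(f_n)\to\Phi_\mathrm{S}(f)$ on a common core, which gives strong resolvent convergence and then strong convergence of the unitary groups.

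You replace this convergence theory by CCR algebra: unitarity, the Weyl relations, the Gaussian vacuum functional $\langle\Omega_\mathscr{H},\mathrm{e}^{\mathrm{i}\Phi_\mathrm{S}(h)}\Omega_\mathscr{H}\rangle=\mathrm{e}^{-\|h\|^2/4}$, and totality of the vectors $\mathrm{e}^{\mathrm{i}\Phi_\mathrm{S}(g)}\Omega_\mathscr{H}$. What this buys is an explicit modulus of continuity on the total set:
\[
\bigl\|\bigl(\mathrm{e}^{\mathrm{i}\Phi_\mathrm{S}(f')}-\mathrm{e}^{\mathrm{i}\Phi_\mathrm{S}(f)}\bigr)\mathrm{e}^{\mathrm{i}\Phi_\mathrm{S}(g)}\Omega_\mathscr{H}\bigr\|^2=2-2\,\mathrm{e}^{-\|f-f'\|^2/4}\cos\theta,\qquad \theta=\tfrac12\operatorname{Im}\langle f',f\rangle+\operatorname{Im}\langle g,f-f'\rangle .
\]
This is in the same spirit as the vacuum-expectation computations the paper itself uses in the ``only if'' part of Theorem~\ref{main thm}. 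The analytic route instead is an instance of a general principle for self-adjoint families converging on a common core, and needs neither the Weyl relations nor a totality result. Your phase convention agrees with Lemma~\ref{lem;ccr_for_field_ops}, and in any case only $\cos\theta$ with $\theta\to0$ matters.

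One point of logical hygiene concerns the density of $\mathscr{D}$. Use your elementary derivative argument (the $n$-th derivatives at $s=0$ give $\Phi_\mathrm{S}(g)^n\Omega_\mathscr{H}$, whose span is dense), or cite irreducibility of the Fock representation directly. Obtaining irreducibility as the case $T=1$ of the irreducibility of $\rho_{T,q,p}$ needs care:
\begin{itemize}
\item $1\in\mathcal{S}_{C,\mathscr{V}}(\mathscr{H})$ requires $\mathscr{V}$ to be dense in $\mathscr{H}_C$. This is harmless, since the lemma does not involve $\mathscr{V}$ and you may take $\mathscr{V}=\mathscr{H}_C$.
\item For general $\mathscr{V}$, the paper only asserts that irreducibility by reference. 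Its natural proof extends commutation from $\mathscr{V}$ to all of $\mathscr{H}_C$ using exactly the continuity you are proving, so routing through it would be circular.
\end{itemize}
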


\begin{proof}
	See \cite[Theorem 6.22 (iii)]{MR4292535} or \cite[Theorem 5.27 (iii)]{MR4812858}.
\end{proof}

\begin{lemma}\label{lem;ccr_for_field_ops}
	For any $f,g\in\mathscr{H}$, we have the operator equality
	\[
	\mathrm{e}^{\mathrm{i}\Phi_\mathrm{S}(g)}\Phi_\mathrm{S}(f)\mathrm{e}^{-\mathrm{i}\Phi_\mathrm{S}(g)}
	=\Phi_\mathrm{S}(f)-\operatorname{Im}\langle g,f\rangle.
	\]
\end{lemma}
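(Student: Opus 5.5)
The identity to prove is
\[
\mathrm{e}^{\mathrm{i}\Phi_\mathrm{S}(g)}\Phi_\mathrm{S}(f)\mathrm{e}^{-\mathrm{i}\Phi_\mathrm{S}(g)}
=\Phi_\mathrm{S}(f)-\operatorname{Im}\langle g,f\rangle
\]
for all $f,g\in\mathscr{H}$. I would derive it from the Weyl relations for Segal field operators, which I take as known (e.g.\ from \cite{MR4292535,MR4812858}):
\[
\mathrm{e}^{\mathrm{i}\Phi_\mathrm{S}(g)}\mathrm{e}^{\mathrm{i}s\Phi_\mathrm{S}(f)}=\mathrm{e}^{-\mathrm{i}s\operatorname{Im}\langle g,f\rangle}\mathrm{e}^{\mathrm{i}s\Phi_\mathrm{S}(f)}\mathrm{e}^{\mathrm{i}\Phi_\mathrm{S}(g)},\qquad s\in\mathbb{R}.
\]
This relation comes from $\mathrm{e}^{\mathrm{i}\Phi_\mathrm{S}(f)}\mathrm{e}^{\mathrm{i}\Phi_\mathrm{S}(g)}=\mathrm{e}^{-\frac{\mathrm{i}}{2}\operatorname{Im}\langle f,g\rangle}\mathrm{e}^{\mathrm{i}\Phi_\mathrm{S}(f+g)}$ by swapping the roles of $f$ and $g$ and taking the ratio. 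The sign convention must be checked against the CCR $[\Phi_\mathrm{S}(f),\Phi_\mathrm{S}(g)]=\mathrm{i}\operatorname{Im}\langle f,g\rangle$ on $\mathscr{F}_\mathrm{b,0}(\mathscr{H})$. With $V:=\mathrm{e}^{\mathrm{i}\Phi_\mathrm{S}(g)}$ the relation rearranges to
\[
V\mathrm{e}^{\mathrm{i}s\Phi_\mathrm{S}(f)}V^*=\mathrm{e}^{\mathrm{i}s(\Phi_\mathrm{S}(f)-\operatorname{Im}\langle g,f\rangle)}
\]
for every $s$.

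Next I pass from the unitary groups to the generators. The left-hand side is the one-parameter unitary group generated by the self-adjoint operator $V\Phi_\mathrm{S}(f)V^*$. The right-hand side is generated by $\Phi_\mathrm{S}(f)-\operatorname{Im}\langle g,f\rangle$, which is self-adjoint on $\operatorname{dom}(\Phi_\mathrm{S}(f))$. By the uniqueness part of Stone's theorem the two generators coincide as operators, domains included, which is exactly the claimed equality.

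A formal cross-check is the commutator expansion. Writing $\mathrm{e}^{\mathrm{i}\Phi(g)}\Phi(f)\mathrm{e}^{-\mathrm{i}\Phi(g)}=\Phi(f)+\mathrm{i}[\Phi(g),\Phi(f)]$, the series terminates because the commutator is a scalar. It gives $\mathrm{i}\cdot\mathrm{i}\operatorname{Im}\langle g,f\rangle=-\operatorname{Im}\langle g,f\rangle$, which confirms the sign.

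The main obstacle is getting the sign of the Weyl relation right under the paper's convention (inner product antilinear in the first slot). I would verify it directly by computing $\langle\Omega_\mathscr{H},\cdot\,\Omega_\mathscr{H}\rangle$, or by the CCR computation $[A(f),A(g)^*]=\langle f,g\rangle$, which gives $[\Phi_\mathrm{S}(f),\Phi_\mathrm{S}(g)]=\tfrac12(\langle f,g\rangle-\langle g,f\rangle)=\mathrm{i}\operatorname{Im}\langle f,g\rangle$. If the Weyl relation is not available as a citation, I would prove it on finite-particle vectors using analytic vectors for $\Phi_\mathrm{S}(f)$, then extend by density and boundedness.
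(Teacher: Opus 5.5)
Your proof is correct. The paper gives no argument of its own here and only cites the corresponding corollary in Arai's books. Your derivation is the standard route to that result: conjugate the unitary group $s\mapsto\mathrm{e}^{\mathrm{i}s\Phi_\mathrm{S}(f)}$ by $\mathrm{e}^{\mathrm{i}\Phi_\mathrm{S}(g)}$ using the Weyl relation, then invoke uniqueness of the generator in Stone's theorem. Your sign bookkeeping, $[\Phi_\mathrm{S}(f),\Phi_\mathrm{S}(g)]=\mathrm{i}\operatorname{Im}\langle f,g\rangle$ and hence the phase $\mathrm{e}^{-\mathrm{i}s\operatorname{Im}\langle g,f\rangle}$, agrees with the paper's conventions, as the Weyl relation stated for $\rho_{T,q,p}$ confirms.
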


\begin{proof}
	See \cite[Corollary 6.13]{MR4292535} or \cite[Corollary 5.12]{MR4812858}.
\end{proof}

\begin{lemma}\label{lem;trans_of_second_quant_op_by_field_op}
	Let $S$ be an injective non-negative self-adjoint operator acting in $\mathscr{H}$, and let $f\in\operatorname{dom}(S)$. 
	Then, we have the operator equality
	\[
	\mathrm{e}^{\mathrm{i}\Phi_\mathrm{S}(\mathrm{i}f)}\mathrm{d}\Gamma_\mathrm{b}(S)\mathrm{e}^{-\mathrm{i}\Phi_\mathrm{S}(\mathrm{i}f)}
	=H_S(Sf)+\frac{1}{2}\langle f,Sf\rangle.
    \]
\end{lemma}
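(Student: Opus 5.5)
The plan is to conjugate $\mathrm{d}\Gamma_\mathrm{b}(S)$ by the Weyl operator $V_t:=\mathrm{e}^{\mathrm{i}t\Phi_\mathrm{S}(\mathrm{i}f)}$, $t\in\mathbb{R}$, and differentiate in $t$. First I reduce to a dense core. Take $\Psi$ in the finite-particle subspace $\mathscr{F}_\mathrm{b,0}(\mathscr{H})$ with each component in the algebraic tensor product of $\operatorname{dom}(S)$. This set $\mathscr{D}$ is a core for $\mathrm{d}\Gamma_\mathrm{b}(S)$, and it is invariant under $A(h)$ and $A(h)^*$ for $h\in\operatorname{dom}(S)$.

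\textbf{Step 1 (commutator).} On $\mathscr{D}$, the canonical commutation relations give $[\mathrm{d}\Gamma_\mathrm{b}(S),A(h)^*]=A(Sh)^*$ and $[\mathrm{d}\Gamma_\mathrm{b}(S),A(h)]=-A(Sh)$. Taking $h=\mathrm{i}f$, so that $\Phi_\mathrm{S}(\mathrm{i}f)=\frac{1}{\sqrt2}(A(\mathrm{i}f)+A(\mathrm{i}f)^*)$, this yields
\[
\mathrm{i}[\Phi_\mathrm{S}(\mathrm{i}f),\mathrm{d}\Gamma_\mathrm{b}(S)]=\Phi_\mathrm{S}(Sf)\quad\text{on }\mathscr{D}.
\]
The sign and factor have to be checked against the convention $\mathrm{Im}$ used in Lemma \ref{lem;ccr_for_field_ops}.

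\textbf{Step 2 (differentiate and integrate).} Set $F(t):=V_t\,\mathrm{d}\Gamma_\mathrm{b}(S)V_t^*$, understood as a quadratic form on vectors $V_t\Psi$. Formally,
\[
\frac{\mathrm{d}}{\mathrm{d}t}\bigl(V_t\,\mathrm{d}\Gamma_\mathrm{b}(S)V_t^*\bigr)=V_t\,\mathrm{i}[\Phi_\mathrm{S}(\mathrm{i}f),\mathrm{d}\Gamma_\mathrm{b}(S)]\,V_t^*=V_t\Phi_\mathrm{S}(Sf)V_t^*.
\]
By Lemma \ref{lem;ccr_for_field_ops},
\[
V_t\Phi_\mathrm{S}(Sf)V_t^*=\Phi_\mathrm{S}(Sf)-t\,\mathrm{Im}\langle \mathrm{i}f,Sf\rangle=\Phi_\mathrm{S}(Sf)-t\,\mathrm{Re}\langle f,Sf\rangle=\Phi_\mathrm{S}(Sf)-t\langle f,Sf\rangle.
\]
Integrating from $0$ to $1$ gives $V_1\,\mathrm{d}\Gamma_\mathrm{b}(S)V_1^*=\mathrm{d}\Gamma_\mathrm{b}(S)+\Phi_\mathrm{S}(Sf)+c\langle f,Sf\rangle$ for some constant $c$. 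The constant is fixed by checking both sides on coherent vectors. The target formula has $+\frac12\langle f,Sf\rangle$, which I will reconcile by redoing the sign bookkeeping, for example by computing the vacuum expectation $\langle V_1^*\Omega_\mathscr{H},\mathrm{d}\Gamma_\mathrm{b}(S)V_1^*\Omega_\mathscr{H}\rangle=\frac12\|S^{1/2}f\|^2$ directly. The differentiation itself is done weakly: for $\Phi,\Psi\in\mathscr{D}$, the vectors $V_t^*\Phi$ and $V_t^*\Psi$ stay in $\operatorname{dom}(\mathrm{d}\Gamma_\mathrm{b}(S))$, because Weyl operators with test vector in $\operatorname{dom}(S)$ preserve that domain.

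\textbf{Step 3 (operator equality).} The main obstacle is upgrading the form identity on a core to an equality of self-adjoint operators. Write $L:=V_1\,\mathrm{d}\Gamma_\mathrm{b}(S)V_1^*$ and $R:=H_S(Sf)+\frac12\langle f,Sf\rangle$. The operator $L$ is self-adjoint, and $R$ is self-adjoint by the van Hove theory cited in the paper. Step 2 shows that $L$ and $R$ agree on the core $V_1\mathscr{D}$ of $L$, so $L\subset R$ after taking closures. Since both are self-adjoint, $L\subset R$ forces $L=R$. The delicate point is justifying the domain invariance and the strong differentiability of $t\mapsto V_t^*\Psi$ in the graph norm of $\mathrm{d}\Gamma_\mathrm{b}(S)$. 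I would try first to establish this from the explicit action of Weyl operators on coherent and finite-particle vectors. Alternatively, one can cite the standard result in the references the paper draws on for the van Hove Hamiltonian.
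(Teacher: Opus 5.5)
Your commutator in Step 1 is correct, but Step 2 contains a sign slip. The inner product is antilinear in the first argument, so $\langle \mathrm{i}f,Sf\rangle=-\mathrm{i}\langle f,Sf\rangle$ and hence $\operatorname{Im}\langle \mathrm{i}tf,Sf\rangle=-t\langle f,Sf\rangle$. Lemma~\ref{lem;ccr_for_field_ops} then gives $V_t\Phi_\mathrm{S}(Sf)V_t^*=\Phi_\mathrm{S}(Sf)+t\langle f,Sf\rangle$, and integrating over $[0,1]$ yields exactly $\mathrm{d}\Gamma_\mathrm{b}(S)+\Phi_\mathrm{S}(Sf)+\frac12\langle f,Sf\rangle$. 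There is no free constant $c$ to fit afterwards: once $F'(t)$ is computed, $F(1)-F(0)$ is determined. Your vacuum computation $\frac12\|S^{1/2}f\|^2$ is only a (correct) consistency check.

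The genuine gap is the analytic justification, which is the real content of the lemma and which you leave open. You assert without proof that $V_t$ preserves $\operatorname{dom}(\mathrm{d}\Gamma_\mathrm{b}(S))$. Since $Sf\in\operatorname{dom}(S^{-1/2})$, Kato--Rellich gives $\operatorname{dom}(H_S(Sf))=\operatorname{dom}(\mathrm{d}\Gamma_\mathrm{b}(S))$, so this invariance is part of what the lemma asserts. Moreover, differentiating $\langle V_t^*\Phi,\mathrm{d}\Gamma_\mathrm{b}(S)V_t^*\Psi\rangle$ needs two things: the commutator identity at $V_t^*\Phi,V_t^*\Psi$, which in general are not in $\mathscr{D}$, and continuity of $t\mapsto\mathrm{d}\Gamma_\mathrm{b}(S)V_t^*\Psi$. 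Because $f$ is not assumed to lie in $\operatorname{dom}(S^{-1/2})$, $\Phi_\mathrm{S}(\mathrm{i}f)$ need not be $\mathrm{d}\Gamma_\mathrm{b}(S)$-bounded. So approximating from $\mathscr{D}$ in the graph norm of $\mathrm{d}\Gamma_\mathrm{b}(S)$ does not extend that identity; you would need the graph norm of $\mathrm{d}\Gamma_\mathrm{b}(S)+\mathrm{d}\Gamma_\mathrm{b}(1)$ and invariance of its domain.

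A cleaner way to close the gap, with no differentiation, uses exponential vectors $\mathcal{E}(h):=\{h^{\otimes n}/\sqrt{n!}\}_{n\geq0}$ with $h\in\operatorname{dom}(S)$:
$V_1^*\mathcal{E}(h)$ is a scalar multiple of $\mathcal{E}(k)$, where $k:=h+f/\sqrt2\in\operatorname{dom}(S)$;
$\mathrm{d}\Gamma_\mathrm{b}(S)\mathcal{E}(k)=A(Sk)^*\mathcal{E}(k)$;
$V_1A(Sk)^*V_1^*=A(Sk)^*+\frac{1}{\sqrt2}\langle f,Sk\rangle$.
With $L,R$ as in your Step 3, direct computation then gives $L\,\mathcal{E}(h)=R\,\mathcal{E}(h)$, using $A(Sf)\mathcal{E}(h)=\langle Sf,h\rangle\mathcal{E}(h)$. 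The span of these vectors is invariant under $\mathrm{e}^{\mathrm{i}s\mathrm{d}\Gamma_\mathrm{b}(S)}$, hence a core for $\mathrm{d}\Gamma_\mathrm{b}(S)$ and, by Kato--Rellich, for $R$. Thus $R\subset L$, and self-adjointness of both gives $L=R$. Similarly, your Step 2 with $\Phi,\Psi\in\mathscr{D}$ would give agreement on $\mathscr{D}$ (a core for $R$), not on $V_1\mathscr{D}$.

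For comparison, the paper gives no argument here and simply cites \cite[Lemma 13.5]{MR4812858}, which is your fallback option.
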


\begin{proof}
	See \cite[Lemma 13.5]{MR4812858}.
\end{proof}

\begin{lemma}\label{lem;trans_of_field_op_by_vH_hamiltonian}
	Let $S$ be an injective non-negative self-adjoint operator acting in $\mathscr{H}$, and let $g\in\operatorname{dom}(S^{-1/2})$. 
	Then, we have the operator equality
	\[
	\mathrm{e}^{\mathrm{i}tH_S(g)}\Phi_\mathrm{S}(f)\mathrm{e}^{-\mathrm{i}tH_S(g)}
	=\Phi_\mathrm{S}(\mathrm{e}^{\mathrm{i}tS}f)+\operatorname{Re}\left\langle S^{-1}(\mathrm{e}^{\mathrm{i}tS}-1)f,g\right\rangle,\qquad\forall t\in\mathbb{R},\ f\in\mathscr{H}.
	\]
\end{lemma}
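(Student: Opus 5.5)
The plan is to reduce to the case $g\in\operatorname{dom}(S^{-1})$, where $H_S(g)$ is a Weyl-conjugate of $\mathrm{d}\Gamma_\mathrm{b}(S)$. The general case $g\in\operatorname{dom}(S^{-1/2})$ then follows by an approximation argument.

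\textbf{Step 1: the case $g\in\operatorname{dom}(S^{-1})$.} Put $u:=\mathrm{i}S^{-1}g$ and $V:=\mathrm{e}^{\mathrm{i}\Phi_\mathrm{S}(u)}$. Applying Lemma \ref{lem;trans_of_second_quant_op_by_field_op} with $f:=S^{-1}g\in\operatorname{dom}(S)$ gives $V\,\mathrm{d}\Gamma_\mathrm{b}(S)V^*=H_S(g)+c$ for a real constant $c$. Hence
\[
\mathrm{e}^{\mathrm{i}tH_S(g)}=\mathrm{e}^{-\mathrm{i}tc}\,V\mathrm{e}^{\mathrm{i}t\mathrm{d}\Gamma_\mathrm{b}(S)}V^*,
\]
and the phase cancels under conjugation. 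I use the standard identity $\mathrm{e}^{\mathrm{i}t\mathrm{d}\Gamma_\mathrm{b}(S)}\Phi_\mathrm{S}(f)\mathrm{e}^{-\mathrm{i}t\mathrm{d}\Gamma_\mathrm{b}(S)}=\Phi_\mathrm{S}(\mathrm{e}^{\mathrm{i}tS}f)$, which holds because $\mathrm{e}^{\mathrm{i}t\mathrm{d}\Gamma_\mathrm{b}(S)}=\Gamma_\mathrm{b}(\mathrm{e}^{\mathrm{i}tS})$. Combining it with Lemma \ref{lem;ccr_for_field_ops}, applied once with $-u$ and once with $u$, yields
\[
\mathrm{e}^{\mathrm{i}tH_S(g)}\Phi_\mathrm{S}(f)\mathrm{e}^{-\mathrm{i}tH_S(g)}=\Phi_\mathrm{S}(\mathrm{e}^{\mathrm{i}tS}f)+\operatorname{Im}\langle u,(1-\mathrm{e}^{\mathrm{i}tS})f\rangle .
\]
Since $\operatorname{Im}\langle \mathrm{i}a,b\rangle=-\operatorname{Re}\langle a,b\rangle$, the scalar equals $\operatorname{Re}\langle S^{-1}g,(\mathrm{e}^{\mathrm{i}tS}-1)f\rangle$. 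The operator $S^{-1}(\mathrm{e}^{\mathrm{i}tS}-1)$ is bounded, since $|\mathrm{e}^{\mathrm{i}t\lambda}-1|/\lambda\le|t|$. Moving $S^{-1}$ across the inner product and using $\operatorname{Re}\langle a,b\rangle=\operatorname{Re}\langle b,a\rangle$, this becomes $\operatorname{Re}\langle S^{-1}(\mathrm{e}^{\mathrm{i}tS}-1)f,g\rangle$.

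\textbf{Step 2: approximation.} For general $g\in\operatorname{dom}(S^{-1/2})$, set $g_n:=E_S((1/n,\infty))g\in\operatorname{dom}(S^{-1})$. Then $\|S^{-1/2}(g_n-g)\|\to0$ by injectivity of $S$ and dominated convergence.

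The standard bound $\|A(h)\Psi\|\le\|S^{-1/2}h\|\,\|\mathrm{d}\Gamma_\mathrm{b}(S)^{1/2}\Psi\|$, together with the corresponding bound for $A(h)^*$, gives the following. $\Phi_\mathrm{S}(g_n)-\Phi_\mathrm{S}(g)$ is $\mathrm{d}\Gamma_\mathrm{b}(S)$-bounded with relative bound tending to $0$, uniformly controlled on the common core $\operatorname{dom}(\mathrm{d}\Gamma_\mathrm{b}(S))$. Moreover, the $H_S(g_n)$ are uniformly bounded below. From this I would derive $H_S(g_n)\to H_S(g)$ in the strong resolvent sense, for instance via the second resolvent identity at a point far below the spectra. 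By Trotter's theorem, $\mathrm{e}^{\mathrm{i}tH_S(g_n)}\to\mathrm{e}^{\mathrm{i}tH_S(g)}$ strongly.

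\textbf{Step 3: passing to the limit.} I pass to the limit in the exponentiated form of Step 1, with $s\in\mathbb{R}$:
\[
\mathrm{e}^{\mathrm{i}tH_S(g_n)}\mathrm{e}^{\mathrm{i}s\Phi_\mathrm{S}(f)}\mathrm{e}^{-\mathrm{i}tH_S(g_n)}=\mathrm{e}^{\mathrm{i}s\Phi_\mathrm{S}(\mathrm{e}^{\mathrm{i}tS}f)}\,\mathrm{e}^{\mathrm{i}s\operatorname{Re}\langle S^{-1}(\mathrm{e}^{\mathrm{i}tS}-1)f,g_n\rangle}.
\]
The scalars converge because
\[
|\langle S^{-1}(\mathrm{e}^{\mathrm{i}tS}-1)f,g_n-g\rangle|\le\|S^{-1/2}(\mathrm{e}^{\mathrm{i}tS}-1)f\|\,\|S^{-1/2}(g_n-g)\|,
\]
and $S^{-1/2}(\mathrm{e}^{\mathrm{i}tS}-1)$ is bounded, as $|\mathrm{e}^{\mathrm{i}t\lambda}-1|/\sqrt\lambda\le\min(2/\sqrt\lambda,|t|\sqrt\lambda)$. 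The left side converges strongly since products of strongly convergent uniformly bounded families converge strongly. Stone's theorem then gives the operator equality for the generators.

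The main obstacle is Step 2: establishing the strong resolvent convergence carefully. The remaining steps are bookkeeping with Lemmas \ref{lem;ccr_for_field_ops} and \ref{lem;trans_of_second_quant_op_by_field_op}.
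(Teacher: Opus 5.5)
Your proof is correct, but it approximates in the opposite variable from the paper.

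\textbf{The paper's route.} The paper imports the identity for general $g\in\operatorname{dom}(S^{-1/2})$ from Arai's book \cite[Theorem 13.6 (i)]{MR4812858}, where it is available only for a restricted class of test vectors $f$. It then extends to all $f\in\mathscr{H}$ by an elementary limiting argument in $f$. That argument uses only the boundedness of $S^{-1}(\mathrm{e}^{\mathrm{i}tS}-1)$ and the continuity of $f\mapsto\mathrm{e}^{\mathrm{i}s\Phi_\mathrm{S}(f)}\Psi$ (Lemma \ref{lem;conti_of_field_op}).

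\textbf{Your route.} You obtain the identity for all $f$, but only for $g\in\operatorname{dom}(S^{-1})$. There $H_S(g)$ is, up to a constant, the Weyl conjugate of $\mathrm{d}\Gamma_\mathrm{b}(S)$ by Lemma \ref{lem;trans_of_second_quant_op_by_field_op}; this is the same device the paper uses in the ``if'' part of Theorem \ref{main thm2}. You then remove the infrared restriction by approximating $g$. Your sign bookkeeping in Step 1 and the limit in Step 3 are both correct.

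\textbf{What each approach buys.} Your approach is self-contained: the lemma follows from the paper's own Lemmas \ref{lem;ccr_for_field_ops} and \ref{lem;trans_of_second_quant_op_by_field_op} plus standard tools (Kato--Rellich, Trotter, Stone), rather than outsourcing the essential content. Its price is Step 2, a perturbative convergence argument that the paper's extension in $f$ avoids entirely.

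\textbf{A simplification of Step 2.} That step is easier than you make it. The bound $\|\Phi_\mathrm{S}(h)\Psi\|\le\sqrt{2}\|S^{-1/2}h\|\,\|\mathrm{d}\Gamma_\mathrm{b}(S)^{1/2}\Psi\|+\|h\|\,\|\Psi\|/\sqrt{2}$ gives $H_S(g_n)\Psi\to H_S(g)\Psi$ for every $\Psi\in\operatorname{dom}(\mathrm{d}\Gamma_\mathrm{b}(S))$. By Kato--Rellich, this is the common domain of all the operators involved. Strong convergence on a common core already implies strong resolvent convergence, so neither the uniform lower bound nor the second resolvent identity is needed.
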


\begin{proof}
	We first note that the operator $S^{-1}(\mathrm{e}^{\mathrm{i}tS}-1)$ in the right-hand side is bounded and defined on the whole $\mathscr{H}$.
	Then, the result follows from \cite[Theorem 13.6 (i)]{MR4812858} and a limiting argument in $f$.
\end{proof}

\begin{lemma}\label{lem;cauchy}
	Let $\{a_n\}_{n=1}^\infty$ be a sequence of real numbers satisfying
	\begin{equation}\label{eq;lem_of_cauchy}
	\lim_{m,n\to\infty}\mathrm{e}^{\mathrm{i}t(a_m-a_n)}=1,\qquad\forall t\in\mathbb{R}.
	\end{equation}
	Then, $\{a_n\}_{n=1}^\infty$ is a Cauchy sequence in $\mathbb{R}$. 
\end{lemma}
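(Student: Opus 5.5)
We argue by contradiction, using only elementary facts about the function $t\mapsto \mathrm{e}^{\mathrm{i}tx}$. Suppose $\{a_n\}$ is not Cauchy. Then there exist $\varepsilon>0$ and index pairs $(m_k,n_k)$ with $m_k,n_k\to\infty$ such that $|d_k|\geq\varepsilon$ for all $k$, where $d_k:=a_{m_k}-a_{n_k}$. By hypothesis \eqref{eq;lem_of_cauchy}, $\mathrm{e}^{\mathrm{i}td_k}\to1$ as $k\to\infty$ for every $t\in\mathbb{R}$. The goal is to contradict this using the lower bound $|d_k|\geq\varepsilon$.

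The contradiction comes from integrating in $t$. Since $|\mathrm{e}^{\mathrm{i}td_k}|=1$, Lebesgue's dominated convergence theorem on $[0,\delta]$ gives
\[
\frac{1}{\delta}\int_0^\delta \mathrm{e}^{\mathrm{i}td_k}\,\mathrm{d}t\xrightarrow{k\to\infty}1
\]
for every $\delta>0$. On the other hand, the left-hand side equals $(\mathrm{e}^{\mathrm{i}\delta d_k}-1)/(\mathrm{i}\delta d_k)$, whose modulus is at most $2/(\delta|d_k|)\leq 2/(\delta\varepsilon)$. Choosing $\delta:=4/\varepsilon$ makes this bound $1/2$ uniformly in $k$, which is incompatible with convergence to $1$. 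Hence $\{a_n\}$ is Cauchy.

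An alternative route mirrors the paper's earlier argument for $Q$: use the identity $[x+\mathrm{i}]^{-1}=-\mathrm{i}\int_0^\infty \mathrm{e}^{\mathrm{i}t(x+\mathrm{i})}\,\mathrm{d}t$ with dominated convergence to get $(d_k+\mathrm{i})^{-1}\to-\mathrm{i}$, so $d_k\to0$. That route needs the convergence $\mathrm{e}^{\mathrm{i}td_k}\to1$ together with the integrable dominating function $\mathrm{e}^{-t}$, which is available. Either route works; the first is self-contained.

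The only delicate point is the passage from the double limit in $m,n$ to a sequence, and it is handled by selecting the subsequence of pairs as above. If the failure of the Cauchy property came instead with $|d_k|$ unbounded, the same bound still applies, since it only requires $|d_k|\geq\varepsilon$. So I expect no real obstacle.
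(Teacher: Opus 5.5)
Your proof is correct and follows essentially the paper's argument: argue by contradiction along index pairs with $|a_{m_k}-a_{n_k}|\geq\varepsilon$, integrate in $t$ using dominated convergence, and show the averaged integral stays bounded away from $1$. The only difference is cosmetic: the paper integrates $\cos(tx_k)$ over $[0,1]$ and uses $\sup_{|y|\geq\varepsilon}\operatorname{sinc}(y)<1$, whereas you integrate $\mathrm{e}^{\mathrm{i}td_k}$ over $[0,4/\varepsilon]$ to obtain the explicit bound $1/2$.
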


\begin{proof}
	We prove the lemma by contradiction.
	Suppose that $\{a_n\}_{n=1}^\infty$ is not a Cauchy sequence.
	Then, there exist $\varepsilon>0$ and strictly increasing sequences $\{m(k)\}_{k=1}^\infty$ and $\{n(k)\}_{k=1}^\infty$ of natural numbers such that $|a_{m(k)}-a_{n(k)}|\geq\varepsilon$ for all $k\in\mathbb{N}$.
	Set $x_k:=a_{m(k)}-a_{n(k)}$.
	By \eqref{eq;lem_of_cauchy}, we have $\lim_{k\to\infty}\cos(tx_k)=1$ for all $t\in\mathbb{R}$.
	This, together with Lebesgue's dominated convergence theorem, implies that
	\[
	\lim_{k\to\infty}\int_0^1\cos(tx_k)\,\mathrm{d}t=\int_0^1\lim_{k\to\infty}\cos(tx_k)\,\mathrm{d}t=1.
	\]
	Since $\int_0^1\cos(tx_k)\,\mathrm{d}t=\operatorname{sinc}(x_k)$, we obtain $\lim_{k\to\infty}\operatorname{sinc}(x_k)=1$.
	On the other hand, since $|x_k|\geq\varepsilon$ for all $k\in\mathbb{N}$, we have
	\[
	\operatorname{sinc}(x_k)\leq\sup_{|y|\geq\varepsilon}\operatorname{sinc}(y)<1,
	\]
	which contradicts $\lim_{k\to\infty}\operatorname{sinc}(x_k)=1$.
\end{proof}

\section*{Acknowledgements}
I would like to thank Itaru Sasaki for carefully checking the proof of Theorem \ref{main thm} and for providing helpful comments.
This work was supported by JSPS KAKENHI Grant Numbers JP23K25783 and JP24K06755.

\section*{Data Availability}
No data were used to support this study.

\section*{Conflicts of interest}
The author declares that they have no conflict of interest.

\bibliographystyle{plain}
\bibliography{References}

\end{document}